\definecolor{darkblue}{RGB}{0,0,127} 
\definecolor{darkgreen}{RGB}{0,150,0}
\newcommand{\norm}[1]{\left\lVert#1\right\rVert}
\newcommand{\id}{\mathbb{1}}
\newcommand{\dens}[1]{|#1\rangle\langle#1|}
\newcommand{\ket}[1]{|#1\rangle}
\newcommand{\mc}[1]{\mathcal{#1}}
\newcommand{\md}[1]{\mathsf{#1}}
\newcommand{\inp}[2]{\langle #1,#2\rangle}
\newcommand{\laa}{\langle\!\langle}
\newcommand{\raa}{\rangle\!\rangle}
\newcommand{\tn}[1]{^{\otimes #1}}
\newcommand{\ct}{\ensuremath{^{\dagger}}}
\newcommand{\lv}{|}
\newcommand{\rv}{|}
\DeclareMathOperator{\tr}{Tr}
\newcommand{\cD}{\ensuremath{\mathcal{D}}}
\newcommand{\cG}{\ensuremath{\mathcal{G}}}
\newcommand{\cT}{\ensuremath{\mathcal{T}}}
\newcommand{\cU}{\ensuremath{\mathcal{U}}}
\newcommand{\bbC}{\ensuremath{\mathbb{C}}}
\newcommand{\bbE}{\ensuremath{\mathbb{E}}}
\newcommand{\bbV}{\ensuremath{\mathbb{V}}}
\newcommand{\bbZ}{\ensuremath{\mathbb{Z}}}
\newtheorem{theorem}{Theorem}
\newtheorem{definition}{Definition}
\newtheorem{lemma}{Lemma}
\newtheorem{cor}{Corollary}
\newcounter{notecounter}
\newcommand{\Vts}{\ensuremath{V_{\mathrm{TS}}}}
\newcommand{\M}{\mc{M}_d}
\newcommand{\vsp}{\mathrm{Span}}
\newcommand{\bsq}{\bf{\sigma}_q}
\date{\today}
\begin{document}
\title{Multi-qubit Randomized Benchmarking Using Few Samples}

\author{Jonas Helsen}
\affiliation{QuTech, Delft University of Technology, Lorentzweg 1, 2628 CJ Delft, The Netherlands}
\author{Joel J. Wallman}
\affiliation{Institute for Quantum Computing, University of Waterloo, Waterloo, Ontario N2L 3G1, Canada}
\affiliation{Department of Applied Mathematics, University of Waterloo, Waterloo, Ontario N2L 3G1, Canada}
\author{Steven T. Flammia}
\affiliation{Centre for Engineered Quantum Systems, School of Physics, University of Sydney, Sydney, NSW, Australia}
\affiliation{Center for Theoretical Physics, Massachusetts Institute of Technology, Cambridge, USA}
\author{Stephanie Wehner}
\affiliation{QuTech, Delft University of Technology, Lorentzweg 1, 2628 CJ Delft, The Netherlands}

\begin{abstract}
Randomized benchmarking (RB) is an efficient and robust method to
characterize gate errors in quantum circuits. Averaging over random sequences
of gates leads to estimates of gate errors in terms of the average fidelity. These estimates are isolated from the state preparation and measurement errors that
plague other methods like channel tomography and direct fidelity estimation. A
decisive factor in the feasibility of randomized benchmarking is the number of
sampled sequences required to obtain rigorous confidence intervals. Previous bounds were
either prohibitively loose or required the number of sampled sequences to scale
exponentially with the number of qubits in order to obtain a fixed confidence
interval at a fixed error rate.

Here we show that, with a small adaptation to the randomized benchmarking procedure, the number of sampled sequences required for a fixed confidence interval is dramatically smaller than could previously be justified.
In particular, we show that the number of sampled sequences required is essentially independent of the number of qubits and scales favorably with the average error rate of the system under investigation. We also investigate the fitting procedure inherent to randomized benchmarking in the light of our results and find that standard methods such as ordinary least squares optimization can give misleading results. We therefore recommend moving to more sophisticated fitting methods such as iteratively reweighted least squares optimization. Our results bring rigorous randomized benchmarking on systems with many qubits into the realm of experimental feasibility.
\end{abstract}
\maketitle

\section{Introduction}\label{sec:introduction}
One of the central problems in the creation of large-scale, functioning
quantum computers is the need to accurately and efficiently diagnose the
strength and character of the various types of noise affecting quantum operations that arise in experimental
implementations. This noise can be due to many factors, such as imperfect
manufacturing, suboptimal calibration, or uncontrolled coupling to the external
world. Tools that diagnose and quantify these noise sources provide vital feedback on device and control design leading to better quantum devices. They are also used as certification tools, quantifying a device's ability to e.g. perform successful error correction or implement quantum algorithms. A variety of tools have been developed for this purpose, including
state and channel tomography~\cite{Chuang1997,Poyatos1997}, direct fidelity
estimation (DFE)~\cite{Flammia2011,da-Silva2011}, gate set
tomography~\cite{Merkel2013,Blume-Kohout2013}, and randomized benchmarking
(RB)~\cite{Emerson2005,Knill2008,Magesan2011} together with its tomographic extension randomized benchmarking tomography~\cite{Kimmel2014}.
All of these tools have
different strengths and weaknesses. State and channel tomography allow the user
to get a full characterization of the quantum state or channel of interest but are subject to state preparation and measurement errors (SPAM), which place a noise
floor on the accuracy of these characterizations.
Moreover these protocols require
resources that scale exponentially with the number of qubits even for the more efficient 
variants using compressed sensing~\cite{Gross2010a,Flammia2012}, making them
prohibitively expensive for use in multi-qubit systems.
Randomized benchmarking tomography and gate set tomography
remedy the SPAM issue, but require even more resources. 

This exponential scaling with the number of qubits is problematic because even though on most quantum computing platforms multi-qubit gates are generally performed as circuits composed of one and two-qubit gates it is still vitally important to obtain aggregate measures of the behavior of multi-qubit quantum circuits. One can in principle gauge the behavior of these circuit by characterizing their component gates but such a characterization will typically give only loose bounds~\cite{dugas2016efficiently} on the behavior of the full circuit (even disregarding the possibility of correlated errors inside the circuit~\cite{mckay2017three}). Therefore their is a need for diagnostic tools that scale efficiently in the number of qubits.
Protocols designed with such efficiency in mind, like DFE and RB, do not aspire to a full characterization of the
system, but instead aim to estimate a single figure of merit that
ideally captures relevant properties of the system under
investigation. The figure of merit estimated by both DFE and RB is the average
gate fidelity to some target state or gate. However, RB is also robust to SPAM
errors (as opposed to DFE). This makes RB the protocol of choice for characterizing many candidate
quantum computing platforms~\cite{Knill2008,DiCarlo2009,Gaebler2012,Barends2014,Asaad2016}. 
Variants of RB that estimate output purity~\cite{Wallman2015}, and 
leakage~\cite{Wallman2016,Wallman2015b,wood2018quantification} have also been devised.

An important practical problem when using RB is choosing a number of random
gate sequences that is sufficiently small to be practical experimentally, and
yet gives a good estimate of the gate fidelity. This
problem becomes increasingly relevant as error rates improve since estimating
small errors accurately ordinarily requires more samples. Early treatments of this problem
demanded numbers of sequences that were orders of
magnitude larger than were feasible in experiment~\cite{Magesan2012a}. A more
specialized analysis allowed rigorous confidence intervals to be derived for a number of
random sequences comparable to the number used in experiments~\cite{Wallman2014}. However, this analysis only provided reasonable bounds on the number of sequences for short sequence lengths and for single qubit experiments
while more general multi-qubit bounds had an unfavorable exponential scaling with the number of qubits being benchmarked. The restriction to short sequence lengths is also problematic because long sequences generally lead to better
experimental fits~\cite{Epstein2014,Granade2014}.

In this paper we propose an adapted version of the standard RB protocol on the set of Clifford gates that requires little experimental overhead. For this protocol we provide a bound on the number of random sequences required to obtain
rigorous confidence intervals that is several orders of magnitude sharper than
previous multi-qubit bounds. 
Our result makes rigorous and efficient randomized benchmarking 
of multi-qubit systems possible using a reasonable amount of experimental
resources. In particular, our bounds are approximately independent of the number of qubits being benchmarked
As a special case, we also obtain bounds for the single-qubit
version of RB that are valid for all sequence lengths and improve
on the bounds of Ref.~\cite{Wallman2014} for long
sequence lengths. The key to the analysis of the
statistical performance is a novel understanding of the representations of the
Clifford group, developed in a companion paper~\cite{Clifford2016}.
Similar representation-theoretic questions have also been studied independently by Zhu \emph{et al}.~\cite{Zhu2016}.
We also prove a precise sense in which the derived bounds are optimal. Finally we analyze the fitting procedure inherent to randomized benchmarking in light of our results. We conclude that randomized benchmarking yields data that violates the core assumptions of the Ordinary Least Squares fitting procedure, a standard tool for processing randomized benchmarking data \cite{Epstein2014}. This means using OLS to analyze RB data can lead to misleading results. As an alternative we propose using the more sophisticated method of iteratively reweighted least squares optimization, which can be guaranteed to lead to correct results\cite{fedorov2013theory,Seber1989}.

In \cref{sec:summary} we present an overview of the
new contributions of this paper (equations of note here are
\cref{variance_final,variance_final_SPAM}) and explain their
context. In \cref{sec:discussion}, we discuss the implications of the
new bound for experiments, and investigate it in various limits.
Finally, in \cref{sec:methods} we discuss the derivation of the new bounds and
how to apply them in practice, notably with regard to the RB fitting procedure.
We also prove that our results are optimal in some well specified sense. We focus on
intuition and displace most of the technical proofs to the Supplementary Material.
We make heavy use of techniques from group and representation theory, which are of
independent interest, but were derived in a more general setting than needed
for the purpose of this paper. Readers interested in the details of this part
of the derivation are invited to the companion paper~\cite{Clifford2016} or
the closely related work of Zhu \emph{et al}.~\cite{Zhu2016}.

\subsection*{Figure of merit}

We begin by introducing the essential quantities we will use to state and
derive our results. The central problem that RB addresses is how to efficiently
obtain a rigorous figure of merit quantifying how close a physically-performed
operation $\mc{\tilde{U}}$ (represented by a completely positive,
trace preserving (CPTP) map~\cite{Chuang1997}) is to an ideal target operation $\mc{U}$, which is generally
taken to be unitary, that is $\mc{U}(\rho) = U\rho U \ct$ for some unitary $U$ and for all density matrices $\rho$. The quality of a noisy implementation $\mc{\tilde{U}}$ relative to its ideal implementation $\mc{U}$ is
quantified by the average (gate) fidelity,
\begin{equation}\label{eq:average_fidelity}
F_{\mathrm{avg}}(\mc{U}, \mc{\tilde{U}}) := \!\int \mathrm{d}\phi \tr(\mc{U}(\dens{\phi})\mc{\tilde{U}}(\dens{\phi})),
\end{equation}
where $\mathrm{d}\phi$ is the uniform Haar measure over pure quantum states.

It is convenient (and always possible) to write the physically-performed operation $\mc{\tilde{U}}$ as the ideal operation $\mc{U}$ up to composition with a ``noise operation'', that is we write $\mc{\tilde{U}} = \mc{E}\circ\mc{U}$ where $\mc{E}$ is a CPTP map. Note that in general the map $\mc{E}$ can depend on the unitary $\mc{U}$ being implemented. However, in this paper we shall always consider $\mc{E}$ to be the same for all possible unitary operations $\mc{U}$. This is called a gate-independent noise model. We will also work with the more general noise model $\mc{\tilde{U}} = \mc{L}\circ \mc{U}\circ \mc{R}$ where $\mc{R},\mc{L}$ are CPTP maps. This ensures compatibility of our results with recent results on RB with gate-dependent noise~\cite{wallman2018randomized,proctor2017randomized}. However we can always recover the presentation given here by choosing the right gauge. This is explained in \cref{subsec:gate_dependence}.
Because the map $\mc{U}$ is unitary we can also write
\begin{align}\label{eq:average_fidelity_identity}
F_{\mathrm{avg}}(\mc{U}, \mc{\tilde{U}}) =F_{\mathrm{avg}}(\mc{E},\mc{I})
\end{align}
where $\mc{I}$ is the identity operation.
A useful quantity is the average infidelity $r$ defined as
\begin{equation}\label{eq:infidelity}
r(\mc{E}) := 1- F_{\mathrm{avg}}(\mc{E}, \mc{I})
\end{equation}
We also use the quantity $f = f(\mc{E})$ defined as
\begin{equation}\label{eq:depolarizing_parameter}
f(\mc{E}) := \frac{dF_{\mathrm{avg}}(\mc{E},\mc{I})-1}{d-1}
\end{equation}
where $d$ is the dimension of the state space.
One can think of $f$ as the ``depolarizing parameter'' associated to the quantum channel $\mc{E}$. It is this quantity which randomized benchmarking can estimate.
In the text, we will often drop the channel $\mc{E}$ from the (in)fidelity and depolarizing parameter and
simply write $ r(\mc{E})=r$ because the only channel considered in the text is $\mc{E}$ (or equivalently $\mc{R}\mc{L}$, see \cref{subsec:gate_dependence}). 

We will also use another quantity associated to quantum channels called the unitarity
\begin{equation}
u(\mc{E}) := \frac{d}{d\!-\!1}\!\int\!\mathrm{d}\phi \tr\!\left(\bigl|\mc{E}\bigl(\dens{\phi} - \id/d\bigr)\bigr|^2\right).
\end{equation}
The unitarity has the property that $u(\mc{E})=1$ if and only if the quantum channel $\mc{E}$ is unitary~\cite{Wallman2015}. We will again drop the argument and write $u(\mc{E}) = u$. Introducing this extra parameter allows us to differentiate between situations where the noise is coherent or incoherent. Randomized benchmarking behaves fundamentally different in each of these situations, as we explain in \cref{subsec:opt_max_var}.

\subsection*{The randomized benchmarking protocol}
\begin{figure*}[ht!]
\centering
	\framebox{
\includegraphics{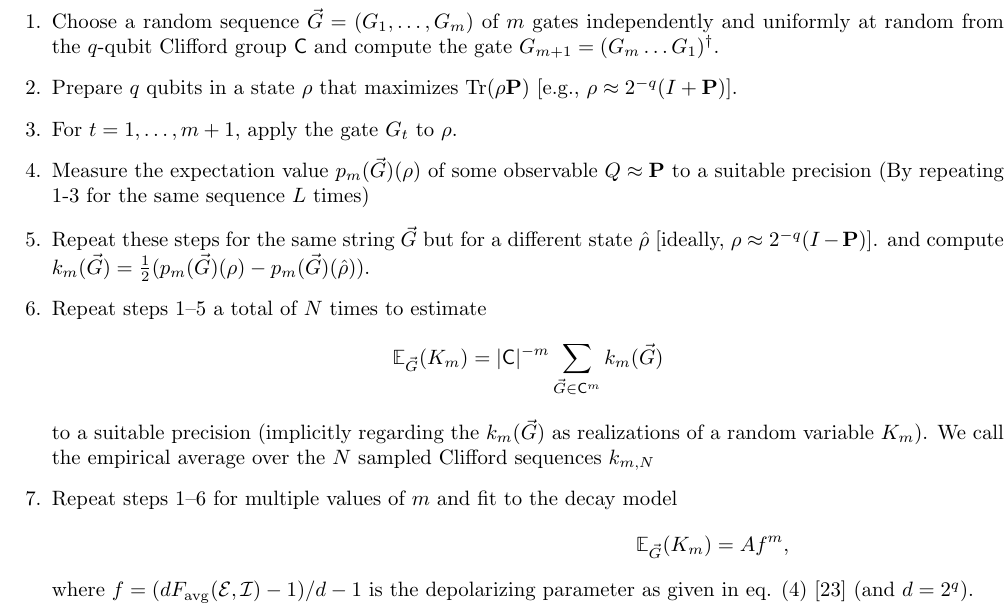}
	}
\caption{
{\bf The Randomized Benchmarking Protocol.}
 We perform randomized benchmarking using the Clifford group $\md{C}$, i.e. all gates that can be constructed by successive application of CNOT gates, Hadamard gates and $\pi/4$ phase gates. We assume the input states $\rho, (\hat{\rho})$ to be noisy implementations of the states $2^{-q}(I + \mathbf{P}), (2^{-q}(I - \mathbf{P}))$, and $Q$ a noisy implementation of the observable $\mathbf{P}$ where $\mathbf{P}$ is a Pauli operator. We denote the length of an RB sequence by $m$, the amount of random sequences for a given $m$ by $N$ and the amount of times a single sequence is repeated by $L$. The goal of this paper is to provide confidence intervals around the empirical average $k_{m,N}$ assuming that individual realizations $k_m(\vec{G})$ are estimated to very high precision (corresponding to the case $L\rightarrow \infty$). In experimental implementations, running the same sequence many times $(L)$ is typically easy, but running many different sequences $(N)$ is hard~\cite{Epstein2014}, meaning that the quantity that we want to minimize is $N$. See \cref{sec:methods} for a detailed discussion of the construction of confidence regions around the empirical average $k_{m,N}$
}\label{box:randomized_benchmarking}
\end{figure*}

In~\cref{box:randomized_benchmarking} we lay out our version of the randomized benchmarking protocol as it was analyzed in~\cite{Knill2008,Magesan2012a,Wallman2014}. We will perform randomized benchmarking over the Clifford group on $q$ qubits $\md{C}$. 
This is the group of unitary operations that can be constructed by considering all possible  products of CNOT gates, Hadamard gates and $\pi/4$ phase gates on the $q$ qubits~\cite{Gottesman1998}. 
We make two essential changes to the standard randomized benchmarking protocol, both of which lead to better guarantees on the precision of randomized benchmarking.
\begin{itemize}
\item A first modification is to perform each randomized benchmarking sequence twice, but with different input states $\rho, \hat{\rho}$ and then subtracting the result. This is equivalent to performing standard randomized benchmarking with the ``input operator'' $\nu = \frac{1}{2}(\rho-\hat{\rho})$. A similar idea was suggested in \cite{Knill2008,Granade2014,Muhonen2015,Fogarty2015}. The factor $(1/2)$ is not strictly necessary but it allows for a fairer comparison between the original benchmarking protocol and our proposal~\footnote{In particular this factor of two ensures that ``signal ranges'' of the two protocols are equal, that is, standard RB starts at $1$ and decays to $1/2$ for large $m$ and the new protocol starts at $1/2$ and decays to $0$.}.
\item Secondly, we do not assume the ideal measurement operator to be the projector on the $|0\cdots0\rangle$ state. Instead we perform some stabilizer measurement related to a pre-chosen Pauli matrix $\mathbf{P}$. An experimentally good choice would be for instance $\mathbf{P} = Z\tn{q}$ but our results hold for any choice of Pauli operator. Correspondingly we pick the input states to be some (impure) states $\rho,\hat{\rho}$ with support on the positive, resp.\ negative, eigenspaces of the Pauli operator $\mathbf{P}$ That is, we would like to prepare the impure states $\rho = \frac{I+ \mathbf{P}}{2d}, \hat{\rho} =  \frac{I- \mathbf{P}}{2d}$.
\end{itemize}
Both of these adjustments are done with the purpose of lowering the experimental requirements for rigorous randomized benchmarking. Our first change to the RB protocol, performing randomized benchmarking with a state difference, has two beneficial effects: (1) It changes the regression problem inherent to randomized benchmarking from an exponential fit with a non-zero off-set to an exponential fit (see \cref{eq:fitting}). This eliminates a fitting parameter, lowering experimental requirements. (2) It lowers the statistical fluctuations of randomized benchmarking regardless of what input states are actually used. This improvement is mostly noticeable in the limit of long sequence lengths. We discuss this in more detail in \cref{subsec:rel_to_reg_bench}.\\
A much stronger improvement to the statistical fluctuations inherent to randomized benchmarking stems from our second change to the RB protocol; preparing states and performing measurements proportional to $\id +\mathbf{P}$ where $\mathbf{P}$ is a Pauli operator. This change allows us to prove a radically sharper bound on the statistical fluctuations induced by finite sampling relative to preparing other input states. This is discussed in \cref{subsec:variance_bound} (see in particular \cref{eq:main_mid_variance}). In \cref{subsec:opt_max_var} we argue that this behavior is not an artifact of our proof techniques but rather inherent to the statistical behavior of randomized benchmarking. Note that for a single qubit the state $(I \pm \mathbf{P})/2$ is in fact a pure state for any choice of $\mathbf{P}$ (in particular $(I + Z)/2 = \dens{0}$). Note that (1) and (2) both reduce the amount of resources needed in a different and independent manner. Using a difference of two input states amounts to effectively “preparing” a traceless input operator. The tracelessness of this operator has two distinct effects. The first effect is that it fixes the constant offset of the decay to be zero, thereby eliminating a fitting parameter. The second effect, which is more subtle, is that it eliminates in the variance expression a representation (which has support on the identity matrix), and hence an extra term in the sequence variance. This means the sequence variance is reduced compared to the sequence variance of standard RB. This effect remains even in the case of imperfect state preparation, as the difference of two density matrices is always traceless (assuming no leakage during the preparation)

As seen in \cref{box:randomized_benchmarking} the RB protocol starts by, for a given sequence of Clifford operations $\vec{G}$ of length $m$, computing the expectation value $p_m(\vec{G})(\rho)$ of an observable $Q$ for two different input states $\rho$ and $\hat{\rho}$. We subtract these two numbers to obtain a number $k_m(\vec{G}) :=\frac{1}{2}(p_m(\vec{G})(\rho) - p_m(\vec{G})(\hat{\rho}))$.
Next we obtain an average of this quantity over all possible sequences $\vec{G}$.
\begin{align}
		\bbE_{\vec{G}}(K_m) = |\md{C}|^{-m} \sum_{\vec{G}\in\md{C}^m} k_m(\vec{G})
\end{align}
This average over all possible Clifford strings of length $m$ can be fitted for various values $m$ to the exponential decay curve
\begin{equation}\label{eq:fitting}
\bbE_{\vec{G}}(K_m) =_{\mathrm{fit}}  Af^m,
\end{equation}
with two fitting parameter $A$ and $f$. In the case where all gates performed in the experiment suffer from the same noise, that is $\mc{\hat{G}} = \mc{E}\circ \mc{G}$ for all Clifford operations $\mc{G}$ the number $f$ can be interpreted as the depolarizing parameter of the channel $\mc{E}$ (as defined in \cref{eq:depolarizing_parameter}) giving an estimate of the average fidelity of the noisy operation $\mc{\hat{G}}$ w.r.t. its ideal version $\mc{G}$.

In practice the number of possible sequences for a given $m$ is too large to
average over completely.  Instead one averages over a randomly sampled subset of sequences,
which generates an  empirical estimate $k_{m,N}$ the validity of which we can interpret using
\emph{confidence regions}. A confidence region, for some set confidence level
$1-\delta$ and size $\epsilon$, is an interval $[k_{m,N}-\epsilon,
k_{m,N}+\epsilon]$ around the estimate $k_{m,N}$ such that the probability
that the (unknown) parameter $\bbE_{\vec{G}}(K_m)$ lies in this interval with
probability greater than $1-\delta$, i.e.,
\begin{equation*}
\text{Prob}\!\left[\bbE_{\vec{G}}(K_m)\!\in\![k_{m,N}
\!-\!\epsilon,k_{m,N}\!+\!\epsilon]\right] \geq 1\!-\!\delta.
\end{equation*}

These confidence intervals, obtained for various values of sequence length during the experiment can then be used in the fitting procedure \cref{eq:fitting} to generate a confidence interval around the empirical estimate $\hat{F}$ for the true channel average fidelity $F_{avg}(\mc{E},\mc{I})$. This can be done using standard statistical procedures (see e.g.~\cite{Beck1978}).
The number of random sequences $N$ used to obtain $k_{m,N}$ will depend on $\epsilon$ and $ \delta$ which are set before the beginning of the experiment, and in general also on some prior estimate of the infidelity $r$ and unitarity $u$. The rest of the paper will be concerned with making this $N$ as small as possible given $\delta$ and $\epsilon$ and (if possible) an a priori bound on the average infidelity $r$.

\section{Results}\label{sec:summary}
In this section we state the main contributions of the paper. We present practical bounds on the number of sequences required to obtain rigorous confidence intervals for randomized benchmarking using the Clifford group under the assumption that the expectation value difference $k_m(\vec{G})$ for a given Clifford sequence $\vec{G}$ is estimated easily to a very high precision. This means we assume that any uncertainty on the number $k_m$ is mostly due to the fact that we only sample $N$ sequences $\vec{G}$.~\cite{Epstein2014,Wallman2014}, or equivalently that the uncertainty on the number $k_m(\vec{G})$ for a fixed sequence $\vec{G}$ is negligible. In order to construct a $1-\delta$ confidence interval of size $\epsilon$ around a randomized benchmarking sequence average $k_{m,N}$ with sequence length $m$, system dimension $d$ and a prior estimate of the channel infidelity $r$ and unitarity $u$ one needs to average over $N$ random sequences where $N$ is given by~\cite{Hoeffding1963}:
\widetext
\begin{equation}
N(\delta,\epsilon, m, r, \chi, d) = -\log(2/\delta)\left[\log
\left(\frac{1}{1-\epsilon}\right)\frac{1-\epsilon}{\bbV^2+1}+
\log\left(\frac{\bbV^2}{\bbV^2+\epsilon}\right)\frac{\bbV^2+\epsilon}{\bbV^2+1}\right]^{-1},
\end{equation}
\endwidetext
\noindent{}where $\bbV^2$ is the variance of the distribution of the samples $k_m(\vec{G})$ from a uniform distribution over the Clifford sequences $\vec{G}$. This variance is given below.

\begin{figure*}[ht]
\centering
\begin{minipage}[t]{0.45\textwidth}
\includegraphics[scale=0.27]{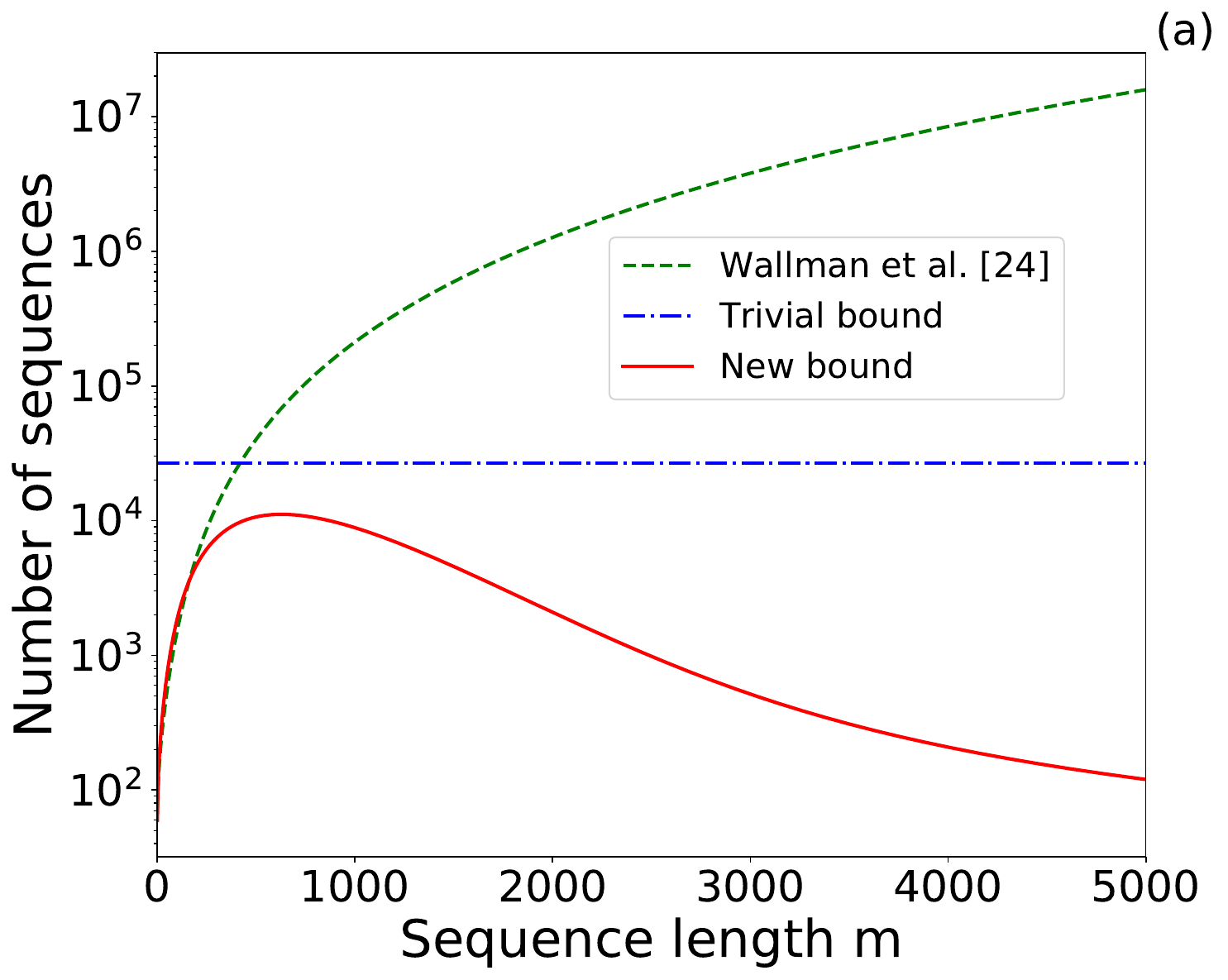}\label{fig:qubit_vs_new}

\end{minipage}
\hspace{5mm}
\begin{minipage}[t]{0.45\textwidth}
\includegraphics[scale=0.27]{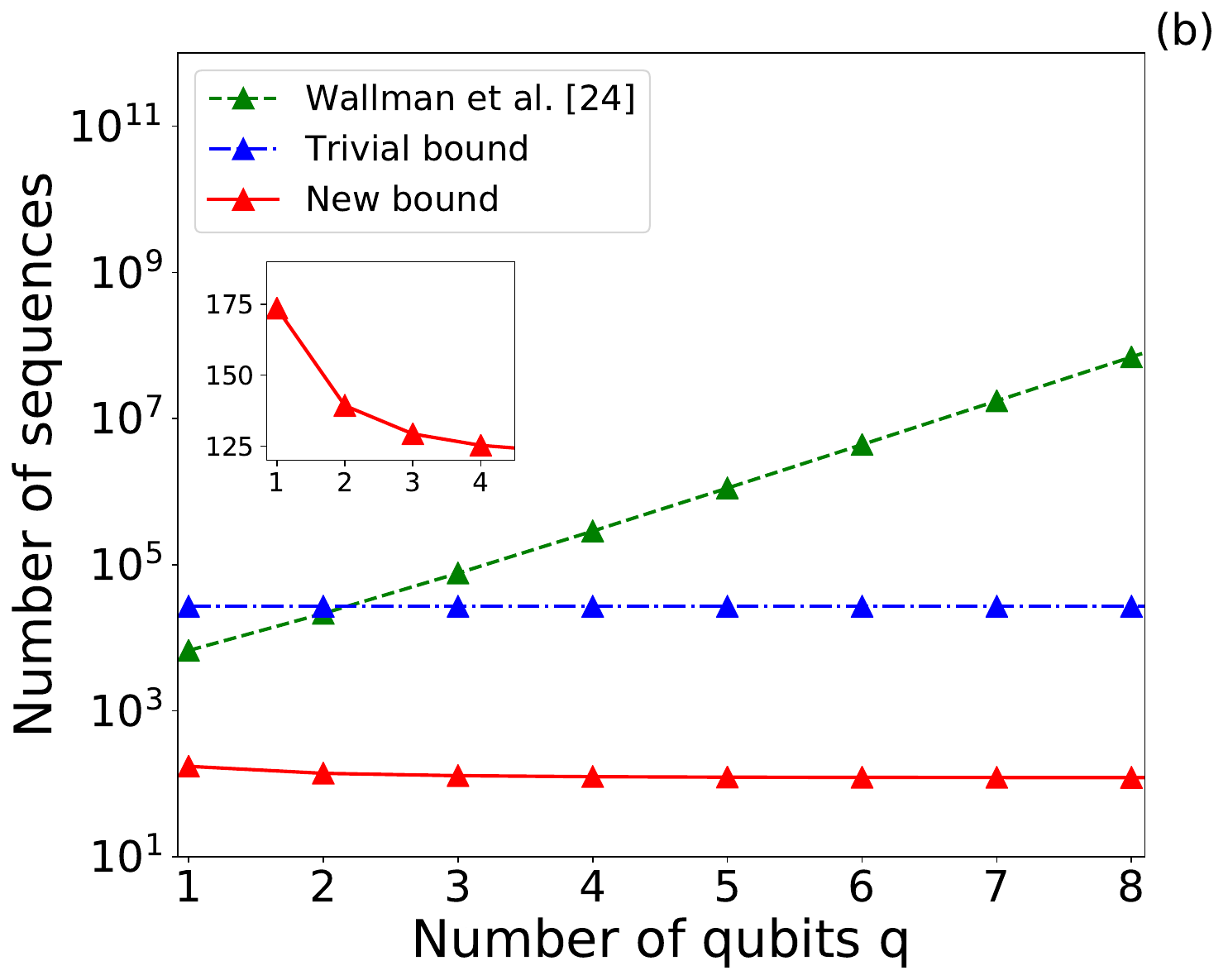}\label{fig:qudit_vs_new}
\end{minipage}
\caption{\textbf{Improvements in dimensional and sequence length scaling}
The number of sequences needed (on a log scale) to obtain a $99\%$ confidence
interval around $p_{m,N}$ with $\epsilon = 10^{-2}$ for a prior infidelity $r=10^{-3}$ as a function of
{\bf (a)} the sequence length $m$ for a single qubit $(q=1)$ from \cref{variance_final} (full line red)
compared to the single-qubit bound from~\cite[Eq. (6)]{Wallman2014} (dashed green) and a trivial bound that arises from noting that the distribution sampled from is bounded on the interval $[0,1]$ and hence has a variance at most $1/4$ (dot-dashed blue) and {\bf (b)} the number of qubits from \cref{variance_final_SPAM} (full line) for sequence length $m=100$ compared to the multi-qubit bound from from~\cite[Eq. (4)]{Wallman2014} (dashed green). In both cases,
our bounds are asymptotically constant while the bounds from~\cite{Wallman2014} diverge. Our bounds are also substantially smaller than the trivial bound.
For multiple qubits, we set the SPAM contribution to $\eta =0.05$ while for a single qubit we set the SPAM contribution to $\eta=0$ in both bounds. We also assumed the unitarity to be $u= (1+f^2)/2$ where $f$ is the depolarizing parameter, corresponding to somewhat, but not fully coherent noise.}
\label{fig:dimensional_scaling}
\end{figure*}

\subsection*{The variance of randomized benchmarking}

The most important contribution of this paper is a bound on the number of
sequences $N$ needed for multi-qubit randomized benchmarking. Previous bounds for
multi-qubit RB~\cite{Wallman2014,Magesan2012a} are either prohibitively loose
or scale exponentially with the number of qubits. Our new bounds, which are derived in detail in \cref{thm:variance bound} of the Supplementary Material, resolve both these issues using techniques from representation theory, enabling multi-qubit RB with practical numbers of random sequences.

\subsubsection*{Variance bound for SPAM-free multi-qubit RB}

For states and measurements that are (very close to) ideal, \cref{subsec:variance_bound} yields a bound on the variance in terms of the sequence length $m$, the infidelity $r$, the unitarity $u$ and the system size $d$. It is given by
\widetext
\begin{equation}\label{variance_final}
\bbV_m^2 \leq \frac{d^2-2}{4(d-1)^2}r^2mf^{m-1}  + \frac{d^2}{(d-1)^2}r^2 u^{m-2}\frac{(m\!-\!1)\!\left(\frac{f^2}{u}\right)^{m}\!-\!m\left(\frac{f^2}{u}\right)^{m-1}+ 1}{(1-f^2/u)^2}.
\end{equation}
\endwidetext
\noindent{}This bound is asymptotically independent of system size $d$.

To illustrate the improvements due to our bound, consider a single qubit $(d=2)$ RB experiment with
sequences of length $m=100$ and average infidelity $r\leq 10^{-4}$.
To obtain a rigorous $99\%$ confidence interval of size $\epsilon = 10^{-2}$ around $p_{m,N}$,
Ref.~\cite{Wallman2014} reported that $N=145$ random sequences were needed (In the case of perfect state preparation and measurement) while our bounds imply that $N= 173$ random sequences are sufficient.
However, the new bound has substantially better scaling with $m$. For instance, with $m=5000$, $\epsilon =0.05$ and
other parameters as above, our bound only requires $N=470$ compared to the
$N=1631$ required by the single qubit bound of Ref.~\cite{Wallman2014}. We illustrate the difference in scaling of the number of sequences needed for a given confidence interval with respect to sequence length $m$ in \cref{fig:dimensional_scaling}.\\

A notable upper bound on \cref{variance_final}, which is easier to work with, is
\begin{align}\label{variance_small_m}
\bbV_m^2 \leq \!f^{m-1} \frac{(d^2\!-\!2)m}{4(d\!-\!1)^2}r^2+ \!u^{m-2}\frac{d^2 m (m\!-\!1)}{2(d-1)^2}r^2.
\end{align}
This bound can be further weakened and simplified by setting $u=1$, yielding an upper bound on the variance that is independent of the unitarity. This bound will however rapidly become trivial with increasing sequence length.

\subsubsection*{Variance bound including SPAM}

The above variance bound is sensitive to SPAM errors, which introduce terms into the variance which scale linearly in the infidelity $r$. In
\cref{thm:variance bound} of the Supplementary Material, we prove that in the
presence of SPAM errors the variance is bounded by
\widetext
\begin{equation}\label{variance_final_SPAM}
\bbV_{\mathrm{SPAM}}^2\, \!\leq\! \frac{d^2-2}{4(d-1)^2}r^2mf^{m-1}\! +\! \frac{d^2(1+4\eta)r^2}{(d-1)^2}\!\frac{(m\!-\!1)\!\left(\!\frac{f^2}{u}\!\right)^{m}\!-\!m\left(\!\frac{f^2}{u}\!\right)^{m-1}\!\!+ \!1}{(1-f^2/u)^2}u^{m-2} \!+ \!\frac{2\eta dm r}{d-1}f^{m-1}.
\end{equation}
\endwidetext
The correction factor $\eta$ only depends on SPAM. As we show in \cref{subsec:asymp_beh_var}, this SPAM dependence is impossible to avoid if one wants to retain the preferred quadratic scaling in infidelity $r$. This bound is also asymptotically independent of the number of
qubits. This means we can perform rigorous randomized benchmarking even in the limit of very many qubits. We illustrate the difference in scaling with respect to
system size in \cref{fig:dimensional_scaling}.

To illustrate the improvements our methods yield we can again compare
to~\cite{Wallman2014}. Consider a system with $4$ qubits, that is, $d=16$, with
sequence length $m=100$, an a priori estimate of $r\leq 10^{-4}$, and $ \eta =0.05$. For a $99\%$ confidence region of size $\epsilon=10^{-2}$ the
previous best known bound for multiple qubits~\cite{Wallman2014} would require
$N=3\times 10^{5}$ random sequences, while our dimension independent
bound from \cref{variance_final_SPAM} only requires $N=249$.

\subsubsection*{Optimality of results}

We also prove (see \cref{sec:methods}) that for \emph{arbitrary} SPAM a bound on the variance which is linear in the infidelity $r$ is in fact optimal. This means the result stated above is in some sense the best possible bound on the variance of a randomized benchmarking sequence. It is important to note that this optimality result also holds when RB is performed using a different set of gates than the Clifford group and also when one considers the standard protocol~\cite{Knill2008,Magesan2011} as opposed to the protocol involving differences of quantum states which we presented in this paper.

Both the SPAM and SPAM-free variance bound also approach a constant independent of the infidelity $r$ in the limit of large sequence length $m$ when the unitarity is one, that is when the noise in the system is purely coherent. In \cref{subsec:asymp_beh_var} we argue that this behavior is not an artifact of the proof techniques used but is in fact a generic feature of a randomized benchmarking procedure with a unitary noise process. 


\subsubsection*{Fitting procedure}
In \cref{subsec:fitting} we discuss the consequences of \cref{variance_final,variance_final_SPAM} on the fitting procedure used to fit the data $\{k_{m,N}\}$ generated by \cref{box:randomized_benchmarking} to the RB fitting relation \cref{eq:fitting}. Our results show that the variance of randomized benchmarking data is strongly heterogeneous with respect to the sequence length $m$. This invalidates the key assumption of homogeneity of variance (homoskedasticity)~\cite{Seber1989} that is necessary for the correct functioning of Ordinary Least Squares (OLS), the standard method used for fitting RB data~\cite{Epstein2014}. Because of this inferences drawn from can give misleading results when applied to RB data. We recommend switching from OLS to the more sophisticated method of Iteratively Reweighted Least Squares, which can deal with non-homoskedastic data.

\section{Discussion}\label{sec:discussion}

\begin{figure*}[t]
\centering
\begin{minipage}[t]{0.45\textwidth}
\includegraphics[scale=0.27]{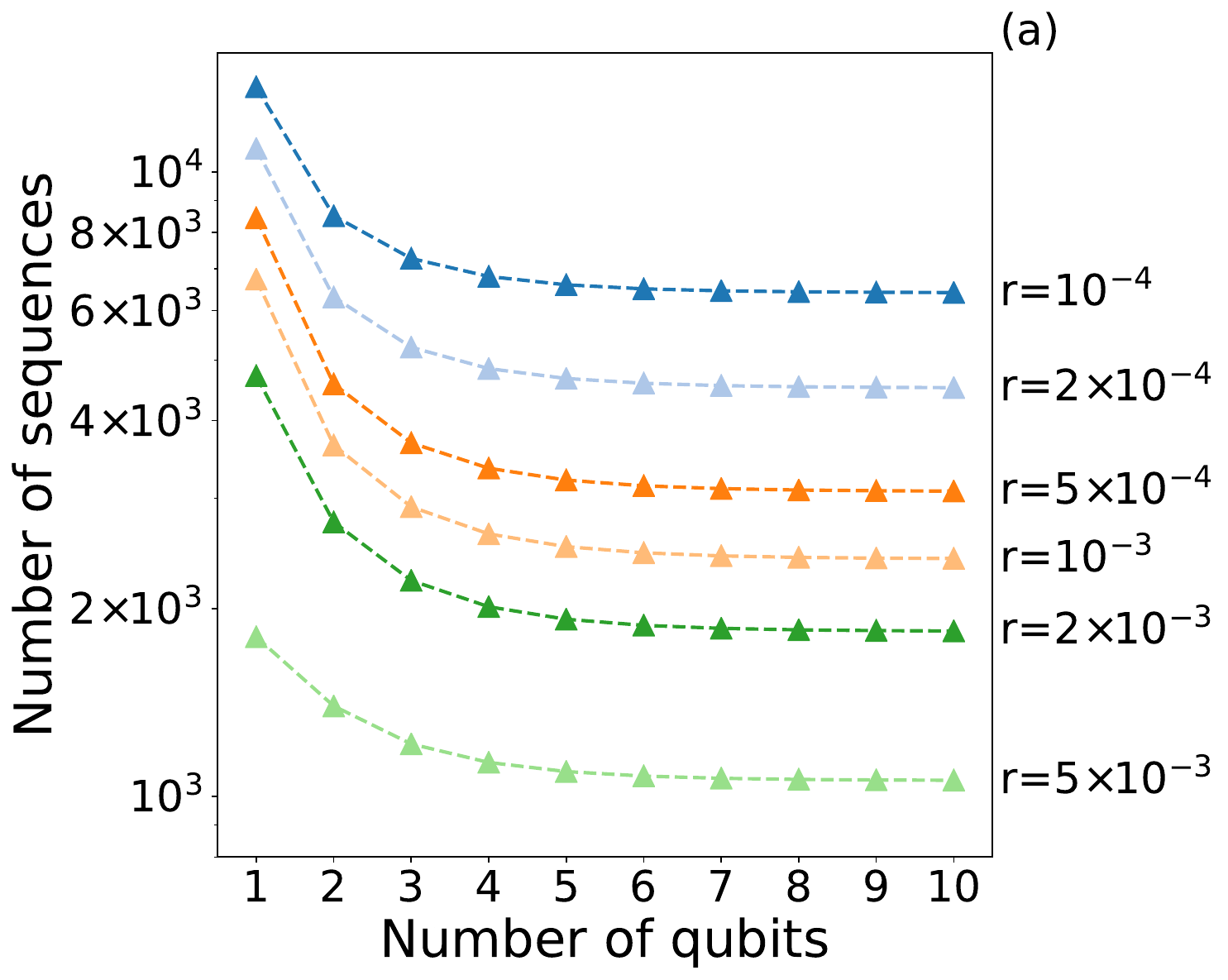}
\end{minipage}
\hspace{5mm}
\begin{minipage}[t]{0.45\textwidth}
\includegraphics[scale=0.27]{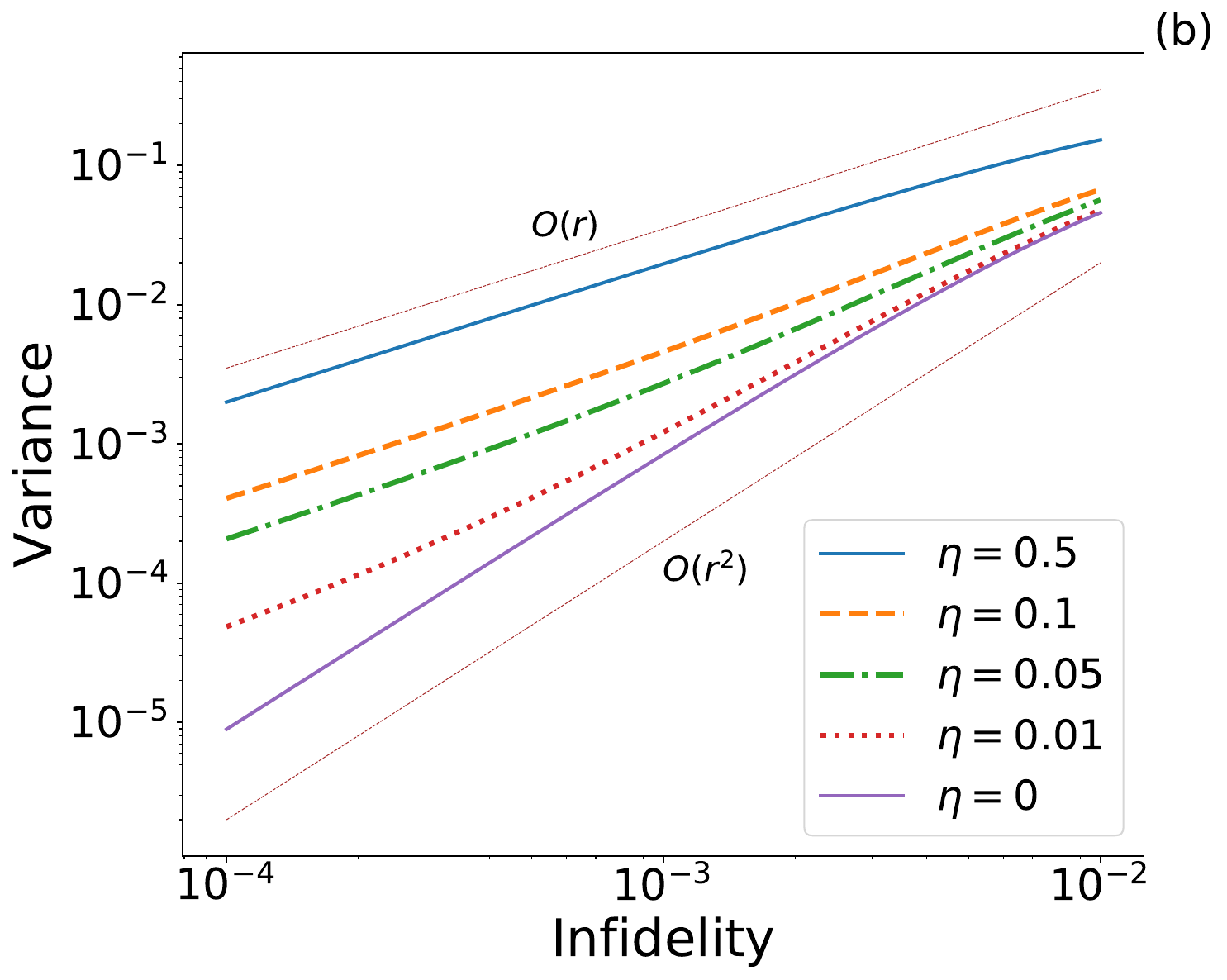}
\end{minipage}
\caption{\textbf{(a)} Number of sequences needed for a $99\%$ confidence interval of size $\epsilon =
5r$ for various infidelities $r$ (ranging from $r=5\cdot10^{-3}$ to$r=10^{-4}$) , number of qubits $q\in[1,10]$ and sequence
length $m=100$ using \cref{variance_small_m} under the assumption of negligible SPAM. (similar plots can be made without this
assumption). The number of sequences needed increases with decreasing
infidelity, reflecting the generic statistical rule that higher precision
requires more samples. 
Note that even in the case of infidelity $r=2\times10^{-4}$ the
number of sequences required is within experimental limits.
\textbf{(b)} Variance, as given by \cref{variance_final_SPAM} versus infidelity $r$ (taking $d=16$ and $m=100$ for illustration) for various levels of SPAM $\eta\in \{0,0.01,0.05,0.1,0.5\}$. Note that the size of the SPAM term has a strong influence on the variance and hence the number of sequences required, especially in the small $r$ limit. As indicated by the visual aids this is due to the transition from a variance scaling quadratically in infidelity $r$ (small $\eta$) to a variance scaling linearly in the infidelity $r$ (large $\eta$). }\label{fig:first_set_of_plots_discussion}
\end{figure*}

In this section we will discuss the behavior of the variance bound \cref{variance_final,variance_final_SPAM} in various regimes. Of interest are its scaling with respect to the number of qubits in the system, the presence of state preparation and measurement noise and varying amounts of coherence in the noise process.

\subsection{Scaling with number of qubits.}

We begin by discussing the effect of the number of qubits in the system on the
variance and the number of necessary sequences.

As illustrated in \cref{fig:dimensional_scaling} (red full) and as can be seen from
\cref{variance_final}, the derived bound is almost independent of the number of qubits $q$ (where
$d=2^q$). In fact, the bound on the variance decreases asymptotically to a
constant in the limit of many qubits despite the number of possible sequences
(that is, $\lvert \md{C}\rvert^m$) increasing exponentially with the number of
qubits. This constitutes a notable improvement over previous multi-qubit variance bounds with an explicit dependence on the infidelity (dashed green in \cref{fig:dimensional_scaling}), given in~\cite{Wallman2014} which had a linear scaling with infidelity but scaled exponentially with the number of qubits. The qualitative behavior of the variance bound in terms of dimension matches a trivial bound on the number of sequences, which can be made by noting that the numbers $k_{m,N}$ are sampled from a distribution bounded on an interval of unit size (and hence has variance at most $1/4$ (dashed blue in \cref{fig:dimensional_scaling})) but is much sharper in absolute terms due to its quadratic dependence on the infidelity $r$.

To further illustrate the behavior of the bound, \cref{fig:first_set_of_plots_discussion}(a) shows the number of sequences needed for a $99\%$ confidence interval around $k_{m,N}$ of size $5r$ versus the number of qubits in the system for various values of $r$ ranging from $5\cdot10^{-3}$ to $10^{-4}$ and sequence length $m=100$. 
The size of $\epsilon$ was chosen to reflect that for fixed sequence length a smaller infidelity will lead to the need for greater precision around $k_{m.N}$ for a successful fit to the exponential \cref{eq:fitting}~\cite{Epstein2014}. 
This plot was made using the unitarity independent bound in \cref{variance_small_m} for ideal SPAM, but similar plots can be made for non-negligible SPAM errors using \cref{variance_final_SPAM}. 
Note also that greater numbers of sequences are needed when the infidelity is small even though the variance in \cref{variance_final} decreases with infidelity. 
This is due to our setting of the size of the confidence interval and reflects the statistical truism that more samples are in general needed to detect small differences.

\subsection{Effects of SPAM terms}

In practice it will always be the case that the input state difference $\nu$ and the output measurement POVM element $Q$ are not ideal. This means that in general we must take into account the contributions from non-ideal SPAM when calculating the number of required sequences. These contributions scale linearly in the infidelity $r$ (see \cref{variance_final_SPAM}) rather than quadratically and so will increase the amount of required sequences. The degree to which $\nu$ and $Q$ deviate from the ideal situation is captured by the prefactor $\eta$ (see \cref{sec:methods} for more on this factor). To illustrate the effect of the SPAM terms on the variance we plot in \cref{fig:first_set_of_plots_discussion}(b) the variance versus the infidelity $r$ using \cref{variance_final_SPAM} taking the sequence length $m=100$ and the dimension of the system $d=16$ (four qubits) for SPAM of size $\eta\in \{0, 0.01,0.05,0.1,0.5\}$. From this plot we note that for non-zero $\eta$ the variance, and hence the amount of sequences needed increases rapidly, especially in the regime of small $r$. This is due to the fact that increasing the SPAM contribution interpolates the variance between a regime where the terms quadratic in infidelity $r$ are dominant and a regime where the terms linear in infidelity $r$ are dominant. This means that, especially when dealing with systems with very small $r$ it is advantageous to try to suppress SPAM errors. In \cref{subsec:opt_max_var} we show that this type of quadratic-to-linear interpolation behavior is in fact optimal for the variance of randomized benchmarking.

\begin{figure*}[t]
\begin{minipage}[t]{0.32\textwidth}
{\small Number of sequence vs sequence length for different levels of coherence}
\includegraphics[width=\linewidth]{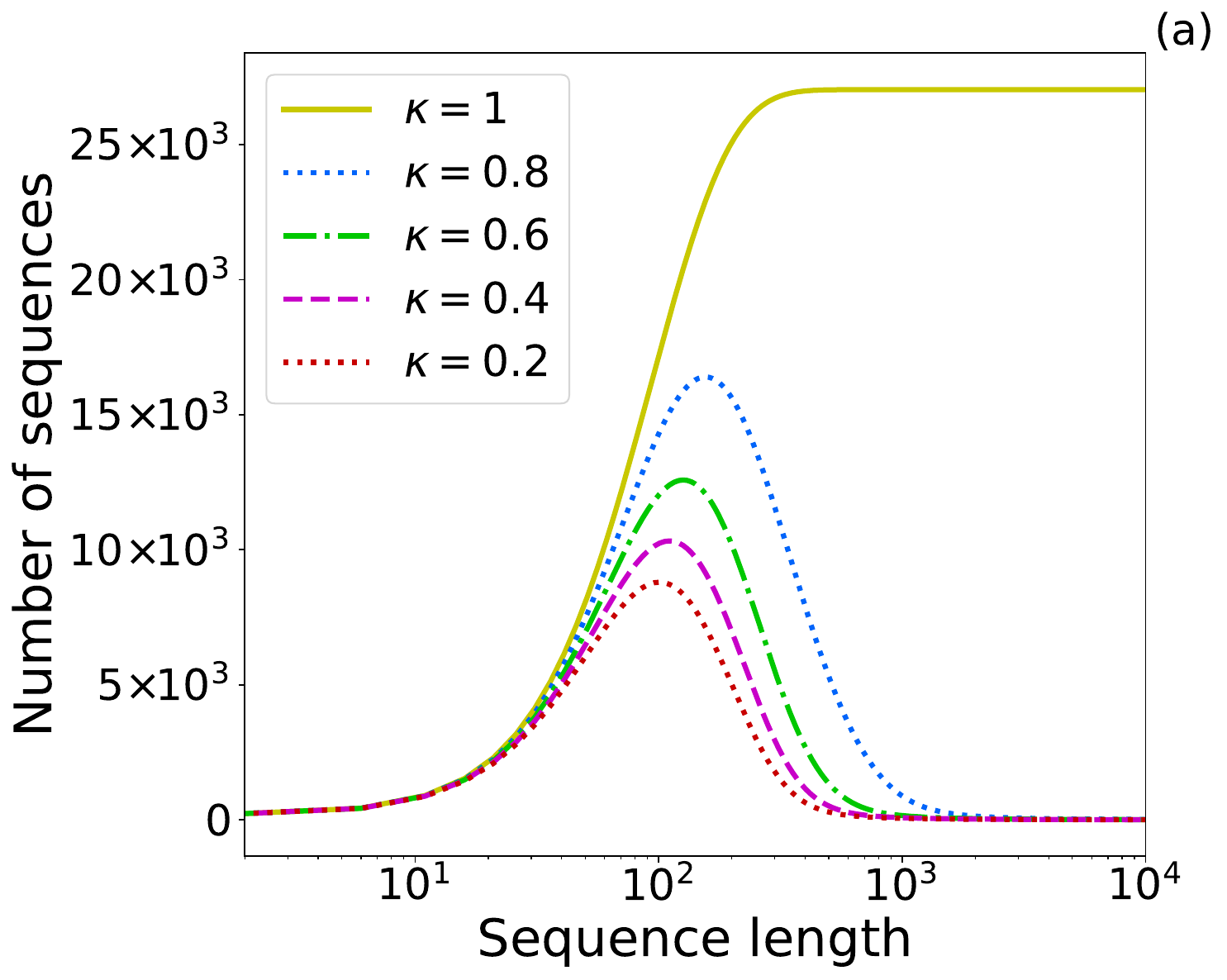}
\end{minipage}
\begin{minipage}[t]{0.33\textwidth}
{\small Contour plot of variance for incoherent noise}
\includegraphics[width=\linewidth]{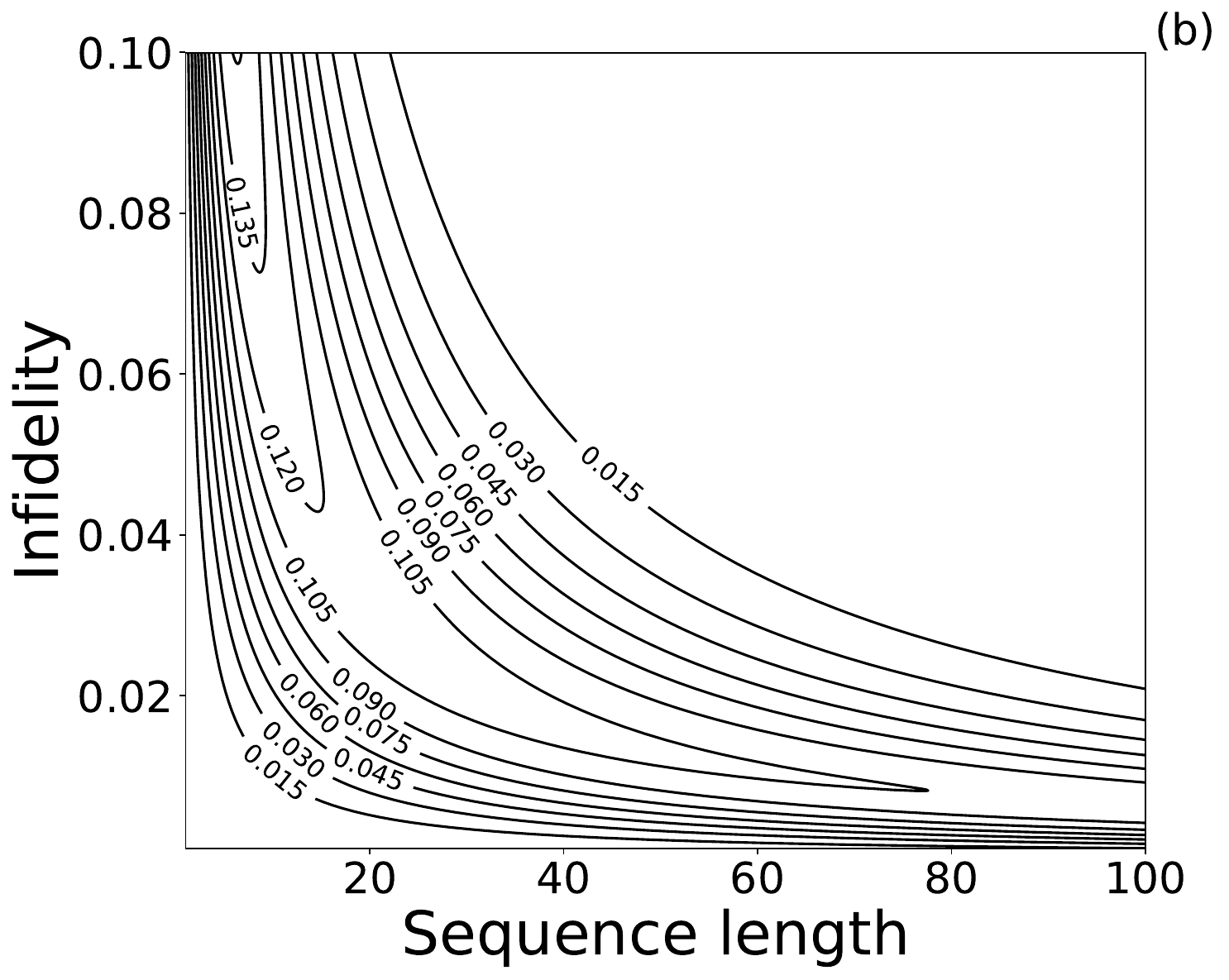}
\end{minipage}
\begin{minipage}[t]{0.33\textwidth}
{\small Contour plot of variance for coherent noise}
\includegraphics[width=\linewidth]{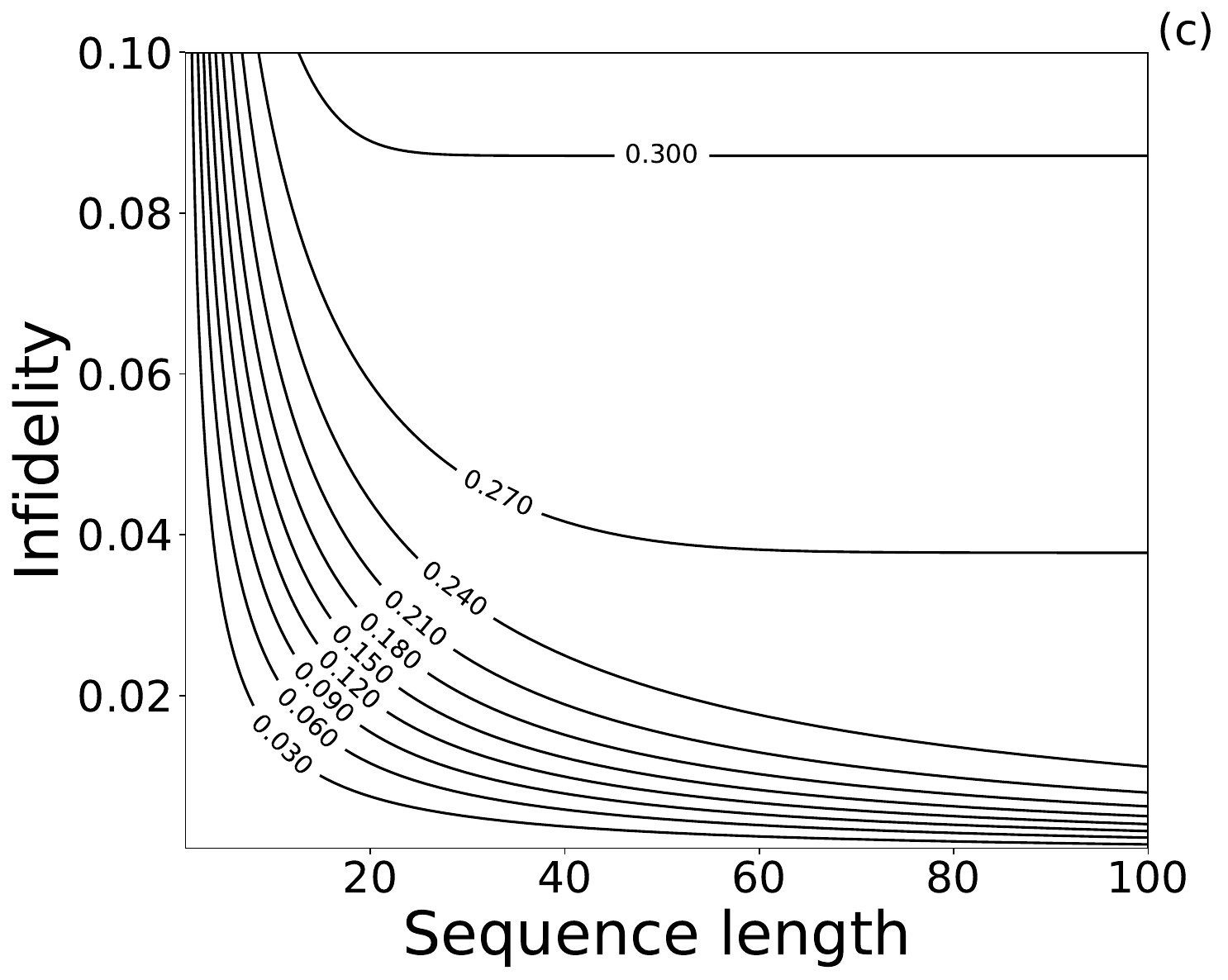}
\end{minipage}
\caption{\textbf{(a)} Number of sequences needed for a $99\%$ confidence interval of size $\epsilon = 0.01$ around $k_{m,N}$ for various values of the unitarity (given by a linear interpolation between $f^2$ and $1$ where $\kappa=1$ corresponds to $u=1$ (unitary noise) and $\kappa=0$ corresponds to $u=f^2$ (depolarizing noise)) for  fixed infidelity $r=0.01$ and sequence length in the interval $m\in [1, 10000]$ (log scale) using the variance \cref{variance_final}. We also assume $d=16$ (four qubits) and ideal SPAM ($\eta$ =0). Note that the number of sequences differs radically for $u=1$ (unitary noise). In the case of $u<1$ the number of sequences needed rises with increasing sequence length $m$, peaks and then decays to zero but for $u=1$ the number of sequences keeps rising with increasing sequence length $m$ until it converges to a non-zero constant (which will be independent of $r$). In \cref{subsec:asymp_beh_var} we argue that this is expected behavior for randomized benchmarking with unitary noise. \textbf{(b),(c)} Contour plot of the variance bound with infidelity on the $y$-axis ($r\in [0.01,0.1]$) and sequence length $m$ on the $y$-axis ($m\in [1,100]$). For \textbf{(b)} we have set the unitarity to $u= (1+f^2)/2$ corresponding to relatively incoherent noise and for \textbf{(c)} we have set the unitarity $u=1$ corresponding to coherent noise. Note again the radical difference in behavior. For $u=1$ the variance rises monotonically in the sequence length $m$ to a constant independent of the infidelity $r$ . Moreover the variance is monotonically increasing in infidelity $r$. However for incoherent noise the variance will peak strongly around $mr\approx 1$ and then decay to zero with increasing sequence length $m$. This means that both an upper and lower bound on the infidelity is required to make full use of the bound in \cref{variance_final}. The looser bound of \cref{variance_small_m} does not share this property and can be used with only an upper bound on the infidelity $r$. }\label{fig:second_pair_of_plots_discussion}
\end{figure*}
\subsection{Scaling with sequence length}

Of more immediate relevance is the scaling of the bound with the sequence
length. It is easy to see that the variance bound \cref{variance_final} scales quadratically in the sequence length $m$ for any noise process when the sequence length is small (see also \cref{variance_small_m}) but when the sequence length is very long the precise nature of the noise under consideration heavily impacts the variance. If the noise is purely coherent, i.e. the unitarity $u=1$, we see that the scaling of the second term in \cref{variance_final} is set by the factor
\begin{equation}
\frac{(m-1)f^{2m}-mf^{2(m-1)}+ 1}{(1-f^2)^2}.
\end{equation}
In the limit of $m$ going to infinity this factor goes to
\begin{equation}
\frac{1}{(1-f^2)^2} \approx O(1/r^2)
\end{equation}
which means the variance \cref{variance_final} converges to a constant independent of the infidelity $r$. This behavior for unitary noise is strikingly different from the behavior for incoherent noise, that is $u<1$. Here we see that the variance in the limit of long sequences is dominated by the exponential terms $u^{m-2}$ and $f^{2(m-1)}$. Since $f$ and $u$ are strictly less than one by the assumption of incoherence, the variance will decay to zero in the limit of long sequences. As $u\geq f^2$ for all possible noise processes~\cite{Wallman2015} the decay rate will be dominated by the size of the unitarity. This is also evident in \cref{fig:second_pair_of_plots_discussion}(a). In this figure we see the number of sequences needed (as given by \cref{variance_final}) versus sequence length $m$ for fixed infidelity $r=0.1$ and dimension $d=16$, and a fixed confidence interval $\delta = 0.99,\epsilon = 0.01$ but for different values of the unitarity $u$. Here we have chosen $u =(\kappa + (1-\kappa)f^2)$ for $\kappa \in \{0.2, 0.4,0.6, 0.8,1\}$ corresponding to the situations where the noise is relatively incoherent going all the way up to a situation where the unitarity is one. We see that for $u<1$ the number of sequences needed first rises quadratically, tops out and subsequently decays to zero whereas in the case of $u=1$ the number of sequences needed keeps rising with sequence length $m$  until it tops out at some asymptotic value. In \cref{subsec:asymp_beh_var} we argue that this behavior is not a feature of the variance bound but rather a feature of the variance of randomized benchmarking itself. Therefore, in the case of highly unitary noise, we recommend performing more experiments at shorter sequence lengths rather than trying to map out the entire decay curve. 

Another noteworthy feature of the variance bound \cref{variance_final} is the fact that, for non-unitary noise (that is $u<1$) it is in general not monotonically increasing in infidelity $r$. Rather, for a fixed sequence length, the variance increases at first with increasing infidelity but then peaks and decays towards zero. This behavior is illustrated in \cref{fig:second_pair_of_plots_discussion}(b). Here we plot a contour plot of the variance with infidelity on the $y$-axis ($r\in [0.01,0.1]$) and sequence length $m$ on the $y$-axis ($m\in [1,100]$) and have set the unitarity to $u=(f^2+1)/2$ corresponding to relatively incoherent noise. The take-away from this plot is that it is not enough to have an upper bound on the infidelity to get an upper bound on the variance, rather one must have both an upper and a lower bound on the variance to make full use of the bound \cref{variance_final}. Note that the looser upper bound \cref{variance_small_m} does not share this behavior and always yields an upper bound on the variance given an upper bound on the infidelity $r$.

On the other hand, when the underlying noise process is unitary, that is $u=1$ the variance does increase monotonically with increasing $r$. This strikingly different behavior is illustrated in \cref{fig:second_pair_of_plots_discussion} (c). Here we plot a contour plot of the variance with infidelity on the $y$-axis ($r\in [0.01,0.1]$) and sequence length $m$ on the $y$-axis ($m\in [1,100]$) and have set the unitarity to $u=1$ corresponding to fully coherent noise.

\subsection{Future work}
 An important caveat when applying the confidence bounds is the assumption of
 gate and time independent noise (this can be relaxed to Markovian, gate
 independent noise~\cite{Wallman2014}). This is an assumption that many analyses of RB suffer from
 to various degrees, hence a major open problem would be to generalize the
 current bounds to encompass more general noise models. Note however, that since our upper bound captures the correct functional behavior of the RB variance with respect to sequence length (for gate and time independent noise) one could in principle check if these assumptions hold true by computing estimates for the variance at each sequence length (from the measured data) and checking if these estimates deviate significantly from the proposed functional form.

 Recent work has also argued that the exponential behavior of randomized benchmarking is robust against Markovian gate-dependent fluctuations~\cite{wallman2018randomized}. This however comes at a substantial increase in mathematical complexity. We suspect that similar robustness statements can be made for the variance of randomized benchmarking but new mathematical tools will be needed (perhaps using the Fourier analysis framework proposed recently in~\cite{merkel2018randomized}) to make this suspicion rigorous. 

Our work can be straightforwardly extended to interleaved RB~\cite{Magesan2012b}. However the dominant source of error in the interleaved RB protocol is usually systematic rather than stochastic (due to the fact that the protocol does not yield an estimate of the interleaved gate fidelity but rather provides upper and lower bounds). Interleaved RB essentially consists of two RB experiments: a reference experiment and an interleaved experiment, the latter of which has an extra ‘interleaved gate’ inserted between the random gates of the standard RB protocol. Hence the fidelity extracted from the second experiment corresponds to the fidelity of the composition of the noise due to the random gates and the noise due to the interleaved gate. An estimate of the fidelity of the interleaved gate is then extracted by considering the ratio of the fidelity of the random gates (from the reference experiment) and the fidelity of the above composition. However, the fidelity of a composition of two noise maps is in general not equal to the product of the fidelities of the individual maps and can, depending on the specifics of the noise processes, differ quite radically. Hence in the absence of more knowledge about the underlying noise processes, IRB gives an inaccurate estimate of the fidelity of the interleaved gate. This inaccuracy is not remedied by reducing the imprecision of the fidelity estimates (for a fixed amount of resources), which is what we provide here. And since the inaccuracy due to this lack of fidelity-composition can be much larger than the imprecision for even a modest amount of resources it is less useful to spend significant energy on increasing precision in IRB.\\

Moreover, it should be noted that while randomized benchmarking is efficient in the complexity theoretical sense, i.e. the amount of resources needed scales polynomially with the number of qubits in the system, the amount of resources required is still significant, and no RB experiment has been performed beyond $3$ qubits so far~\cite{mckay2017three}. Recently several protocols have been devised and implemented that are similar to randomized benchmarking but less resource intensive~\cite{helsen2018new,erhard2019characterizing,xue2019benchmarking}, making larger-scale characterization of multi-qubit systems possible. We suspect the bounds derived in this paper can be adapted to these new proposal but we leave this for future work.

 Also, successful and rigorous randomized benchmarking not only depends on the number of random
 sequences needed per sequence length but also on the fitting procedure used to
 fit the points generated by randomized benchmarking of various lengths to a decay curve in order to extract an estimate of the average gate fidelity. Finding the optimal way to perform this fitting procedure is still an open problem~\cite{Epstein2014}. Accounting for heteroskedasticity, as we have done here, can be considered a first step in this direction. Performing this accounting is standard practice in statistics but does not seem to be in widespread use in the experimental community. One could also consider directly estimating the variance at each sequence length from obtained data and then using these estimates directly as inputs to a weighted least-squares fitting procedure. We however believe that the parametric model we propose here will be more efficient in terms of data needed for a fixed precision.

  Finally, a
 major theoretical open problem is the extension of the present bounds to non-qubit
 systems, different varieties of randomized benchmarking~\cite{Dugas2015,Gambetta2016,barends2014rolling}, and to different $2$-designs~\cite{Dugas2015,Dankert2009,Turner2005} or even orthogonal $2$-designs~\cite{Hashagen2018, Harper2018}. If these $2$-designs are assumed to be
 groups, similar techniques from representation theory might be used~\cite{Gross2007a} but how
 this would be done is currently unknown. 

\section{Methods}\label{sec:methods}

In this section, we will discuss the new contributions in detail, and explain
how to apply them in an experimental setting. 
We will give a high level overview of the proof of the bound on the variance of a randomized benchmarking sequence; full details can be found in the Supplementary Material. 
We will also discuss the behavior of noise terms in the case of non-ideal SPAM and prove that the bounds we obtain are in some sense optimal. 
Finally we briefly comment on how the variance changes when performing regular randomized benchmarking (using an input state $\rho$ rather than an input state difference $\nu = \frac{1}{2}(\rho-\hat{\rho})$).

\subsection{Estimation theory}

In this section, we review confidence intervals and relate the bounding of
confidence intervals to the bounding of the variance of a distribution. A first
thing of note is that all the variance bounds stated in \cref{sec:summary} are
dependent on the infidelity $r$. The appearance of
$r$ in the bound might strike one as odd since this is precisely the quantity
one tries to estimate through RB. It is however a general feature of estimation
theory that one needs some knowledge of the quantity one tries to estimate in
order to use nontrivial estimation methods~\cite{Beck1978}.
Note also that while our results are stated in frequentist language, they should also be
translatable to Bayesian language, that is, as credible regions on the
infidelity given prior beliefs as in Ref.~\cite{Granade2014} for example.
Bayesian methods are more natural because our bounds depend on prior
information about the infidelity, however, a full Bayesian treatment would
involve the fitting process, obscuring our primary technical result, i.e. the variance bounds.\\

Let us now discuss how to use the variance bounds to construct confidence intervals around numbers $k_{m,N}$.
We can in general define a $1-\delta$ confidence interval of size $\epsilon$ to be
\begin{align}
\text{Pr}\left[|k_{m,N} - \bbE_{\vec{G}}(K_m)|\leq \epsilon\right]\geq 1-\delta.
\end{align}
Once we have an upper bound on the variance $\bbV_m^2$ of an RB distribution
we can relate this to an upper bound on number of required sequences through the use of concentration inequalities.

 Note that for the case of randomized benchmarking there are two sets of confidence parameters.
$(\delta_{N},\epsilon_{N})$ is associated with estimating the average over all
possible Clifford sequences, where the relevant parameter is the number of
performed sequences $N$ and $(\delta_{L},\epsilon_{L})$ is associated with
getting an estimate for the survival probability difference $k_{m}(\vec{G})$ for a given fixed sequence.
Here the relevant parameter is $L$, the number of times a single sequence is
performed. Since in practice $L<\infty$ there will be some finite
$(\delta_{L},\epsilon_{L})$ confidence region around the survival probability difference
$k_{m}(\vec{G})$ for a given sequence $\vec{G}$. So in general, when looking at a $\epsilon,
\delta$ confidence region for an RB procedure of a given length on should look
at $(\epsilon_{N}+\epsilon_{L}, \delta_{N}+\delta_L)$ confidence regions. In
what follows we will assume that $L$ is high enough such that
$\epsilon_{L},\delta_{L}$ are negligible relative to
$(\delta_{N},\epsilon_{N})$. This approach is motivated by experimental
realities where it is usually much easier to perform a single string of
Cliffords many times quickly than it is to generate, store and implement a
large number of random sequences.\\

For a given variance $\bbV^2$ we can
relate the number of sequences $N$ needed to obtain $1-\delta$ confidence
intervals of size $\epsilon$ using the following concentration inequality due to Hoeffding \cite{Hoeffding1963}:
\begin{align}\label{eq:concentration}
\text{Pr}\left[|k_{m,N} - \bbE_{\vec{G}}(K_m)|\geq \epsilon\right]&\leq \delta\leq2 H(\bbV^2,\epsilon)^N,
\end{align}
with
\begin{equation}
H(\bbV^2,\epsilon) = \left(\frac{1}{1-\epsilon}\right)^{\frac{1-\epsilon}{\bbV^2+1}} \left(\frac{\bbV^2}{\bbV^2+\epsilon}\right)^\frac{\bbV^2+\epsilon}{\bbV^2+1}.
\end{equation}
We can invert this statement to express the number of necessary sequences $N$ as a function of $\delta, r, \epsilon$ as
\begin{equation}\label{confidence}
N = -\frac{\log(2/\delta)}{\log(H(\bbV^2, \epsilon))}.
\end{equation}
Note that this expression can also be inverted to yield a bound on $\delta, \epsilon$ in terms of a given number of samples $N$. This identity heavily depends on the size of the variance $\bbV_m^2$.

\subsection{State preparation and measurement costs}
We have argued that our adapted RB protocol allows for a reduction in the number of needed sequences to make rigorous estimates. However implicit in this cost reduction argument is the assumption that estimating the number $k_m(\vec{G})$ for a fixed sequence $\vec{G}$ is not more costly than estimating the number $p_m(\vec{G})$. Here we justify this assumption for the two changes we made to the randomized benchmarking protocol: using a state difference as input and using an impure input state defined by a single Pauli matrix. In the following we forgo rigor in favor of intuition. We are however only applying standard statistical techniques that can easily be made rigorous.\\

\noindent {\bf State difference}\\
At first glance one might think that estimating the same sequence twice for difference input states as we propose yields a two-fold overhead in the number of samples per sequence. To see that this is not the case consider the variance $\bbV_\rho^2$ associated with estimating the expectation value for a single sequence for a single state $\rho$. From the standard rules of error addition we now have, for the state difference $\nu= (\rho-\hat{\rho})/2$ that
\begin{equation}
\bbV_{\nu}^2 = \bbV^2_{(\rho-\hat{\rho})/2}= \frac{1}{2^2}(\bbV^2_{\rho} +\bbV^2_{\hat{\rho}} )
\end{equation}
since the random variables associated to $\rho$ and $\hat{\rho}$ are independently distributed (making the covariance zero).
Now assuming that $\rho$ incurs the largest variance, we get
\begin{equation}
\bbV_{\nu}^2 \leq \frac{1}{2}\bbV^2_{\rho}
\end{equation}
which means that estimating the expectation value of a single sequence for a difference of states is statistically not harder than estimating it for a single state.\\

\noindent {\bf Optimal input state and measurement}\\
In our adapted RB procedure we call for preparing the input states $\rho=\frac{\id+\mathbf{P}}{2},\rho=\frac{\id-\mathbf{P}}{2} $ for some Pauli matrix $\mathbf{P}$ and measuring the output operator $\mathbf{P}$. This is different from standard RB where one is asked to prepare and project onto the all zero state $\dens{0\ldots0}$.  We argue that performing RB this way is not more costly than using the standard approach. For concreteness we shall set $\mathbf{P} = Z\tn{q}$. Measuring the expectation value of the operator $Z\tn{q}$ is trivial; one simply measures all qubits in the standard basis (as one would do in standard RB) and then computes the parity of the outcome. Since standard basis states with even parity precisely span the positive eigenspace of $Z\tn{q}$ this amounts to measuring the expectation value of $Z\tn{q}$. Preparing the states  $\rho=\frac{\id+Z\tn{q}}{2},\rho=\frac{\id-Z\tn{q}}{2}$ is a little more involved. The state $\rho$ is a probabilistic mixture of all computational basis states $\ket{x}$ of even parity. By the linearity of expectation one could compute (for a fixed Clifford sequence $\vec{G}$ ) the survival probability $p_m(\vec{G},\ket{x})$ and then compute $p_m(\vec{G},\rho) = 2^{-q/2}\sum_x p_m(\vec{G},\ket{x})$. This requires measuring $2^{2/q}$ expectation values $p_m(\vec{G},\ket{x})$, making this approach not scalable. We can remedy this by realizing that we are only interested in a good estimate of the mean $p_m(\vec{G},\rho)$. Considering $p_m(\vec{G},\ket{x})$ to be the mean of a Bernoulli random variable with outcomes $0$ and $1$, and thus $p_m(\vec{G},\rho)$ to be the mean of a normalized binomial distribution we can estimate this mean efficiently by sampling $\ket{x}$ at random (with even parity), estimating $p_m(\vec{G},\ket{x})$ and then computing the empirical mean. Moreover, since we do not need to know the means $p_m(\vec{G},\ket{x})$ very well to get a good estimate of $p_m(\vec{G},\rho)$ the about of single data points (clicks) gathered to estimate $p_m(\vec{G},\rho)$ is not higher than it would be to accurately estimate $p_m(\vec{G},\ket{\psi})$ for $\ket{\psi}$ some pure state.

\subsection{The fitting procedure}\label{subsec:fitting}

In the previous section we outlined how to use the bound~\cref{variance_final} to construct confidence intervals around $k_{m,N}$. However, we have not yet discussed how to integrate the variance bound~\cref{variance_final} into the fitting procedure required by~\cref{eq:fitting}. A fitting procedure is any method that takes in the set of data points $\{k_{m,N}\}_m$, with $m \in \md{M}$, where $\md{M}$ is some set of integers and outputs a tuple $(A^*,f^*)$ such that $A^* {f^*}^m$ is a `good' description of the data $\{k_{m,N}\}_m$. There are many ways to approach this problem, we refer to~\cite{Seber1989} for a good overview, and finding an optimal procedure is outside the scope of this paper. However we would like to discuss the most commonly used fitting procedure: Ordinary Least Squares (OLS) in the light of the bounds~\cref{variance_final,variance_final_SPAM}.\\

\noindent{\bf Ordinary least squares}\\
Given data $\{k_{m,N}\}_m$ and the function $F(A,f) = Af^m$ the OLS procedure returns estimates $(\hat{A},\hat{f})$. Through a linearization procedure, as outlined for RB in~\cite{Epstein2014}, confidence intervals can then be constructed around these estimates. However, for this procedure to yield correct results each data point $k_{m,N}$ must be distributed around $\mathbb{E}_{\vec{G}}(K_m)$ \emph{with the same variance}~\cite[Chapter 2.8]{Seber1989}. This assumption, called homoskedasticity in the statistics literature, is not universally valid for randomized benchmarking data $\{k_{m,N}\}_m$. This shows in the functional form of the upper bound~\cref{variance_final}, which strongly depends on the sequence length and from \cref{eq:main_mid_variance} one can see that this is not an artifact of bounding techniques but rather an innate feature of RB data. Moreover OLS assumes that the variance of $k_{m,N}$ is independent of the fitting parameters $A,f$, an assumption which is also explicitly violated in RB data. The violation of these two assumptions (homoskedasticity and independence of fitting parameters) creates problems when performing OLS on the RB data $\{k_{m,N}\}_m$. In particular OLS no longer provides an unbiased estimate of the standard error on the fitting parameters $(f,A)$~\cite[Chapter 3.3]{Seber1989}, which can lead to mis-estimation of confidence intervals around the fitting parameters. Therefore we recommend using a more sophisticated approach.\\

\noindent{\bf Iteratively re-weighted least squares}\\
Heteroskedasticity (violation of homoskedasticity) and functional dependence of the data distribution on the fitting parameters are well studied problems, and many robust solutions are available. Here we will focus on one particular solution called Iteratively Re-weighted Least Squares (IRLS). 
For the purposes of this construction we will assume that the data $\{k_{m,N}\}_m$ is drawn from a random variable with mean $\mathbb{E}_{\vec{G}}(K_m)$ and variance $\bbV_m^2(m,r)/N$.
IRLS constructs estimates for the parameters $(A,f)$ by minimizing the function
\begin{equation}\label{eq:optimize}
\min_{A,f} \sum_{m\in \md{M}} w_{m}( k_{m,N} - Af^m)^2
\end{equation}
where the weights $w_m$ can depend on $f$ and $A$. Under the assumption that \cref{variance_final} is the actual variance $\bbV_m^2$ up to a constant factor we can set the weights~\cite[Section 2.8.8]{Seber1989} to be $w_m= w(f,u,m) = 1/\sigma(f,u,m)$ where $\sigma$ is the RHS of \cref{variance_final} (if one suspects that $\eta\neq 0$ the \cref{variance_final_SPAM} can be used instead). We note that this procedure is fairly robust against misspecification of the weights, and moreover that $\sigma$ captures the behavior of $\bbV_m^2$ with respect to the sequence length very well (see \cref{subsec:variance_bound}). IRLS now proceeds in the following manner:

\begin{algorithm}[H]
\caption{Iteratively Reweighted Least Squares}
\label{alg:IRLS}
\begin{algorithmic}[1]

\Require{Initial estimates $f_0,u_0,A_0$ and  a dataset $k_{m,N}$}
\Ensure{Final estimates $\hat{f},\hat{A}$}
\State \textbf{Set} $f_{-1} =0$
\State \textbf{Set} $i=0$

\State // Optimization loop (here $\epsilon$ is some preset sensitivity)\\ 

\While{$|f_{i-1} - f_i|\geq \epsilon$}
\State \textbf{Set} $w_m = w(f_i,u_i,m) = \sigma(f_i,u_i,m)^{-1}$
\State Optimize \cref{eq:optimize} with weights $w_m$ to get $A_{i+1}, f_{i+1}$
\State Estimate $u_{i+1}$ by fitting $\sigma(f_{i+1},u_i,m)/N$ to the empirical variance of $k_{m,N}$
\State \textbf{Set} $i=i+1$
\EndWhile
\State \textbf{Set} $\hat{A}= A_i,\hat{f} = f_i$
\State \Return{$\hat{A},\hat{f}$}
\end{algorithmic}
\end{algorithm}
It as been shown~\cite[Page 45]{fedorov2013theory} (under some mild regularity conditions) that this algorithm converges to estimates $\hat{A},\hat{f}$. If the weights $w_m$ are exactly proportional to the variance $\bbV_m^2$ then these estimates are asymptotically consistent. In \cref{subsec:sample complexity} in the Supplementary Material we provide a detailed estimate of how close the estimate $\hat{f}$ is to the real depolarizing parameter $f$ in terms of the number of data points in $\{k_{m,N}\}_m$ and the number of sequences $N$ sampled per data point.

Finally we would like to note that we have in this procedure kept the number of sequences $N$ constant for varying $N$. It is however possible to let $N$ depend on the sequence length $m$. One choice would be to vary $N$ proportionally to $\bbV_0^2$ (assuming a good estimate of $f$ is available). In this scenario, since $k_{m,N}$ is drawn from a distribution with variance $\bbV_m^2/N$ this would remedy the issue with heteroskedasticity and OLS could be used to provide reliable fitting.

\subsection{Gate dependent noise and gauge invariance}\label{subsec:gate_dependence}

In recent work~\cite{proctor2017randomized,wallman2018randomized} it has been noted that the relation between the parameter estimated by randomized benchmarking and the average fidelity is less than straightforward when the noise channel is allowed to depend on the gate being implemented, that is $\tilde{\mc{G}} = \mc{E}_G\mc{G}$. At the heart of the issue is that the only quantities measurable in the lab, probabilities of the form $\tr(Q\tilde{\mc{G}}(\rho))$ for a state $\rho$ and an observable $Q$ are \emph{gauge invariant}. That is, for any invertible superoperator $\mc{S}$ we have that
\begin{equation}
\tr(Q\tilde{\mc{G}}(\rho)) = \tr( \mc{S}^{-1}(Q)\mc{S}\tilde{\mc{G}}\mc{S}^{-1}(\mc{S}(\rho))).
\end{equation}
This difficulty can be remedied by considering a more general noise model. Instead of choosing $\tilde{\mc{G}} = \mc{E}\mc{G}$ one chooses $\mc{\tilde{G}} = \mc{L}_G\mc{G}\mc{R}_G$ for superoperators $\mc{R}_G,\mc{L}_G$~\cite{wallman2018randomized}. The individual operators $\mc{R}_G,\mc{L}_G$ are not gauge invariant but the combined operator $\mc{R}_G\mc{L}_G$ is. Since in this paper we deal exclusively with gate-independent noise we can choose the gauge such that $\mc{L} = \mc{I}$ and $\mc{R} =\mc{E}$ but our results also hold for the more general choice of gauge with the express caveat that our bounds then work in terms of the infidelity $r$ and unitarity $u$ of the noise in between gates $\mc{RL}$. That is we have $r = r(\mc{RL})$ and $u = u(\mc{RL})$. It is possible to see this explicitly by making the substitution $\mc{E}\rightarrow \mc{RL}$ in all steps of the derivation of the variance bound in \cref{subsec:variance_bound} (and \cref{thm:variance bound} in the Supplementary Material).

\subsection{Variance bound}\label{subsec:variance_bound}

In this section we present a derivation of the multi-qubit variance bound in
\cref{variance_final} under the assumption of ideal input difference operator $\nu = \frac{1}{2}(\rho-\hat{\rho})$ and output POVM element $Q$, i.e.
\begin{gather}\label{eq:def_spamfree}
\nu = \frac{\mathbf{P}}{2d}\\
Q = \frac{1}{2}(\id + \mathbf{P})
\end{gather}
where $\mathbf{P}$ is some pre-specified target Pauli matrix (\cref{box:randomized_benchmarking}). Under these ideal conditions we can guarantee that the variance scales quadratically in the infidelity $r$. We will focus on intuition and relegate most technical work to the Supplementary Material.
For the remainder of the text we will choose a basis for the space of linear operators $\M$. This means we can think of density matrices and POVM elements as column and row vectors which we denote with a Dirac-like notation, i.e. $\nu \rightarrow |\nu\raa$ and $Q\rightarrow \laa Q|$. Quantum channels can then be though of as matrices acting on vectors (which represent density matrices). Moreover, in this picture, composition of channels corresponds to matrix multiplication. When measuring the state $\mc{E}(\rho)$ using a two component POVM $\{Q, \id-Q\}$ for some quantum channel $\mc{E}$ and state $\rho$ and positive operator $Q$ we can write the expectation value $\tr(Q\mc{E}(\rho)$ as a vector inner product
\begin{equation}
\tr(Q\mc{E}(\rho))  = \laa Q|\mc{E}(\rho)\raa = \laa Q|\mc{E}|\rho\raa
\end{equation}
where we abuse notation by referring to the matrix representation of the quantum channel $\mc{E}$ as $\mc{E}$ as well. This is variously called the affine or Liouville representation~\cite{Wolf2012,Wallman2014}.

We assume that every experimental implementation of a Clifford gate $\mc{\tilde{G}}$
can be written as $\mc{\tilde{G}} = \mc{E}\mc{G}$ for some fixed CPTP map $\mc{E}$ where $G$
is the ideal Clifford gate. That is, we assume the noise is Markovian, constant
and independent of the target gate. These assumptions can be
relaxed partially~\cite{Epstein2014, Wallman2014, Wallman2015d,wallman2018randomized}.

The key to randomized benchmarking is that randomly applying elements of the
Clifford group $\md{C}$ and then inverting produces, on average, the depolarizing
channel~\cite{DiVincenzo2001}
\begin{align}\label{def:depolarizing}
	\cD_{f}(\rho) = f\rho + \frac{1-f}{d}\id_d,
\end{align}
that is, we have
\begin{equation}
\sum_{G\in {\md{C}_q}} \mc{G}\ct\mc{E} \mc{G} = {\cD}_{f}
\end{equation}
with the depolarizing parameter $f$ related to the fidelity
by~\cite{Nielsen2002}
\begin{equation}
F_{\mathrm{avg}}(\mc{E},\mc{I}) = \frac{(d-1)f+ 1}{d}\ .
\end{equation}
Therefore applying a sequence of independently-random gates and then inverting
produces $\cD_{f^m}$ on average. Hence the expectation value of any operator
decays as $f^m$ on average.

The value of $k_m(\vec{G})$ for a fixed sequence of Clifford gates $\vec{G}$ (as defined in \cref{box:randomized_benchmarking}), and the variance
over $\vec{G}\in\bbC_q$ are
\begin{align}\label{RB_variance}
k_m(\vec{G}) &= \laa Q|\mc{G}_m\ct \mc{E} \mc{G}_m\cdots \mc{G}_1\ct \mc{E} \mc{G}_1|\nu\raa\\
\bbV_m^2 &= \bbE_{\vec{G}}[k_m(\vec{G})^2] -
[\bbE_{\vec{G}}(k_m(\vec{G}))]^2
\end{align}
respectively.
We can use the identity $a^2 = a\otimes a$ for $a\in\bbC$, the
distributivity and associativity of the tensor product, and the linearity of
quantum channels to write this as~\cite{Magesan2011a,Wallman2014}
\begin{align}\label{eq:variance}
\bbV_m^2 = \laa Q\tn{2}\big|T_\md{C}(\mc{E}\tn{2})^{m} \!- \!\left[T_\md{C}(\mc{E})^{m}\right]\tn{2}\!\big|\nu\tn{2}\raa
\end{align}
where
\begin{align}\label{eq:twirls}
T_{\md{C}}(\mc{E}) &= \frac{1}{|\md{C}_q|}\sum_{G\in \md{C}_q}\mc{G}\ct\mc{E}\mc{G} = \cD_f,\\
T_{\md{C}}(\mc{E}\tn{2}) &= \frac{1}{|\md{C}_q|}\sum_{G\in \md{C}_q}{\mc{G}\ct}\tn{2}\mc{E}\tn{2}\mc{G}\tn{2}.
\end{align}
The superoperator $T_\md{C}(\mc{E})$ is often referred to as the \emph{twirl} of the quantum channel $\mc{E}$.

At this point, our analysis diverges from that of Ref.~\cite{Wallman2014}.
First, note that for our modified scheme, $\nu\tn{2}$ is traceless and symmetric under the interchange of the tensor factors (we will refer to such a matrix as a traceless symmetric matrix) so
\begin{align}
\left[T_\md{C}(\mc{E})^{m}\right]\tn{2}\big|\nu\tn{2}\raa = f^{2m}\big|\nu\tn{2}\raa.
\end{align}
Furthermore, $T_\md{C}(\mc{E}\tn{2})$ preserves the trace and symmetry under interchange of tensor factors.
Therefore we can define $T_{\rm{TS}}(\mc{E}\tn{2})$ to be the restriction of $T_\md{C}(\mc{E}\tn{2})$ to the space of traceless symmetric matrices.
As we prove in \cref{lem:Clifford_irreps} and \cite{Clifford2016}, the representation $\mc{G}\tn{2}$ of the Clifford group restricted to the traceless symmetric subspace decomposes into inequivalent irreducible representations.
Therefore by Schur's lemma (see Supplementary Materials for an explanation of Schur's lemma),
\begin{equation}\label{eq:twirl_two_copy_main}
T_{\rm{TS}}(\mc{E}\tn{2}) = \sum_{i\in \mc{Z}} \chi_i\mc{P}_i
\end{equation}
where $\mc{Z}$ indexes the irreducible subrepresentations of $\mc{G}\tn{2}$ on the space of traceless symmetric matrices, $\mc{P}_i$ are projectors associated to each representation and $\chi_i = \chi_i(\mc{E})\in \mathbb{R}$ are numbers that depend on the quantum channel $\mc{E}$. \footnote{When $\nu$ is not traceless, as is the case in regular randomized benchmarking, we can not restrict  the two-copy twirl to a twirl over the traceless-symmetric subspace. However the derivation below will still hold, up to the addition of extra terms stemming from equivalent irreducible subrepresentations present in \cref{eq:twirl_two_copy_main}. This extra term is discussed in \cref{subsec:rel_to_reg_bench} and also \cite{Wallman2014}}
As the $\mc{P}_i$ are orthogonal projectors that span the space of traceless symmetric matrices, we can write the variance as
\begin{align}\label{eq:variance_projectors_main}
\bbV_m^2 &=\sum_{i\in \mc{Z}}\laa Q\tn{2}|\mc{P}_i|\nu\tn{2}\raa (\chi_i^m - f^{2m}).
\end{align}
Now we use a telescoping series trick~(\cref{lem:telescoping series} and in particular \cref{cor:second_order_series}) on the last factor to write this as
\begin{align}\label{eq:main_mid_variance}
\bbV_m^2 &=\!\sum_{i\in \mc{Z}}\laa Q\tn{2}|\mc{P}_i|\nu\tn{2}\raa\Big[mf^{2(m-1)}(\chi_i-f^2) \\&\hspace{10mm}+\!(\chi_i\!-\!f^2)^2\sum_{j=1}^m (j-1)\chi_i^{m-j}f^{2(j-2)}\Big].
\end{align}
Here we see that getting a sharp bound on the variance will depend on getting sharp bounds on the difference between the $\chi_i$ prefactors and the square of the depolarizing parameter $f^2$. Before we start giving upper bounds to \cref{eq:main_mid_variance}, we would like to note that the behavior of \cref{eq:main_mid_variance} with respect to the sequence length $m$ is very well matched to that of the final upper bounds given in \cref{variance_final,variance_final_SPAM}. This justifies the use of \cref{variance_final,variance_final_SPAM} to set the weights in \cref{alg:IRLS}.\\

\noindent Up to this point the derivation has been valid for any input state difference $\nu$ and output positive operator $Q$. However now we will restrict to the case of ideal $Q$ and $\nu$. For the general case of non-ideal $Q$ and $\nu$ see the Supplementary materials. In the case of ideal $Q$ and $\nu$ we can use \cref{lem:projector_lemma,lem:diagonal_squared_channel} to upper bound
\begin{equation}\label{eq:r_squared_upper_bound}
\sum_{i\in \mc{Z}}\laa Q\tn{2}|\mc{P}_i|\nu\tn{2}\raa(\chi_i-f^2)\leq \frac{1}{4}\frac{d^2-1}{(d-1)^2}r^2
\end{equation}
where $r = 1-  F_{\mathrm{avg}}(\mc{E},I)$ is the infidelity of the quantum channel.
We would like to note here that \cref{lem:projector_lemma} can only be applied if $\nu$  ($Q$) are proportional to $\mathbf{P}$ ($\id + \mathbf{P}$). Moreover, we will see in \cref{subsec:opt_max_var} that without this assumption the variance of RB will scale linearly in infidelity $r$. 
Continuing the calculation, for $r\leq \frac{1}{3}$, we can say that (\cref{lem:chi_f_bound})
\begin{equation}
|\chi_i-f^2|\leq \frac{2dr}{d-1}.
\end{equation}
Hence we can say
\begin{align}
\begin{split}
\bbV_m^2 &\leq m f^{2(m-1)}\frac{d^2-2}{4(d-1)^2}r^2 \\&\hspace{5mm}+ \sum_{i\in \mc{Z}}\frac{4d^2r^2\laa Q\tn{2}|\mc{P}_i|\nu\tn{2}\raa}{(d-1)^2}\\&\hspace{10mm}\times \sum_{j=1}^m (j-1)\chi_i^{m-j}f^{2(j-2)}
\end{split}
\end{align}
for ideal $Q$ and $\nu$.
Now we only need to deal with the $\chi_i$ factors in the sum. To do this we will use the fact that every $\chi_i$ term is upper bounded by the unitarity $u$ of the quantum channel $\mc{E}$. This is derived in \cref{lem:unitarity_upper_bound} in the Supplementary Material. Inserting this we get
\begin{align}
\begin{split}
\bbV_m^2 &\leq m f^{2(m-1)}\frac{d^2-2}{4(d-1)^2}r^2 \\&\hspace{5mm}+ \sum_{i\in \mc{Z}}\frac{4d^2r^2\laa Q\tn{2}|\mc{P}_i|\nu\tn{2}\raa}{(d-1)^2}\\&\hspace{10mm}\times \sum_{j=1}^m (j-1)u^{m-j}f^{2(j-2)}.
\end{split}
\end{align}
Now we factor $u^{m-2}$ out of the sum over $j$ and use the fact that this sum has a closed form.
Using this and \cref{lem:projector_lemma} to bound the projector inner products we obtain a final bound on the variance
\begin{align}\label{eq:var_proof_main_final}
\begin{split}
\bbV_m^2 &\leq m f^{2(m-1)}\frac{d^2-2}{4(d-1)^2}r^2 \\&\hspace{5mm}+ \frac{d^2}{(d-1)^2}r^2u^{m-2}\\&\hspace{10mm}\times \frac{(m\!-\!1)(\frac{f^2}{u})^m \!- \!m(\frac{f^2}{u})^{m-1} \!+\!1}{(1\!-\!(\frac{f^2}{u}))^2},
\end{split}
\end{align}
which is the bound we set out to find. To obtain from this the bound given in \cref{variance_small_m} we note that $u\geq f^2$ and moreover that the fractional term in \cref{eq:var_proof_main_final} is monotonically decreasing in $u$ (for fixed $f^2$) and reaches a limiting value of $m(m-1)/2$ in the limit of $u\rightarrow f^2$ (This can be seen by using l'H$\hat{\text{o}}$pital's rule).

\subsection{State preparation and measurement}

When $Q, \nu $ do not satisfy \cref{eq:def_spamfree}, (which will always happen in practice) the above derivation will not hold exactly and the deviation of $Q,\nu$ from their ideal forms will introduce terms of order $\eta r$ i.e., terms which scale linearly and not quadratically in the infidelity $r$. Deriving an expression of the variance taking into account these these contributions is a little tedious so we will relegate it to the Supplementary Material and instead discuss the form of the prefactor $\eta$. Let $\nu$ be some non-ideal input state difference and let $Q$ be some non-ideal observable. Note from \cref{eq:def_spamfree} that the ideal input state difference $\nu$ and output POVM $Q$ are related to a pre-chosen ``target Pauli matrix'' $\mathbf{P}$. We hence have
\begin{align}
Q_{\mathrm{id}} &= \frac{1}{2}(\id +{\mathbf{P}})\\
\nu_{\mathrm{id}} &= \frac{\mathbf{P}}{2d}
\end{align}
the ideal $Q$ and $\nu$. Suppressing some prefactors (the exact expression can be found in \cref{eq:eta_full_term} in the Supplementary material) we get the following approximate expression for the SPAM factor $\eta$:
\begin{align}
\begin{split}
\eta &\approx \norm{Q - Q_{\mathrm{id}}}_2\norm{\nu - \nu_{\mathrm{id}}}_2\\&\hspace{5mm} + \norm{Q - Q_{\mathrm{id}}}_2^2 + \norm{\nu - \nu_{\mathrm{id}}}_2^2
\end{split}
\end{align}
where $\norm{\cdot}_2$ is the Schatten-$2$ norm~\cite{Wolf2012} and $Q,\nu$ are the non-ideal operators that are actually implemented. There are several important things to notice here:
\begin{itemize}
	\item $\eta$ goes to zero in the limit of ideal $Q,\nu$. This justifies our choice of the ideal $Q$ and $\nu$ as being defined in terms of a single Pauli matrix rather than preparing and measuring in the $\ket{0}$ state as was the case in the original randomized benchmarking proposal~\cite{Magesan2012a}
	\item $\eta$ scales quadratically in the deviation from the ideal of $Q$ and $\nu$. This means that for small deviations $\eta$ is likely to be small.
	\item $\eta$ is non-zero for non-ideal $Q$ even when $\nu$ is ideal and vice versa. This is unfortunate as it means that both state preparation and measurement must be good to ensure small variance. However, as we argue in \cref{subsec:opt_max_var}, this is actually optimal.
\end{itemize}

To get a feel for how the parameter $\eta$ behaves we discuss a particular error model for state preparation and measurement errors, inspired by recent research in superconducting qubits~\cite{Sagastizabal2018}. Here we see that the dominant error source when preparing states in the computational basis is given by decay to the ground state when in the excited ($\ket{1}$) state and residual excitations when preparing the ground ($\ket{0}$) state. The dominant contribution to measurement errors when measuring in the computational basis are here discrimination errors (mistaking $0$ for $1$ and vice versa) as well as errors due to finite sampling. When performing our version of RB, and choosing $\mathbf{P}=Z$, we see that $\nu_{\mathrm{id}} = (\dens{0}-\dens{1})/2$ and hence we want to ideally prepare the states $\dens{0}$ and $\dens{1}$. Following~\cite{Sagastizabal2018} we assume $0.5\%$ residual excitations when preparing the $\dens{0}$ state, $0.8\%$ decay to the ground when preparing $\dens{1}$ and a $1\%$ discrimination error (modeled by a symmetric bit-flip channel) (Here we use the discrimination fidelity given in~\cite{riste2012initialization}). Plugging these numbers into the assumed error models and calculating $\eta$ using \cref{eq:eta_full_term} in the Supplementary Material we see that in this case $\eta = 0.001$. Hence we can say that under realistic scenarios $\eta$ will be quite small. It is possible to make a more fine-grained analysis of the SPAM term $\eta$ as it is defined under \cref{eq:eta}, as opposed to upper bounding it. However this is likely to be rather involved and given that $\eta$ is already small in realistic scenarios we have opted not to pursue this here.

 \subsection{Optimality of maximal variance}\label{subsec:opt_max_var}
 In this section we will argue that the bounds on the variance in the case of non-ideal SPAM are optimal in the sense that it is impossible for the variance to scale better that linearly in the infidelity $r$ for arbitrary noise maps when the input POVM element $Q$ is non-ideal even when the input state difference $\nu$ is ideal. The same reasoning will also hold for non-ideal $\nu$ even when $Q$ is ideal. (More generally the reasoning below will also work when randomized benchmarking is performed using a state rather than a state difference but we will not show this explicitly here).

Consider the variance as in \cref{eq:variance} for a randomized benchmarking experiment with a quantum channel $\mc{E}$ with infidelity $r$ and for simplicity set the sequence length $m=1$ (the argument will work for general $m$). Then we have an expression for the variance
\begin{equation}
\bbV^2 = \laa Q\tn{2}\big|T_\md{C}(\mc{E}\tn{2})- T_\md{C}(\mc{E})\tn{2}\big|\nu\tn{2}\raa
\end{equation}
with the $T_\md{C}(\mc{E}\tn{2}), T_\md{C}(\mc{E})\tn{2}$ defined in \cref{eq:twirls}. Now consider setting $\nu = \nu_{\mathrm{id}}$ and maximizing over the POVM element $Q$. That is consider

\begin{equation*}
\bbV^2 =\max_{0\leq Q\leq \id} \laa Q\tn{2}\big|T_\md{C}(\mc{E}\tn{2})- T_\md{C}(\mc{E})\tn{2}\big|\nu_{\mathrm{id}}\tn{2}\raa.
\end{equation*}
Now note that for any unitary $U$ the operator $\mc{U}(Q) = UQU\ct$ is also a POVM element. This means we can write
\begin{align*}
\bbV^2 &= \max_{0\leq Q\leq \id} \laa Q\tn{2}\big|T_\md{C}(\mc{E}\tn{2})- T_\md{C}(\mc{E})\tn{2}\big|\nu_{\mathrm{id}}\tn{2}\raa\\
&= \max_{0\leq Q\leq \id} \laa (\mc{U}(Q))\tn{2}\big|T_\md{C}(\mc{E}\tn{2})- T_\md{C}(\mc{E})\tn{2}\big|\nu_{\mathrm{id}}\tn{2}\raa\\
&\geq \max_{0\leq Q\leq \id} \laa\int dU (\mc{U}(Q))\tn{2}\big|T_\md{C}(\mc{E}\tn{2})\\&\hspace{42mm}- T_\md{C}(\mc{E})\tn{2}\big|\nu_{\mathrm{id}}\tn{2}\raa,
\end{align*}
where we used the linearity of the inner product and the definition of maximum and the integral is taken over the uniform or Haar measure of the unitary group. Now we use a well known fact from the representation theory of the unitary group which states that the integrated operator $\int dU (\mc{U}(Q))$ is precisely proportional to one of the projectors defined in \cref{eq:twirl_two_copy_main}.~\cite{Wallman2015}. In particular it is proportional to the rank one  projector $P_{\mathrm{tr}} = |\Delta\raa\laa\Delta|$ where $\Delta\in \M$ is some matrix operator (see \cref{lem:Clifford_irreps} in the appendix) and $\mathrm{tr}$ is an element of the set $\mc{Z}$ which indexed the irreducible representations of the Clifford group in \cref{eq:twirl_two_copy_main}. This means we can we can write using \cref{eq:variance_projectors_main}
\begin{align}
\begin{split}
\bbV^2 &\geq\max_{0\leq Q\leq \id}\sum_{i\in \mc{Z}} \alpha(Q)\laa \Delta| \mc{P}_i|\nu\raa (\chi_i-f^2)\\
&=\max_{0\leq Q\leq \id}\alpha(Q)\laa \Delta| P_{\mathrm{tr}}|\nu\raa (\chi_{\mathrm{tr}}-f^2)
\end{split}
\end{align}
where $\alpha(Q)$ is some positive prefactor function of $Q$. From \cref{lem:unitarity_upper_bound} and \cite{Wallman2015} it can be seen that $\chi_{\mathrm{tr}}$ is precisely the unitarity $u$ of the quantum channel $\mc{E}$. If we now consider $\mc{E}$ to be a unitary channel (that is $u=1$), we get (ignoring the prefactors, which can be proven to be strictly positive)
\begin{equation}
\bbV^2 \approx 1-f^2 = \frac{dr}{d-1}\left(2-\frac{dr}{d-1} \right)
\end{equation}
which is linear in infidelity $r$. Hence when the POVM element $Q$ is allowed to vary freely a linear scaling of the variance with the infidelity $r$ can not be avoided even when the input state difference $\nu$ is ideal. One can perform a similar thought experiment maximizing over $\nu$ while setting $Q = Q_{\mathrm{id}}$ and get the same result. Hence the expression for $\eta$ we discussed in the above section is essentially optimal.

\subsection{Asymptotic behavior of the variance}\label{subsec:asymp_beh_var}

When looking at the bound on the variance \cref{variance_final} the difference between unitary and non-unitary noise is striking. When the noise is non-unitary, and thus $u<1$ the upper bound on the variance (and hence the variance itself) decays exponentially to zero in the sequence length $m$ but when the noise process is unitary the variance keeps increasing and eventually saturates on a constant that is independent of the infidelity of the noise process. Here we argue that this is not an artifact of the bounding techniques but rather a fundamental feature of performing randomized benchmarking over unitary noise. Moreover this effect is independent of whether RB is performed using a state difference input $\nu$ or a state input $\rho$ (as in standard RB).\\ 
Consider a unitary noise process $\mc{U} = U\cdot U\ct$ with infidelity $r>0$ (That is $\mc{U}$ is not the identity). Now consider a randomized benchmarking experiment of sequence length $m$. That is, for a random sequence of Clifford unitaries $G_1,\ldots G_m$ we perform the unitary
\begin{equation}
V_m = U(G_m\cdots G_1)\ct UG_m U\cdots UG_1
\end{equation}
Following the reasoning of \cite{Magesan2012a} we can write $V_m$ as 
\begin{equation}
V_m = U{G'}_m\ct UG'_m \cdots {G'}_1\ct U G'_1
\end{equation}
where the unitaries $G'_m, \ldots G'_1$ are sampled uniformly at random from the Clifford group. We can equally well think of the unitary $U\ct V_m$ as being the product of $m$ uniformly random samples from the set
\begin{equation}
\md{G}_U = \{G\ct UG\;\;\|\;\; G\in \md{C}\}.
\end{equation}
Note that this set depends on the unitary $U$. 
In \cite{emerson2005convergence} it was shown that the distribution of the product of $m$ unitaries sampled uniformly at random from a set of unitaries converges to the Haar measure (uniform measure) on the unitary group in the limit of large $m$ as long as this set contains a \emph{universal} set of gates. Note that this convergence phenomenon is independent of the initial set~\footnote{This is similar to how the limiting distribution of a random walk is independent of the initial step-size}. 

Note now that as long as the unitary $U$ is not a Clifford gate the set $\md{G}_U$ will contain a universal gateset~\cite{Chuang1997}. 
This means that the distribution from which $V_m$ is sampled will converge to the Haar measure in limit of long sequence length (the extra $U\ct$ factor gets absorbed into the Haar measure). This will happen independently of the unitary $U$ (as long as $U$ is not Clifford). From this we can conclude that the variance of randomized benchmarking with unitary noise must, in the limit of long sequences, converge to the variance of the randomized benchmarking expectation value over the Haar measure independently of what the original unitary noise process is. Note again that the above argument is independent of whether RB is performed using a state difference input or a state input.

\subsection{Relation to regular randomized benchmarking}\label{subsec:rel_to_reg_bench}
When performing regular randomized benchmarking, that is using an input state $\rho = \frac{1}{2}(\id + \mathbf{P})$ rather than an input state difference $\nu = \frac{\mathbf{P}}{2}$ the upper bounds on the variance given in \cref{variance_final_SPAM,variance_final} still hold provided an extra additive term is added to them. This term will stem from the addition of an extra superoperator (that is not a projector) in the sum in \cref{eq:twirl_two_copy_main} which stem from the appearance of two equivalent trivial subrepresentations of the two-copy representation $\mc{G}\tn{2}$ of the Clifford group. This term is of the form
\begin{align}\label{eq:regular_RB_var}
\begin{split}
T &= \frac{1}{4}\norm{\mc{E}(\id/d) - \id/d}_2^2\frac{1-u^m}{1-u}\\&\leq \frac{(d+1)^2}{2d^2}r^2 \frac{1-u^m}{1-u}
\end{split}
\end{align}
where $\mc{E}$ is the noise process under investigation, with infidelity $r$ and unitarity $u$ and system dimension $d$. Here $\norm{\mc{E}(\id/d) - \id/d}_2^2$ is a measure of how `non-unital' the quantum channel $\mc{E}$, that is how far its output deviates from the identity when the identity is the input. This measure can be upper bounded using~\cite[Theorem 3]{Wallman2015e} and is already implicitly analyzed in~\cite{Wallman2014}. We will not prove the above explicitly but it can be derived straightforwardly by following the derivation in \cref{thm:variance bound} using $\rho$ as input state. Note however that the upper bound on $T$ does not decay to zero exponentially but rather converges to a non-zero constant even for non-unitary channels. This is not a feature of the upper bound itself but rather of the long sequence behavior of standard randomized benchmarking. It was proven in \cite[Theorem 17]{Wallman2014} that the upper bound $T$ is actually saturated for almost all non-unitary channels. Moreover, for physically relevant noise models such as amplitude damping $T$ can be quite substantial. This very different behavior in the limit of long sequence lengths further motivates the use the state difference $\nu$ for rigorous randomized benchmarking.

\begin{acknowledgments}
We would like to thank Le Phuc Thinh, Jeremy Ribeiro and David Elkouss for
enlightening discussions. We would also like to thank Bas Dirkse for pointing out several typos in an earlier version of this paper and diligently checking all proofs in the newest version. We would also like to thank Ramiro Sagastizabal and Adriaan Rol for their suggestions on the dominant SPAM errors in superconducting qubits. JH and SW are funded by STW Netherlands, NWO VIDI and
an ERC Starting Grant. This research was supported by the U.S. Army Research
Office through grant numbers W911NF-14-1-0098 and W911NF-14-1-0103.
STF was supported by the Australian Research Council via EQuS project number CE11001013 and by an Australian Research Council Future Fellowship FT130101744.
\end{acknowledgments}

\bibliography{RBlibrary}

\onecolumngrid

\newpage
\appendix

\section{Preliminaries}
\subsection{Clifford and Pauli groups}
In this section we recall definitions for the Pauli and Clifford groups on $q$ qubits. We begin by defining the Pauli group.
\begin{definition}[Pauli group]
Let $\{v_0,v_1\}$ be an orthonormal basis of $\mathbb{C}^2$ and in this basis define the following linear operators by their action on the basis
\begin{align*}
X v_l = v_{l+1},\
Z v_l =(-1)^lv_l,\
Yv_l = iZXv_l = i(-1)^{l+1}v_{l+1},
\end{align*}
for $l\in \{0,1\}$ and addition over indices is taken modulo $2$. Note that $X,Y,Z \in U(2)$. The $q$-qubit Pauli group $\mc{P}_q$ is now defined as the subgroup of the unitary group $\mathrm{U}(2^q)$ consisting of all $q$-fold tensor products of $q$ elements of $\mc{P}_1:=\langle X,Z,i\id_2 \rangle$.
\end{definition}
Elements $P, P'$ of the Pauli group have the property that they either \emph{commute} or \emph{anti-commute}, that is
\begin{equation}
[P,P'] := PP' - P'P = 0 \;\;\;\;\;\;\;\;\;\;\;\; \text{or}\;\;\;\;\;\;\;\;\;\;\;\;\{P,P'\} := PP' + P'P = 0.
\end{equation}

We also define $\hat{\mc{P}}_q $ as the subset of $\mc{P}_q$ consisting of all $q$-fold tensor products of element of the set $\{\id,X,Y,Z\}$, i.e.$\hat{\mc{P}}_q = \{\id,X,Y,Z\}\tn{q}$. Note that the Hermitian subset $\hat{\mc{P}}_q$ of the
Pauli group forms a basis for the Hilbert space
$\M$. We can turn this into an orthonormal basis under
the Hilbert-Schmidt inner product which is defined as
\begin{equation}
\inp{A}{B} := \tr( A\ct B) ,\;\;\;\;\;\;\;\;\;\;\;\;\;\forall A,B\in \M.
\end{equation}
To see this note that $\tr(P) = 0$ for all $P \in \mc{P}_q/\{\id\}$ and that $\tr(\id)=d$. We introduce the set of \emph{normalized} Hermitian Pauli matrices.
\begin{equation}
\sigma_0 := \frac{\id}{\sqrt{d}},\hspace{3mm} {\bsq} := \left\{\frac{P}{\sqrt{d}}\;\;\|\;\; P\in \hat{\mc{P}}_q \backslash \{\id\}\right\},
\end{equation}
where we have given the normalized identity its own symbol for later convenience. We will denote the elements of the set $\bsq$ by Greek letters ($\sigma, \tau, \nu,...$). We also, for later convenience, introduce the \emph{normalized} matrix product of two normalized Pauli matrices as
\begin{equation}
\sigma\cdot \tau  := \sqrt{d} \sigma\tau \hspace{5mm}\sigma,\tau\in \bsq\cup \sigma_0.
\end{equation}
Note that $\sigma\cdot \tau \in \pm\bsq\cup\sigma_0$ if $[\sigma, \tau]=0$ and $i\sigma\cdot \tau \in \pm\bsq$ if $\{\sigma, \tau\}=0$.
Lastly we define the following parametrized subsets of $\bsq$ .For all $\tau
\in\bsq$ we define
\begin{align}
{\bf{N}_\tau}:= \{\sigma \in {\bsq} \;\;|\;\; \{\sigma,\tau\}=0\},\\
{\bf{C}_\tau}:= \{\sigma \in {\bsq}\backslash\{\tau\} \;\;|\;\; [\sigma,\tau]=0\},
\end{align}
Note that we have $|{\bf{N}}_\tau| = \frac{d^2}{2}$, $|{\bf{C}}_\tau| =\frac{d^2}{2}-2$ and
${\bf{C}}_\tau$ and ${\bf{N}}_\tau$ are disjoint for all $\tau \in \bsq$. We also have for $\sigma, \sigma'\in \bsq$ and $\sigma \neq \sigma'$ that $|{\bf{C}}_\sigma \cap {\bf{C}}_{\sigma'}| = \frac{d^2}{4}-3$. For a proof of this see \cite[Lemma 1]{Clifford2016}.

Next we define the Clifford group. We have
\begin{definition}
 The $q$-qubit Clifford group $\md{C}_q$ is the normalizer (up to complex phases) of
$\mc{P}_q$ in $\mathrm{U}(2^q)$, that is,
\begin{align*}
\md{C}_q := \{U\in\mathrm{U}(2^q)\;\;\|\;\; U\mathcal{P}_q U\ct \subseteq \mathcal{P}_q \}/U(1).
\end{align*}
\end{definition}
The Clifford group is also often introduced as the group generated by the Hadamard (H), $\pi/4$ phase gate and CNOT gates on all qubits. These are equivalent definitions (up to global phases)~\cite{farinholt2014ideal}.

For a more expansive introduction to the Pauli and Clifford groups see e.g.~\cite{farinholt2014ideal} and references therein.

\subsection{Representation theory}
We recall some useful facts about the representations of finite groups. For a more in depth treatment of this topic we refer to~\cite{Fulton2004,Goodman2009}.
Let $\md{G}$ be a finite group and let $V$ be some finite dimensional complex vector space. Let also $\md{GL}(V)$ be the group of invertible linear transformations of $V$. We can define a \emph{representation} $\phi$ of the group $G$ on the space $V$ as a map
\begin{equation}\label{eq:representation}
\phi:\md{G} \rightarrow \md{GL}(V): g\mapsto\phi(g)
\end{equation}
that has the property
\begin{equation}
\phi(g)\phi(h) = \phi(gh),\;\;\;\;\;\; \forall g,h\in \md{G}.
\end{equation}
In general we will assume the operators $\phi(g)$ to be unitary.
If there is a non-trivial subspace $W$ of $V$ such that
\begin{equation}\label{eq:subrep}
\phi(g)W\subset W,\;\;\;\;\;\;\forall g\in \md{G},
\end{equation}
then the representation $\phi$ is called \emph{reducible}. The restriction of $\phi$ to the subspace $W$ is also a representation, which we call a \emph{subrepresentation} of $\phi$. If there are no non-trivial subspaces $W$  such that \cref{eq:subrep} holds the representation $\phi$ is called \emph{irreducible}.
Two representations $\phi,\phi'$ of a group $G$ on spaces $V,V'$ are called \emph{equivalent} if there exists an invertible linear map $T:V\rightarrow V'$ such that
\begin{equation}
T\circ\phi(g) = \phi'(g)\circ T,\;\;\;\;\;\;\;  \forall g\in \md{G}.
\end{equation}
We can also define the twirl $\cT_\phi(A)$ of a linear map $A:V\to V$ with respect to the representation $\phi$ to be
\begin{equation}
\cT_{\phi}(A):= \frac{1}{|G|}\sum_{g\in \md{G}} \phi(g)A\phi(g)\ct.
\end{equation}
The following corollary of Schur's lemma, an essential result from representation theory.~\cite{Fulton2004,Goodman2009}, allows us to evaluate twirls over certain types of representations.
\begin{lemma}\label{lem:Schur}
Let $\md{G}$ be a finite group and let $\phi$ be a representation of $\md{G}$ on a complex vector space $V$ with decomposition
\begin{align}
\phi(g) \simeq \bigoplus_i \phi_i(g),\;\;\;\;\;\;\forall g\in \md{G}
\end{align}
into inequivalent irreducible subrepresentations $\phi_i$.
Then for any linear operator $A$ from $V$ to $V$, the twirl of $A$ over $\md{G}$ takes the form
\begin{align}
\cT_{\phi}(A) = \frac{1}{| G |} \sum_{g\in \md{G}} \phi(g)A\phi(g)\ct = \sum_i \frac{\tr(AP_i)}{\tr(P_i)} P_i .
\end{align}
where $P_i$ is the projector onto the subspace carrying the irreducible subrepresentation $\phi_i$. In the rest of the text we will often denote the prefactor $\tr (AP_i)/\tr(P_i)$ by $\chi_i$.
\end{lemma}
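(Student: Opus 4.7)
The plan is to recognize that $\cT_\phi(A)$ lies in the commutant of the representation $\phi$, and then to apply Schur's lemma block by block to the given decomposition into inequivalent irreducibles.

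First, I would establish the equivariance of the twirl: for every $h \in \md{G}$,
\begin{equation}
\phi(h)\,\cT_\phi(A)\,\phi(h)\ct = \frac{1}{|\md{G}|}\sum_{g \in \md{G}} \phi(hg)\,A\,\phi(hg)\ct = \cT_\phi(A),
\end{equation}
since left-multiplication by $h$ is a bijection on $\md{G}$ and $\phi(h)\phi(g) = \phi(hg)$. Hence $\cT_\phi(A)$ commutes with $\phi(h)$ for all $h$, i.e.\ it is an intertwiner from $\phi$ to itself.

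Second, I would exploit the decomposition into inequivalent irreducibles to pin down the form of any such intertwiner. Each subspace $V_i$ carrying $\phi_i$ is $\phi$-invariant, so the orthogonal projector $P_i$ commutes with every $\phi(g)$. Writing $\cT_\phi(A) = \sum_{i,j} P_i\,\cT_\phi(A)\,P_j$, each block $P_i\,\cT_\phi(A)\,P_j$ intertwines $\phi_j$ with $\phi_i$. Schur's lemma, in its standard form, asserts that any intertwiner between two inequivalent irreducibles vanishes, and any intertwiner from an irreducible to itself is a scalar multiple of the identity on that subspace. Consequently the off-diagonal blocks vanish and $\cT_\phi(A) = \sum_i c_i P_i$ for some scalars $c_i$.

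Third, I would determine the coefficients $c_i$ by computing $\tr(\cT_\phi(A) P_j)$ two ways. On one hand, orthogonality of the $P_i$ gives $\tr(\cT_\phi(A) P_j) = c_j \tr(P_j)$. On the other hand, using cyclicity of the trace together with $P_j \phi(g) = \phi(g) P_j$,
\begin{equation}
\tr\!\left(\cT_\phi(A)\,P_j\right) = \frac{1}{|\md{G}|}\sum_{g\in \md{G}} \tr\!\left(\phi(g)\,A\,\phi(g)\ct P_j\right) = \frac{1}{|\md{G}|}\sum_{g\in \md{G}} \tr\!\left(A\,P_j\right) = \tr(A\,P_j),
\end{equation}
so $c_j = \tr(A P_j)/\tr(P_j)$, as claimed.

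The main subtlety, rather than any hard calculation, is the careful use of the inequivalence hypothesis: it is precisely this assumption that kills the off-diagonal blocks and reduces the diagonal blocks to scalars. If some $\phi_i$ appeared with multiplicity greater than one, as happens elsewhere in this paper for equivalent trivial subrepresentations, the commutant would contain additional intertwiners mixing the equivalent copies, and the twirl would acquire non-projector contributions consistent with the extra term flagged after \cref{eq:twirl_two_copy_main} and analyzed in \cref{ssec:rel_to_reg_bench}.
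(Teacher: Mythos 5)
Your proof is correct. The paper does not actually prove this lemma---it states it as a standard corollary of Schur's lemma and cites the representation-theory literature---and your argument is precisely the canonical derivation: equivariance of the twirl places it in the commutant, inequivalence of the irreducibles kills the off-diagonal blocks and scalarizes the diagonal ones, and the trace computation (using $\phi(g)\ct P_j \phi(g) = P_j$, which relies on the paper's standing assumption that the $\phi(g)$ are unitary) fixes the coefficients. Your closing remark about multiplicities is also accurate and consistent with the extra non-projector term the paper flags for the case of equivalent trivial subrepresentations.
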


\subsection{Liouville representation of quantum channels}

Quantum channels~\cite{Chuang1997} are completely positive and trace-preserving (CPTP) linear maps $\mc{E}:\M \rightarrow \M$.
We will denote quantum channels by calligraphic font throughout.
The canonical example of a quantum channel is conjugation by a unitary $U$, which we denote by the corresponding calligraphic letter, i.e., $\cU(\rho) = U\rho U\ct$ for all density matrices $\rho$. We will denote the noisy implementation of a channel by an overset tilde, e.g., $\tilde{\mc{G}}$ denotes a noisy implementation some ideal quantum channel channel $\cG$.

It is often useful to think of quantum channels as matrices acting on vectors. This is variously known as the Liouville~\cite{Wallman2014} or affine~\cite{Wolf2012} representation.
This representation corresponds to fixing an orthonormal basis for $\M$ according to the Hilbert-Schmidt or trace-inner product and then expressing elements of $\M$ as vectors in $\bbC^{d^2}$. The Hilbert-Schmidt inner product is again defined as
\begin{equation}
\inp{A}{B} := \tr(AB\ct),\;\;\;\;\;\; \forall A,B \in \M.
\end{equation}
Now let $\{B_j\}_j$ for $j\in\bbZ_{d^2}$ be an orthonormal basis for $\bbC^{d\times d}$ with respect to the Hilbert-Schmidt inner product. We can
construct a map $|.\raa:\M\to \bbC^{d^2}$ by setting $|B_j\raa = e_j$ where $e_j$ is the $j$th canonical basis vector for $\bbC^{d^2}$. Linearly extending the map $|\cdot \raa$ to all elements $M\in \M$ we get
\begin{align}
|M\raa = \sum_j \tr(B_j\ct M)|B_j\raa.
\end{align}
Defining $\laa M| = |M\raa\ct$, we then have
\begin{align}
\laa M|N\raa =\inp{M}{N} = \tr(M\ct N),
\end{align}
so that the Hilbert-Schmidt inner product is equivalent to the standard vector inner product.

We will generally construct the Liouville representation using the basis spanned by the \emph{normalized} (with respect to the Hilbert-Schmidt inner product) Pauli matrices $\{\sigma_0\}\cup \boldsymbol{\sigma}_q$ where
$\sigma_0 := I_d/\sqrt{d}$ with $d = 2^q$ is the normalized identity matrix and
\begin{equation}\label{pauli_norm}
\boldsymbol{\sigma}_q :=\frac{1}{\sqrt{d}}\{I_2,X,Y,Z\}\tn{q}\backslash\{\sigma_0\},
\end{equation}
is the set of normalized Hermitian Pauli matrices excluding the identity.

As any quantum channel $\mc{E}$ is a linear map from $\M$ to $\M$ we have
\begin{align}
|\mc{E}(\rho)\raa = \sum_{\sigma \in \bsq\cup \sigma_0} |\mc{E}(\sigma)\raa\!\laa \sigma|\rho\raa,
\end{align}
so that we can represent $\mc{E}$ by the matrix
\begin{align}
\mc{E} = \sum_{\sigma \in \bsq\cup \sigma_0} |\mc{E}(\sigma)\raa\!\laa \sigma|,
\end{align}
where we abuse notation by using the same symbol to refer to an abstract channel and its matrix representation.
The action of a channel $\mc{E}$ on a density matrix $\rho$ now corresponds to the standard matrix action on the vector $|\rho\raa$, hence for a density matrix $\rho$ and a POVM element $Q$ in $\M$ we have
\begin{align}
\mc{E}|\rho\raa &= |\mc{E}(\rho)\raa,\\
\tr(Q\mc{E}(\rho)) &= \laa Q|\mc{E}|\rho\raa.
\end{align}

The Liouville representation has the nice properties (as can be easily checked) that the composition of quantum channels is equivalent to matrix multiplication of their Liouville matrices and that tensor products of channels correspond to tensor products of the corresponding Liouville matrices, that is, for all channels $\mc{E}_1$ and $\mc{E}_2$ and all $A\in\M
$,
\begin{align}\label{eq:composition}
|\mc{E}_1\circ\mc{E}_2(A)\raa &= \mc{E}_1\mc{E}_2|A\raa \notag\\
|\mc{E}_1\otimes \mc{E}_2(A\tn{2})\raa &= \mc{E}_1\otimes \mc{E}_2|A\tn{2}\raa.
\end{align}
 In the Liouville picture the depolarizing parameter and the unitarity~\cite{Wallman2015} of a quantum channel $\mc{E}$ are
\begin{align}\label{eq:liouville_pars}
	f(\mc{E}) &= \frac{1}{d^2-1}\sum_{\tau\in\bf{\sigma}_q} \laa \sigma\rv \mc{E}\lv \sigma \raa \\
	u(\mc{E}) &= \frac{1}{d^2-1}\sum_{\tau\in\bf{\sigma}_q} \laa \sigma\rv \mc{E}\mc{E}\ct\lv \sigma \raa.
\end{align}
and the Liouville representation of a depolarizing channel with depolarizing parameter $f$ is given by~\cite{Wallman2014}
\begin{equation}
\mc{D}_f = |\sigma_0\raa\laa \sigma_0| + f\sum_{\tau \in \bsq} |\tau \raa\laa\tau |.
\end{equation}

\subsection{Traceless-Symmetric representation}
In the rest of the text we will often work with quantum channels which have a tensor product structure. That is we will often be dealing with channels of the form
\begin{equation}\label{eq:traceless_symmetric_channel}
\mc{W} := \sum_{i}\lambda_i\mc{E}_i\tn{2}
\end{equation}
where $\mc{E}_i$ is a CPTP map for all $i$ and $\lambda_i\in \mathbb{C}$ is some abstract parameter. Note that $\mc{W}$ is now a linear map from $\M\tn{2}$ to $\M\tn{2}$. Maps of these form have a number of useful properties which we will now consider. We begin by defining the \emph{traceless-symmetric} subspace $\Vts$ which is a subspace of $\M\tn{2}$ of the form
\begin{equation}
\Vts:= \vsp\left\{ S_{\sigma,\tau} :=\frac{1}{\sqrt{2}}(|\sigma\tau\raa+ |\tau\sigma\raa)\;\;\|\;\;\sigma, \tau \in \bsq \right\}.
\end{equation}
where we have suppressed the tensor product (that is $\sigma\tau := \sigma\otimes \tau$).
The traceless-symmetric subspace has several desirable properties which we note here. First let $\rho, \hat{\rho}\in \M$ be density matrices and call their difference $\nu := \rho -\hat{\rho}$, then we have that
\begin{equation}
|\nu\tn{2}\raa = |(\rho - \hat{\rho})\tn{2}\raa \in V_\mathrm{TS}
\end{equation}
Moreover, for any quantum channel $\mc{W}$ of the form defined in \cref{eq:traceless_symmetric_channel} we have that
\begin{equation}
\mc{W}|v\raa \in \Vts ,\;\;\;\;\;\;\;\;\; \forall |v\raa \in \Vts,
\end{equation}
or equivalently we have that
\begin{equation}
\mc{P}_{\mathrm{TS}}\mc{W} = \mc{W}\mc{P}_{\mathrm{TS}}
\end{equation}
where $\mc{P}_{\mathrm{TS}}$ is the projector onto the space $\Vts$ (note that $\mc{P}_{\mathrm{TS}}$ is a linear map from $\M\tn{2}$ to $\M\tn{2}$). This observation follows from the fact than $\mc{W}$ is a linear combination of two-fold tensor products of quantum channels (which preserve the trace and map operators that are symmetric under interchange of the two copies of $\M\tn{2}$ to operators that are symmetric under interchange of the two copies of $\M\tn{2}$). 

We will in particular be interested in how a representation of of the Clifford group $\md{C}$ behaves on the traceless symmetric subspace. Define the two-fold tensor product representation of the Clifford group on $\M\tn{2}$ as
\begin{equation}\label{eq:two-fold tensor product}
\phi_2: G\longrightarrow \mathcal{G}\tn{2}
\end{equation}
for all where $\mc{G}$ is the Liouville representation of $G$ for all $G\in \md{C}$. This representation has a natural restriction to the subspace $V_{\mathrm{TS}}$ since $\mc{G}\tn{2}$ is of the form described in \cref{eq:traceless_symmetric_channel}.  We can define the subrepresentation $\phi_{\mathrm{TS}}$ of $\phi_2$ as

\begin{equation}\label{eq:traceless_symmetric_representation}
\phi_\mathrm{TS}: G\longrightarrow \mc{P}_{\mathrm{TS}}\mathcal{G}\tn{2}\mc{P}_{\mathrm{TS}}
\end{equation}
for all $G\in \md{C}$. This representation is in general not irreducible but decomposes further into a collection of irreducible subrepresentations. In \cite{Clifford2016} we derived these irreducible subrepresentations of $\phi_{\mathrm{TS}}$ and studied their properties. In  the following lemma we will quote several results from \cite{Clifford2016} which will be useful for our purposes.
\begin{lemma}\label{lem:Clifford_irreps}
Let $\md{C}$ be the Clifford group and let $\phi_\mathrm{TS}$ be the traceless symmetric representation. This representation is a direct sum of three subrepresentations $\phi_\mathrm{d}$ (diagonal), $\phi_\mathrm{[S]}$ (symmetric commuting) and $\phi_\mathrm{\{S\}}$ (symmetric anti-commuting) acting on the spaces
\begin{align}
V_{\mathrm{d}} &:= \vsp\left\{ |\sigma\sigma \raa \;\;\;\|\;\;\; \sigma \in \bf{\sigma}_q\right\}\tag{diagonal}\\
V_\mathrm{[S]} &:= \vsp\{S_{\nu,\nu\cdot\tau}\;\;\|\;\;\tau\in \boldsymbol{\sigma}_q,\;\;\nu\in \bf{C}_\tau\}\tag{symmetric commuting}\\
V_\mathrm{\{S\}} &:= \vsp\{S_{\nu,i \nu\cdot\tau}\;\;\|\;\;\tau\in \boldsymbol{\sigma}_q,\;\;\nu\in \bf{N}_\tau\}\tag{symmetric anti-commuting}
\end{align}
The diagonal subrepresentation $\phi_{\mathrm{d}}$ decomposes into three subrepresentations denoted by $\phi_{\mathrm{tr}},\phi_1,\phi_2$ with $\phi_\mathrm{tr}$ the trivial representation spanned by
\begin{align}
V_{\mathrm{tr}} &= \left\{\frac{1}{\sqrt{d^2-1}}\sum_{\tau \in \boldsymbol{\sigma}_q}|\tau\tau\raa\right\}. \tag{trivial}
\end{align}
We will index these representations by the set $\mc{Z}_{\mathrm{d}} := \{\mathrm{tr},1,2\}$. 

The symmetric commuting representation $\phi_{[S]}$ decomposes into $3$ irreducible subrepresentations denoted as $\phi_{[\mathrm{adj}]},\phi_{[1]},\phi_{[2]}$. We will index these representations by the set $\mc{Z}_{[S]}: = \{[\mathrm{adj}], [1], [2]\}$. The spaces carrying these representations can be written as a direct sum of subspaces in the following way
\begin{equation}\label{eq:symmetric_commuting_decomposition}
V_i = \bigoplus_{\tau\in \bsq} V^\tau_i
\end{equation}
where $V_i^\tau \subset V^{[\tau]}$ with
\begin{equation}
V^{[\tau]} := \vsp\{ S_{\nu,\nu\cdot\tau} \;\;\|\;\; \nu \in \bf{C}_\tau\}.
\end{equation}

The symmetric anti-commuting representation $\phi_{\{S\}}$ decomposes into $2$ irreducible subrepresentations denoted as $\phi_{\{1\}},\phi_{\{2\}}$. We will index these representations by the set $\mc{Z}_{\{S\}}: = \{\{1\},\{2\}\}$. The spaces carrying these representations can be written as a direct sum of subspaces in the following way
\begin{equation}\label{eq:anti_symmetric_commuting_decomposition}
V_i = \bigoplus_{\tau\in \bsq} V^\tau_i
\end{equation}
where $V_i^\tau \subset V^{\{\tau\}}$ with
\begin{equation}
V^{\{\tau\}} := \vsp\{ S_{\nu,i\nu\cdot\tau} \;\;\|\;\; \nu \in \bf{N}_\tau\}.
\end{equation}

Finally we denote the set indexing all irreducible subrepresentations of $\phi_{\mathrm{TS}}$ as $\mc{Z}  = \mc{Z}_{\mathrm{d}}\cup \mc{Z}_{[S]}\cup\mc{Z}_{\{S\}}$ and we note that all irreducible representations indexed by $\mc{Z}$ are mutually inequivalent.
\end{lemma}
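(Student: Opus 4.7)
The plan is to establish the lemma in three stages: first the coarse decomposition $V_{\mathrm{TS}} = V_{\mathrm{d}} \oplus V_{[S]} \oplus V_{\{S\}}$, then the finer decomposition of each summand, and finally the mutual inequivalence of all eight irreducibles indexed by $\mc{Z}$. The coarse splitting is a consequence of two elementary facts. First, the basis $\{S_{\sigma,\tau}\}_{\sigma,\tau\in\bsq}$ of $V_{\mathrm{TS}}$ splits into three classes according to whether $\sigma=\tau$, $[\sigma,\tau]=0$ (with $\sigma \neq \tau$), or $\{\sigma,\tau\}=0$, and the subspaces spanned by these classes are mutually orthogonal under the Hilbert-Schmidt inner product. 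Second, for any Clifford $G$ the conjugation action satisfies $\mathcal{G}(\sigma) = \pm\sigma'$ for some $\sigma'\in\bsq$, and this action preserves both the commutation relation and the (normalized) product $\sigma\cdot\tau$ up to signs that cancel in $\mathcal{G}\tn{2}|S_{\sigma,\tau}\raa$. Hence each of $V_{\mathrm{d}}$, $V_{[S]}$, $V_{\{S\}}$ is invariant under $\phi_{\mathrm{TS}}$.

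For the finer decomposition of $\phi_{\mathrm{d}}$, I would identify the restriction of $\phi_{\mathrm{TS}}$ to $V_{\mathrm{d}}$ with the permutation representation of the image of the Clifford group in $\mathrm{Sym}(\bsq)$, which is well known to be isomorphic to the symplectic group $\mathrm{Sp}(2q,\mathbb{F}_2)$ acting on $\mathbb{F}_2^{2q}\setminus\{0\}$. A classical computation (e.g.\ via the standard orbit count for ordered pairs, yielding $\langle \chi,\chi\rangle=3$) shows that this permutation representation decomposes into three inequivalent irreducibles, one of which is the trivial $\phi_{\mathrm{tr}}$ spanned by $\sum_{\tau\in\bsq}|\tau\tau\raa/\sqrt{d^2-1}$. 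The two remaining irreducibles $\phi_1,\phi_2$ can be identified by diagonalizing a small set of invariant matrices on $V_{\mathrm{d}}\ominus V_{\mathrm{tr}}$ built from the quadratic form on $\mathbb{F}_2^{2q}$.

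For $\phi_{[S]}$ and $\phi_{\{S\}}$ I would exploit the natural $\tau$-grading $V_{[S]} = \bigoplus_\tau V^{[\tau]}$ and $V_{\{S\}} = \bigoplus_\tau V^{\{\tau\}}$. The Clifford group acts on the index set $\bsq$ transitively, so these are induced representations from the stabilizer $\mathrm{Stab}(\tau)$. Using Frobenius reciprocity, one reduces the problem to decomposing the representations of $\mathrm{Stab}(\tau)$ on $V^{[\tau]}$ and $V^{\{\tau\}}$, which are again permutation-type representations on the sets $\mathbf{C}_\tau$ and $\mathbf{N}_\tau$ with additional sign structure from the product $\nu\mapsto\nu\cdot\tau$ (resp.\ $\nu\mapsto i\nu\cdot\tau$). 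A double-coset / orbit counting argument on $\mathrm{Stab}(\tau)$-orbits in $\mathbf{C}_\tau\times\mathbf{C}_\tau$ and $\mathbf{N}_\tau\times\mathbf{N}_\tau$ should yield exactly three and two irreducible summands respectively; explicit projectors can then be constructed from the symplectic form restricted to these subsets.

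The hardest part will be the last step: establishing that all eight summands are mutually inequivalent. Several of them will have comparable dimensions, so one cannot distinguish them by dimension alone. The strategy would be to compute characters of a small, carefully chosen set of representative Clifford elements (e.g.\ tensor products of $S$, $H$, and $\mathrm{CNOT}$ on a few qubits) and verify that the character tuples differ across the eight irreducibles. Alternatively, one can show that the commutant of $\phi_{\mathrm{TS}}(\md{C})$ in $\mathrm{End}(V_{\mathrm{TS}})$ is exactly eight-dimensional by producing eight linearly independent intertwiners (the projectors $\mathcal{P}_i$ of Schur's lemma) and checking that no further intertwiner exists; this is essentially a dimension count on the invariant subspace of $\phi_{\mathrm{TS}}\otimes\phi_{\mathrm{TS}}^*$. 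Either route is computationally delicate, and this is where most of the representation-theoretic work of the companion paper~\cite{Clifford2016} is concentrated.
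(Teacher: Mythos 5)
The paper itself does not prove this lemma: it is imported wholesale from the companion paper~\cite{Clifford2016}, so there is no in-paper argument to compare yours against. Judged on its own terms, your outline is sound in its first two stages but has genuine gaps exactly where the content of the lemma lies. The coarse splitting $\Vts = V_{\mathrm{d}}\oplus V_{[S]}\oplus V_{\{S\}}$ is correct and your argument for it is essentially complete: Clifford conjugation sends normalized Paulis to signed normalized Paulis and preserves (anti)commutation, so each class of basis vectors $S_{\sigma,\tau}$ is mapped into its own span (the signs need not cancel, but an overall $\pm 1$ is harmless for invariance). The diagonal block is also handled convincingly: the signs square away on $|\sigma\sigma\raa$, the action factors through $\mathrm{Sp}(2q,\mathbb{F}_2)$ acting on $\mathbb{F}_2^{2q}\setminus\{0\}$, and Witt transitivity on ordered pairs of fixed symplectic product gives permutation rank $3$; since transitivity forces the trivial representation to appear with multiplicity one, the remaining $\sum m_i^2=2$ yields exactly two further inequivalent constituents.

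The gaps are these. First, $\phi_{[S]}$ and $\phi_{\{S\}}$ are not permutation representations but monomial (signed permutation) ones --- the basis vectors $S_{\nu,\nu\cdot\tau}$ and $S_{\nu,i\nu\cdot\tau}$ acquire signs under the Clifford action --- so the dimension of the endomorphism algebra is not the naive count of $\mathrm{Stab}(\tau)$-orbits on $\mathbf{C}_\tau\times\mathbf{C}_\tau$ or $\mathbf{N}_\tau\times\mathbf{N}_\tau$, but a twisted count in which an orbit contributes only when the sign character of the pair stabilizer is trivial. You acknowledge the sign structure but then assert the outcome (``should yield exactly three and two'') without performing the computation; nothing in the proposal rules out other values. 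Second, the mutual inequivalence of all eight constituents, including across the three coarse blocks, is the crux of the lemma. Your commutant-dimension route is the right one (eight summands together with $\sum_i m_i^2 = 8$ forces all $m_i=1$), but establishing that the commutant of $\phi_{\mathrm{TS}}$ is exactly eight-dimensional is equivalent to the fourth-moment computation for the Clifford group, which is precisely the hard content of~\cite{Clifford2016} and of Zhu \emph{et al}.~\cite{Zhu2016}; asserting it is not proving it. A smaller point: the stated counts fail for $q=1$ (where $\mathbf{C}_\tau=\emptyset$) and are modified for small $q$, as the paper itself notes after \cref{eq:eta_full_term}, so a complete proof must track the range of $q$ for which the counts $3$ and $2$ hold.
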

Note that we have only given an explicit basis for the space on which the representation $\phi_{\mathrm{tr}}$ acts. It is possible to write down explicit bases for all relevant vector spaces but we will not need to do see (see however \cite{Clifford2016}).


\section{Randomized benchmarking}

\subsection{Variance bound}
In this section we prove the main theorem of the paper. Concretely we prove the following.
\begin{theorem}\label{thm:variance bound}
Let $Q$ be an observable and $\rho,\hat{\rho}$ density matrices and set $\nu=\frac{1}{2}(\rho-\hat{\rho})$. Consider a randomized benchmarking experiment using the Clifford group $\md{C}$ with noisy implementation $\mc{\tilde{G}} = \mc{E}\mc{G}$ for all $G\in \md{C}$. Then the variance $\bbV^2_m$ of this experiment is upper bounded by
\begin{align}
\begin{split}
\bbV^2_m &\leq mf^{m-1} \frac{d^2-2}{(d+1)^2}r^2 + \frac{d^2}{(d-1)^2}r^2 u^{m-2} \frac{(m-1)\left(f^2/u\right)^{m} - m\left(f^2/u\right)^{m-1} + 1}{(1-\left(f^2/u\right))^2}\\
	&\hspace{10mm}+ \eta(Q,\nu)mf^{m-1}r + \eta(Q,\nu) r^2 u^{m-2} \frac{(m-1)\left(f^2/u\right)^{m} - m\left(f^2/u\right)^{m-1} + 1}{(1-\left(f^2/u\right))^2}
\end{split}
\end{align}
where $u=u(\mc{E})$ is the unitarity, $r=r(\mc{E})$ is the infidelity, $d$ is the system dimension, $m$ is the sequence length, $f = 1-\frac{dr}{d-1}$ is the depolarizing parameter and $\eta$ is a function capturing the deviation from the ideal $Q$ and $\nu$. This bound is valid for $r\leq \frac{1}{3}$.
\end{theorem}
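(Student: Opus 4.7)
The plan is to pass to the Liouville picture and exploit the two-copy Clifford representation on the traceless-symmetric subspace. The first step is to write, using $a^2 = a\otimes a$, linearity of expectation and the tensor structure of quantum channels,
\begin{equation*}
\bbV_m^2 = \laa Q\tn{2}|\,T_\md{C}(\mc{E}\tn{2})^m - [T_\md{C}(\mc{E})]^{m}\tn{2}\,|\nu\tn{2}\raa.
\end{equation*}
Because $\nu = (\rho-\hat\rho)/2$ is traceless, $|\nu\tn{2}\raa$ lies in $\Vts$, and since $T_\md{C}(\mc{E}\tn{2})$ preserves $\Vts$ (it is a convex combination of $\mc{G}\tn{2}$'s), I can replace $T_\md{C}(\mc{E}\tn{2})$ by its restriction $T_{\mathrm{TS}}(\mc{E}\tn{2})$. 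Also $T_\md{C}(\mc{E}) = \mc{D}_f$ (the depolarizing channel), and acting on the traceless $|\nu\tn{2}\raa$ this contributes just $f^{2m}|\nu\tn{2}\raa$.

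The second step is to decompose $T_{\mathrm{TS}}(\mc{E}\tn{2})$. By \cref{lem:Clifford_irreps}, $\phi_{\mathrm{TS}}$ splits as a sum of pairwise inequivalent irreducibles indexed by $\mc{Z}$, so Schur's lemma (\cref{lem:Schur}) gives $T_{\mathrm{TS}}(\mc{E}\tn{2}) = \sum_{i\in\mc{Z}} \chi_i \mc{P}_i$ with orthogonal projectors $\mc{P}_i$ and channel-dependent scalars $\chi_i = \chi_i(\mc{E})$. Combining these two steps collapses the variance to
\begin{equation*}
\bbV_m^2 = \sum_{i\in\mc{Z}} \laa Q\tn{2}|\mc{P}_i|\nu\tn{2}\raa\,(\chi_i^m - f^{2m}).
\end{equation*}
Applying a telescoping identity of the form $a^m - b^m = m b^{m-1}(a-b) + (a-b)^2 \sum_{j=1}^m (j-1) a^{m-j} b^{j-2}$ (the content of what the excerpt calls \cref{cor:second_order_series}) separates the expression into a single-power piece proportional to $mf^{2(m-1)}\sum_i \laa Q\tn{2}|\mc{P}_i|\nu\tn{2}\raa(\chi_i - f^2)$ plus a tail in $(\chi_i-f^2)^2$.

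The third step is to bound the three resulting ingredients: (i) the weighted sum $\sum_i \laa Q\tn{2}|\mc{P}_i|\nu\tn{2}\raa(\chi_i - f^2)$, (ii) each individual gap $|\chi_i - f^2|$, and (iii) each $\chi_i$ in the tail. For ideal SPAM $Q_{\mathrm{id}} = (\id+\mathbf{P})/2$, $\nu_{\mathrm{id}} = \mathbf{P}/(2d)$, a direct computation of $\mc{E}\tn{2}$ in the Pauli basis (the content of \cref{lem:diagonal_squared_channel}) should yield (i) at order $r^2$, namely $\leq \tfrac{1}{4}\tfrac{d^2-1}{(d-1)^2}r^2$. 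For (ii), I would bound $|\chi_i - f^2| \leq 2dr/(d-1)$ for $r \leq 1/3$ by relating $\chi_i$ back to a fidelity-type quantity (this is \cref{lem:chi_f_bound}). For (iii), I would use the fact that $\chi_i$ corresponds to a diagonal term of $\mc{E}\mc{E}\ct$ on each irrep block and hence is upper bounded by $u(\mc{E})$ (\cref{lem:unitarity_upper_bound}). Factoring $u^{m-2}$ out of the telescoped tail and summing the resulting geometric-like series in closed form produces the ideal-SPAM part of the theorem. Bounding the projector inner products themselves uniformly (via \cref{lem:projector_lemma}) ties everything together.

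The main obstacle, and what I expect to be the hardest part, is the SPAM contribution. Writing $Q = Q_{\mathrm{id}} + \Delta Q$, $\nu = \nu_{\mathrm{id}} + \Delta\nu$ and expanding $\laa Q\tn{2}|\mc{P}_i|\nu\tn{2}\raa$ produces cross-terms that no longer give a pure $r^2$ bound in (i); instead they contribute pieces controlled by $\|\Delta Q\|_2\|\Delta\nu\|_2$, $\|\Delta Q\|_2^2$ and $\|\Delta\nu\|_2^2$ multiplied by factors only linear in $r$. The key bookkeeping task is to package all of these into a single prefactor $\eta(Q,\nu)$ that is (a) simultaneously valid as an upper bound for \emph{every} $i\in \mc{Z}$ appearing in the decomposition and (b) compatible with both the $mf^{m-1}$ leading piece and the telescoped $u^{m-2}$ tail. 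Achieving uniformity across all irreps in $\mc{Z}_{\mathrm{d}}$, $\mc{Z}_{[S]}$ and $\mc{Z}_{\{S\}}$ (which have structurally different projectors) is the delicate step, and the correction terms from the cross expansion produce exactly the additive $\eta mf^{m-1} r$ term and the $\eta r^2 u^{m-2}(\cdots)$ tail term in the statement.
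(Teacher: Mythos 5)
Your proposal follows essentially the same route as the paper's own proof: the identical reduction to the traceless-symmetric subspace, the Schur decomposition $T_{\mathrm{TS}}(\mc{E}\tn{2})=\sum_i\chi_i\mc{P}_i$, the telescoping identity of \cref{cor:second_order_series}, and the same division of labor among \cref{lem:diagonal_squared_channel}, \cref{lem:chi_f_bound}, \cref{lem:unitarity_upper_bound} and \cref{lem:projector_lemma}, including the splitting $Q=Q_{\mathrm{id}}+\Delta Q$, $\nu=\nu_{\mathrm{id}}+\Delta\nu$ that packages the cross-terms into $\eta(Q,\nu)$. The plan is correct and correctly identifies the delicate SPAM bookkeeping step, which the paper handles exactly as you anticipate via the functions $H_i(Q,\nu)$ and $\eta=\sum_i|H_i|$.
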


\begin{proof}
We begin from an exact expression of the variance expressed in the Liouville representation \cref{eq:variance}:
\begin{equation}
\bbV^2_m = \laa Q\tn{2}|\mc{T}_{\phi_{2}}(\mc{E}\tn{2})^m|\nu\tn{2}\raa  - \laa Q\tn{2}|\big(\mc{T}_{\phi}(\mc{E})\tn{2}\big)^m|\nu\tn{2}\raa
\end{equation}
where $\mc{T}_{\phi_{2}}$ is the twirl over the two-copy representation of the Clifford group as defined in \cref{eq:two-fold tensor product} and $\mc{T}_{\phi}$ is the twirl over the (single copy) Liouville representation. Note now that  $|\nu\tn{2}\raa\in \Vts$ and that both $\mc{T}_{\phi_{2}}(\mc{E}\tn{2})$ and $\mc{T}_{\phi}(\mc{E})\tn{2}$ are CPTP maps of the form described in \cref{eq:traceless_symmetric_channel}. This means we can restrict both twirls to the traceless symmetric subspace. In this subspace we have from \cref{lem:Schur} and \cref{lem:Clifford_irreps} that $\mc{T}_{\phi_{2}}(\mc{E}\tn{2})$ and $\mc{T}_{\phi}(\mc{E})\tn{2}$ are of the form
\begin{align}
\mc{T}_{\phi_{2}}(\mc{E}\tn{2}) &= \sum_{i\in \mc{Z}}\chi_i\mc{P}_i \\
\mc{T}_{\phi}(\mc{E})\tn{2} &= \sum_{i\in \mc{Z}}f^2 \mc{P}_i
\end{align}
where $\mc{Z}$ (as defined in \cref{lem:Clifford_irreps}) indexes the irreducible subrepresentations of the traceless symmetric representation of the Clifford group and $\chi_i = \tr(\mc{P}_i\mc{E}\tn{2})/\tr(\mc{P}_i)$ are the prefactors associated to the different subrepresentations. We also used that $\mc{T}_{\phi}(\mc{E})$ is a depolarizing channel with depolarizing parameter $f$~\cite{Wallman2014}. Using that $\mc{P}_i^2 = \mc{P}_i$ and $\mc{P}_i\mc{P}_j = 0$ for $i,j\in \mc{Z}, i\neq j$ we can rewrite the variance as
\begin{equation}
\bbV^2_m = \laa Q\tn{2}|\sum_{i\in \mc{Z}}\chi_i^m \mc{P}_i|\nu\tn{2}\raa  - \laa Q\tn{2}|\sum_{i\in \mc{Z}}f^{2m} \mc{P}_i|\nu\tn{2}\raa = \sum_{i\in \mc{Z}}\laa Q\tn{2}|\mc{P}_i|\nu\tn{2}\raa (\chi_i^m - f^{2m}).
\end{equation}
We now apply a telescoping series identity, which is proven in \cref{cor:second_order_series} of \cref{lem:telescoping series}, to the factor $\chi_i^m-f^{2m}$ in the above equation (for all $i\in \mc{Z}$). This gives
\begin{subequations}
\begin{align}
 \bbV^2_m&= mf^{2(m-1)} \sum_{i\in \mc{Z}}\laa Q\tn{2}|\mc{P}_i|\nu\tn{2}\raa(\chi_i-f^2) \label{eq:variance_first_term}\\&\hspace{15mm}+ \sum_{i\in \mc{Z}}\laa Q\tn{2}|\mc{P}_i|\nu\tn{2}\raa (\chi_i-f^2)^2\sum_{s=2}^m (s-1)\chi_i^{m-s} f^{2(s-2)}.\label{eq:variance_second_term}
\end{align}
\end{subequations}
This equation contains two terms, \cref{eq:variance_first_term} and \cref{eq:variance_second_term} which we will bound separately. We now proceed to upper bound the first term, that is \cref{eq:variance_first_term}. For this we will split the the input and output operators $Q,\nu$ into their ideal parts (that is, the Pauli operator $\sigma_{\mathbf{P}}: = \mathbf{P}/\sqrt{d}$) and deviations from that ideal. We define the functions
\begin{equation}
H_i(Q,\nu) := \laa Q\tn{2}|\mc{P}_i|\nu\tn{2}\raa - Q_{\mathbf{P}}^2\nu_{\mathbf{P}}^2 \laa \sigma_{\mathbf{P}}\tn{2} |\mc{P}_i|\sigma_{\mathbf{P}}\tn{2}\raa
\end{equation}
for all $i\in \mc{Z}$ where $Q_{\mathbf{P}} = \tr(Q\sigma_{\mathbf{P}}) $ and similarly for $\nu_{\mathbf{P}}$.
Using this we can write \cref{eq:variance_first_term} as
\begin{subequations}
\begin{align}\label{eq:first_term_first_step}
mf^{2(m-1)} \sum_{i\in \mc{Z}}\laa Q\tn{2}|\mc{P}_i|\nu\tn{2}\raa(\chi_i-f^2)&= Q_{\mathbf{P}}^2\nu_{\mathbf{P}}^2 mf^{2(m-1)} \sum_{i\in \mc{Z}}\laa \sigma_{\mathbf{P}}\tn{2}|\mc{P}_i|\sigma_{\mathbf{P}}\tn{2}\raa(\chi_i-f^2)\\
&\hspace{10mm} + mf^{2(m-1)} \sum_{i\in \mc{Z}}H_i(Q,\nu)(\chi_i-f^2).\label{eq:second_term_first_step}
\end{align}
\end{subequations}
Now consider the first term of the RHS, \cref{eq:first_term_first_step}. First note from \cref{lem:projector_lemma} that for $i\not\in \mc{Z}_{\mathrm{d}} =  \{\mathrm{tr},1,2\}$ we have $\mc{P}_i|\sigma_{\mathbf{P}}\tn{2}\raa =0$. 
Hence we have
\begin{align}
\begin{split}
Q_{\mathbf{P}}^2\nu_{\mathbf{P}}^2 mf^{2(m-1)} \sum_{i\in \mc{Z}}\laa \sigma_{\mathbf{P}}\tn{2}&|\mc{P}_i|\sigma_{\mathbf{P}}\tn{2}\raa(\chi_i-f^2)\\ &= Q_{\mathbf{P}}^2\nu_{\mathbf{P}}^2mf^{2(m-1)} \sum_{i\in\mc{Z}_{\mathrm{d}}} \laa \sigma_{\mathbf{P}}\tn{2}|\mc{P}_i|\sigma_{\mathbf{P}}\tn{2}\raa(\chi_i-f^2)\\
&=  Q_{\mathbf{P}}^2\nu_{\mathbf{P}}^2mf^{2(m-1)} \sum_{i\in\mc{Z}_{\mathrm{d}}} \frac{\tr(\mc{P}_i)}{d^2-1}\left(\frac{\tr(\mc{P}_i\mc{E}\tn{2})}{\tr(\mc{P}_i)}-f^2\right)\\
&= Q_{\mathbf{P}}^2\nu_{\mathbf{P}}^2mf^{2(m-1)} \left[\frac{1}{d^2-1}\tr\left[\sum_{i\in \mc{Z}_{\mathrm{d}}}\mc{P}_i\mc{E}\tn{2}\right] - f^2\right]\\
&= Q_{\mathbf{P}}^2\nu_{\mathbf{P}}^2mf^{2(m-1)}\left[ \frac{1}{d^2-1} \sum_{\tau\in \bsq} \laa\tau\tn{2}|\mc{E}\tn{2}|\tau\tn{2}\raa - f^2\right]
\end{split}
\end{align}
where we used \cref{lem:projector_lemma} in the first and second equalities and the fact that
\begin{equation}
\sum_{i\in \mc{Z}_{\mathrm{d}}}\mc{P}_i = \sum_{\tau\in \bsq} |\tau\tn{2}\raa\laa \tau \tn{2}|
\end{equation}
in the last equality (this can be seen from \cref{lem:Clifford_irreps}). Now we use \cref{lem:diagonal_squared_channel}  and the  fact that $Q_{\mathbf{P}}\nu_{\mathbf{P}}\leq 1/4$ to obtain an upper bound
\begin{equation}
Q_{\mathbf{P}}^2\nu_{\mathbf{P}}^2 mf^{2(m-1)} \sum_{i\in \mc{Z}}\laa \sigma\tn{2}|\mc{P}_i|\sigma\tn{2}\raa(\chi_i-f^2) \leq mf^{2(m-1)} \frac{d^2-2}{4(d-1)^2}r^2.
\end{equation}

This leaves us with the second term in the RHS, \cref{eq:second_term_first_step}. 
Here we cannot attain a bound that is quadratic in $r$. 
Instead we will attempt a bound that is linear in $r$ using \cref{lem:chi_f_bound}. We can write
\begin{align}
\begin{split}
mf^{2(m-1)} \sum_{i\in \mc{Z}}H_i(Q,\nu)(\chi_i-f^2)&\leq mf^{2(m-1)} \sum_{i\in \mc{Z}}|H_i(Q,\nu)| |\chi_i- f^2|\\
&\leq mf^{2(m-1)}\frac{2dr}{d-1} \sum_{i\in \mc{Z}}|H_i(Q,\nu)|
\end{split}
\end{align}
subject to the condition $r\leq \frac{1}{3}$. Writing $\eta(Q,\nu) := \sum_{i\in \mc{Z}}|H_i(Q,\nu)| $ we have a bound on \cref{eq:variance_first_term}.\\
We continue by upper bounding the second term in the variance, that is \cref{eq:variance_second_term}.
We again split off the ideal components of $Q$ and $\nu$ and write
\begin{align}\label{eq:eta}
\begin{split}
\sum_{i\in \mc{Z}}\laa Q\tn{2}|\mc{P}_i|\nu\tn{2}\raa& (\chi_i-f^2)^2 \sum_{s=2}^m (s-1)\chi_i^{m-s} f^{2(s-2)}\\
 &= Q_{\mathbf{P}}^2\nu_{\mathbf{P}}^2\sum_{i\in \mc{Z}}\laa\sigma_{\mathbf{P}}\tn{2}|\mc{P}_i|\sigma_{\mathbf{P}}\tn{2}\raa (\chi_i-f^2)^2 \sum_{s=2}^m (s-1)\chi_i^{m-s} f^{2(s-2)}\\
 		&\hspace{10mm}+\sum_{i\in \mc{Z}}H_i(Q,\nu) (\chi_i-f^2)^2 \sum_{s=2}^m (s-1)\chi_i^{m-s} f^{2(s-2)}\\
&\leq \frac{1}{4}\sum_{i\in \mc{Z}_{\mathrm{d}}}\frac{\tr(\mc{P}_i)}{d^2-1} (\chi_i-f^2)^2 \sum_{s=2}^m (s-1)\chi_i^{m-s} f^{2(s-2)}\\
 		&\hspace{10mm}+\sum_{i\in \mc{Z}}|H_i(Q,\nu)| (\chi_i-f^2)^2 \chi_i^{m-2}\sum_{s=2}^m (s-1)\chi_i^{m-s} f^{2(s-2)}
\end{split}
\end{align}
where we have used the definition of the function $H_i(Q,\nu)$, \cref{lem:projector_lemma} and the triangle inequality. Now we use \cref{lem:chi_f_bound} to upper bound this quantity as
\begin{align}
\begin{split}
\sum_{i\in \mc{Z}}\laa Q\tn{2}|\mc{P}_i|\nu\tn{2}\raa& (\chi_i-f^2)^2 \sum_{s=2}^m (s-1)\chi_i^{m-s} f^{2(s-2)}\\
&\leq \sum_{i\in \mc{Z}_{\mathrm{d}}}\frac{\tr(\mc{P}_i)}{d^2-1} \left(\frac{dr}{d-1}\right)^2\sum_{s=2}^m (s-1)\chi_i^{m-s} f^{2(s-2)}\\
 		&\hspace{10mm}+\sum_{i\in \mc{Z}}|H_i(Q,\nu)| \left(\frac{2dr}{d-1}\right)^2 \sum_{s=2}^m (s-1)\chi_i^{m-s} f^{2(s-2)}\\
 &\leq \frac{d^2 r^2}{(d-1)^2}\sum_{s=2}^m (s-1)\chi_i^{m-s} f^{2(s-2)}	\\
&\hspace{10mm}+\frac{4d^2 r^2}{(d-1)^2} \sum_{i\in \mc{Z}}|H_i(Q,\nu)| \sum_{s=2}^m (s-1)\chi_i^{m-s} f^{2(s-2)}
\end{split}
\end{align}
where we have used the fact that $\sum_{i \in \mc{Z}_{\mathrm{d}}}\tr(\mc{P}_i) = d^2-1$. It remains to deal with the last factor. This we do by using \cref{lem:unitarity_upper_bound} which states that $\chi_i\leq u$ for all $i \in \mc{Z}$, where $u$ is the unitarity of the channel $\mc{E}$. Writing again $\eta(Q,\nu) := \sum_{i\in \mc{Z}}|H_i(Q,\nu)| $ we then have
\begin{align}
\begin{split}
\sum_{i\in \mc{Z}}\laa Q\tn{2}|\mc{P}_i|\nu\tn{2}\raa& (\chi_i-f^2)^2 \sum_{s=2}^m (s-1)\chi_i^{m-s} f^{2(s-2)}\\
&\leq \frac{d^2 r^2}{(d-1)^2}\sum_{s=2}^m (s-1)u^{m-s} f^{2(s-2)}	\\
&\hspace{10mm}+\frac{4d^2 r^2}{(d-1)^2} \sum_{i\in \mc{Z}}|H_i(Q,\nu)| \sum_{s=2}^m (s-1)u^{m-s} f^{2(s-2)}
\end{split}
\end{align}
 We can further make sense of this quantity by using the well known series identity
\begin{equation}
\sum_{k=1}^{m}(k-1)x^{k-2} = \frac{(m-1)x^{m} - mx^{m-1} + 1}{(1-x)^2},\;\;\;\;\;\;\;\;\;\;\;m\in \mathrm{N},
\end{equation}
Factoring out a factor of $u^{m-2}$ and setting $x=f^2/u$ we obtain the following
\begin{align}
\begin{split}
\sum_{i\in \mc{Z}}\laa Q\tn{2}|\mc{P}_i|\nu\tn{2}\raa& (\chi_i-f^2)^2 \sum_{s=2}^m (s-1)\chi_i^{m-s} f^{2(s-2)}\\&\leq \frac{d^2 r^2}{(d-1)^2}(1+ 4\eta(Q,\nu))u^{m-2}\frac{(m-1)(f^2/u)^{m} - m(f^2/u)^{m-1} + 1}{(1-(f^2/u))^2}.
\end{split}
\end{align}
This finishes the upper bounding of \cref{eq:variance_second_term}. Gathering all terms we come to a final bound
\begin{align}\label{eq:variance_total_final}
\begin{split}
\bbV &\leq mf^{m-1} \frac{d^2-2}{4(d+1)^2}r^2 + \frac{d^2}{(d-1)^2}r^2 u^{m-2} \frac{(m-1)\left(f^2/u\right)^{m} - m\left(f^2/u\right)^{m-1} + 1}{(1-\left(f^2/u\right))^2}\\
	&\hspace{10mm}+ \eta(Q,\nu)mf^{m-1}r + \eta(Q,\nu) r^2 u^{m-2} \frac{(m-1)\left(f^2/u\right)^{m} - m\left(f^2/u\right)^{m-1} + 1}{(1-\left(f^2/u\right))^2}
\end{split}
\end{align}
which is the bound we set out to find.
\end{proof}
Noting that $f^2\leq u$ and that the factor
\begin{equation}
\frac{(m-1)\left(f^2/u\right)^{m} - m\left(f^2/u\right)^{m-1} + 1}{(1-\left(f^2/u\right))^2},
\end{equation}
is monotonically decreasing in $u$ we can upper bound this factor by taking the limit $u\rightarrow f^2$. This gives
\begin{equation}
\lim_{u\rightarrow f^2}  \frac{(m-1)\left(f^2/u\right)^{m} - m\left(f^2/u\right)^{m-1} + 1}{(1-\left(f^2/u\right))^2} = \frac{m(m-1)}{2}.
\end{equation}
which can be confirmed by an application of l'H$\hat{\text{o}}$pital's rule.
Plugging this in to \cref{eq:variance_total_final} we obtain \cref{variance_small_m}.

\subsection{State preparation and measurement (SPAM) terms}
In the central bound on the variance (~\cref{thm:variance bound}) we had to account for the fact that the variance can depend on how well the input states $\rho,\hat{\rho}$ and the output POVM $Q$ can be implemented. The ideal behavior of $\nu=\frac{1}{2}(\rho-\hat{\rho})$ and $Q$ are given by
\begin{align}
Q_{\mathrm{id}} &= \frac{1}{2}(\id + \mathbf{P})\\
\nu_{\mathrm{id}} &= \frac{\mathbf{P}}{2d}
\end{align}
where $\mathbf{P}$ is a pre-specified element of the Pauli group (see \cref{box:randomized_benchmarking}). The deviation of $Q$ and $\nu$ from this ideal can be captured by writing
\begin{align}
Q &= Q_{\mathrm{id}} + Q_{\mathrm{spam}}\\
\nu &= \nu_{\mathrm{id}} + \nu_{\mathrm{spam}}
\end{align}
where $\inp{Q_{\mathrm{id}}}{Q_{\mathrm{spam}}} =\inp{\nu_{\mathrm{id}}}{\nu_{\mathrm{spam}}} =0$. 

In the variance bound the deviation from the ideal has an effect which is measured by the parameter $\eta(Q,\nu)$. This parameter $\eta(Q,\nu)$ was defined as
\begin{equation}
\eta(Q,\nu)  = \sum_{i\in \mc{Z}}H_i(Q,\nu) = \sum_{i\in \mc{Z}} |\laa Q\tn{2}|\mc{P}_i|\nu\tn{2}\raa - \laa Q_{\mathrm{id}}\tn{2}|\mc{P}_i|\nu_{\mathrm{id}}\tn{2}\raa|
\end{equation}
where $\mc{Z}$ indexes the irreducible representations of the traceless symmetric representation of the Clifford group and the $\mc{P}_i$ are projectors onto the spaces carrying these subrepresentations (\cref{lem:Clifford_irreps}). Let us now analyze these terms further. For $i\in \mc{Z}_{\mathrm{d}}$ we have
\begin{align}
\begin{split}
H_i(Q,\nu) &= |\laa (Q_{\mathrm{id}} + Q_{\mathrm{spam}})\tn{2}|\mc{P}_i|(\nu_{\mathrm{id}} + \nu_{\mathrm{spam}})\tn{2}\raa - \laa Q_{\mathrm{id}}\tn{2}|\mc{P}_i|\nu_{\mathrm{id}}\tn{2}\raa|\\
&= |\laa Q_{\mathrm{id}}\tn{2}|\mc{P}_i|\nu_{\mathrm{spam}}\tn{2}\raa + \laa Q_{\mathrm{id}}\tn{2}|\mc{P}_i|\nu_{\mathrm{id}}\tn{2}\raa+ \laa Q_{\mathrm{spam}}\tn{2}|\mc{P}_i|\nu_{\mathrm{spam}}\tn{2}\raa|
\end{split}
\end{align}
where we have used that $\inp{Q_{\mathrm{id}}}{Q_{\mathrm{spam}}} =\inp{\nu_{\mathrm{id}}}{\nu_{\mathrm{spam}}} =0$ which implies that $\laa Q_{\mathrm{id}}\otimes Q_{\mathrm{spam}}|\mc{P}_i =\mc{P}_i|\nu_{\mathrm{id}}\otimes \nu_{\mathrm{spam}}\raa =0$ for $i \in \mc{Z}_{\mathrm{d}}$. Using the triangle inequality and the Cauchy-Schwarz inequality we can get
\begin{align}
\begin{split}
H_i(Q,\nu)&\leq |\laa Q_{\mathrm{id}}\tn{2}|\mc{P}_i|\nu_{\mathrm{spam}}\tn{2}\raa| + |\laa Q_{\mathrm{spam}}\tn{2}|\mc{P}_i|\nu_{\mathrm{id}}\tn{2}\raa|+ |\laa Q_{\mathrm{spam}}\tn{2}|\mc{P}_i|\nu_{\mathrm{spam}}\tn{2}\raa|\\
&\leq\norm{Q_{\mathrm{id}}\tn{2}}_2\norm{\mc{P}_i(\nu_{\mathrm{spam}}\tn{2})}_2+ \norm{Q_{\mathrm{spam}}\tn{2}}_2\norm{\mc{P}_i(\nu_{\mathrm{id}}\tn{2})}_2 + \norm{Q_{\mathrm{spam}}\tn{2}}_2\norm{\mc{P}_i(\nu_{\mathrm{spam}}\tn{2})}_2\\
&\leq \norm{\mc{P}_i}_{2\rightarrow2}\left(\norm{Q_{\mathrm{id}}}^2_2\norm{\nu_{\mathrm{spam}}}_{2}^2+ \norm{Q_{\mathrm{spam}}}_2^2\norm{\nu_{\mathrm{id}}}_2^2 + \norm{Q_{\mathrm{spam}}}_2^2\norm{\nu_{\mathrm{spam}}}_2^2\right)
\end{split}
\end{align}
where $\norm{\mc{P}_i}_{2\rightarrow2}$ is the induced $2$-norm of the superoperator $\mc{P}_i$. It is well known that this norm is equal to the largest singular value of the Liouville representation of $\mc{P}_i$~\cite{Wallman2014}, which since the Liouville representation of $\mc{P}_i$ is an orthonormal projection, is equal to one. This means we have for $i \in \mc{Z}_{\mathrm{d}}$ that
\begin{align}
\begin{split}
H_i(Q,\nu) &\leq \norm{Q_{\mathrm{id}}}^2_2\norm{\nu_{\mathrm{spam}}}_{2}^2+ \norm{Q_{\mathrm{spam}}}_2^2\norm{\nu_{\mathrm{id}}}_2^2 + \norm{Q_{\mathrm{spam}}}_2^2\norm{\nu_{\mathrm{spam}}}_2^2\\
&= \norm{Q_{\mathrm{id}}}^2_2\norm{\nu-\nu_{\mathrm{id}}}_{2}^2+ \norm{Q - Q_{\mathrm{id}}}_2^2\norm{\nu_{\mathrm{id}}}_2^2 + \norm{Q -Q_{\mathrm{id}}}_2^2\norm{\nu - \nu_{\mathrm{id}}}_2^2.
\end{split}
\end{align}
Note that this expression is zero when both $Q$ and $\nu$ are ideally implemented but is non-zero when either of them is not. This behavior is in general unavoidable as we argue in the main text (\cref{subsec:opt_max_var}). But first we will consider the functions $H_i(Q,\nu)$ for $i \in Z_{[S]}\cup Z_{\{S\}}$. Note first that
since $\mathrm{supp}(\mc{P}_i)\subset \vsp\{S_{\sigma,\sigma'}\;\|\;\;\sigma,\sigma' \in \bsq, \; \sigma \neq \sigma'\}$ we must have that $ \mc{P}_i|\nu_{\mathrm{id}}\tn{2}\raa =\laa Q_{\mathrm{id}}\tn{2}|\mc{P}_i =0$. This means we can write
\begin{align}
H_i(Q,\nu) &= |\laa Q\tn{2}|\mc{P}_i|\nu\tn{2}\raa - \laa Q_{\mathrm{id}}\tn{2}|\mc{P}_i|\nu_{\mathrm{id}}\tn{2}\raa|\\
\begin{split}
&= | \laa Q_{\mathrm{spam}}\tn{2}|\mc{P}_i|\nu_{\mathrm{id}}\!\otimes\!\nu_{\mathrm{spam}} \!\!+\!\!\nu_{\mathrm{spam}}\!\otimes\!\nu_{\mathrm{id}} \raa + \laa Q_{\mathrm{id}}\!\otimes \!Q_{\mathrm{spam}} \!\!+\!\!Q_{\mathrm{spam}}\!\otimes\! Q_{\mathrm{id}} |\mc{P}_i|\nu_{\mathrm{spam}}\tn{2}\raa \\&\hspace{7mm}+\!\laa Q_{\mathrm{spam}}\tn{2}|\mc{P}_i|\nu_{\mathrm{spam}}\tn{2}\raa + \laa  Q_{\mathrm{id}}\!\otimes\! Q_{\mathrm{spam}}\! +\! Q_{\mathrm{spam}}\!\otimes\! Q_{\mathrm{id}} |\mc{P}_i|\nu_{\mathrm{id}}\!\otimes\!\nu_{\mathrm{spam}}\! +\!\nu_{\mathrm{spam}}\!\otimes\!\nu_{\mathrm{id}} \raa|
\end{split}\\
\begin{split}
\leq& \norm{\mc{P}_i}_{2\rightarrow2}\bigg(\norm{Q_{\mathrm{spam}}\tn{2}}_2 \norm{\nu_{\mathrm{spam}}\tn{2}}_2 + 2\norm{Q_{\mathrm{spam}}}_2 \norm{Q_{\mathrm{id}}}_2\norm{\nu_{\mathrm{spam}}\tn{2}}_2 \\&\hspace{8mm}+2\norm{\nu_{\mathrm{spam}}}_2 \norm{\nu_{\mathrm{id}}}_2\norm{Q_{\mathrm{spam}}\tn{2}}_2 + 4 \norm{\nu_{\mathrm{spam}}}_2 \norm{\nu_{\mathrm{id}}}_2\norm{Q_{\mathrm{spam}}}_2 \norm{Q_{\mathrm{id}}}_2\bigg)
\end{split}
\end{align}
which we can rewrite as
\begin{align}
\begin{split}
H_i(Q,\nu) &\leq \norm{Q - Q_{\mathrm{id}}}_2\norm{\nu - \nu_{\mathrm{id}}}_2\bigg(\norm{Q - Q_{\mathrm{id}}}_2\norm{\nu - \nu_{\mathrm{id}}}_2 \\&\hspace{8mm}+ 2 \norm{\nu - \nu_{\mathrm{id}}}_2\norm{Q_{\mathrm{id}}}_2 + 2 \norm{Q - Q_{\mathrm{id}}}_2\norm{\nu_{\mathrm{id}}}_2 + 4\norm{\nu_{\mathrm{id}}}_2\norm{Q_{\mathrm{id}}}_2 \bigg)
\end{split}
\end{align}
which makes manifest that $H_i(Q,\nu)=0$ if $Q$ and $\nu$ are ideal and moreover that this term actually scales with the product of the deviations in $Q$ and $\nu$ (as measured in the $2$-norm). Hence we see that to lowest order in $Q_{\mathrm{spam}}$ and $\nu_{\mathrm{spam}}$ the SPAM parameter $\eta(Q,\nu)$ is proportional to
\begin{equation}
\eta \approx \norm{Q - Q_{\mathrm{id}}}_2\norm{\nu - \nu_{\mathrm{id}}}_2 + \norm{Q - Q_{\mathrm{id}}}_2^2 + \norm{\nu - \nu_{\mathrm{id}}}_2^2
\end{equation}
with the exact expression being
\begin{align}\label{eq:eta_full_term}
\begin{split}
\eta(Q,\nu)&\leq 3\left[\norm{Q_{\mathrm{id}}}^2_2\norm{\nu-\nu_{\mathrm{id}}}_{2}^2+ \norm{Q - Q_{\mathrm{id}}}_2^2\norm{\nu_{\mathrm{id}}}_2^2 + \norm{Q -Q_{\mathrm{id}}}_2^2\norm{\nu - \nu_{\mathrm{id}}}_2^2\right]\\
&\hspace{8mm}+ 5\bigg[\norm{Q - Q_{\mathrm{id}}}_2\norm{\nu - \nu_{\mathrm{id}}}_2\bigg(\norm{Q - Q_{\mathrm{id}}}_2\norm{\nu - \nu_{\mathrm{id}}}_2 \\&\hspace{20mm}+ 2 \norm{\nu - \nu_{\mathrm{id}}}_2\norm{Q_{\mathrm{id}}}_2 + 2 \norm{Q - Q_{\mathrm{id}}}_2\norm{\nu_{\mathrm{id}}}_2 + 4\norm{\nu_{\mathrm{id}}}_2\norm{Q_{\mathrm{id}}}_2 \bigg)\bigg]
\end{split}
\end{align}
where the factors $3$ and $5$ arise from the fact that $|\mc{Z}_{\mathrm{d}}|= 3$ and $|Z_{[S]}\cup Z_{\{S\}}| =5 $ respectively (this is for $q\geq 3$, for $q=1$ we get the significantly better $|\mc{Z}_{\mathrm{d}}|= 2$ and $|Z_{[S]}\cup Z_{\{S\}}| =1 $ instead~\cite{Clifford2016}).

\subsection{Sample complexity of iteratively reweighted least squares}\label{subsec:sample complexity}
In this section we analyze the sample complexity of the RB fitting procedure using iteratively reweighted least squares, as outlined in \cref{subsec:fitting}. Given a set of sequence lengths $\md{M}$ we will assume that $N$ random sequences are sampled for each sequence length. It is possible to let $N$ be a function of the sequence length $m$ and prove a more general version of the theorem presented here but we will not pursue this here. We will also only be interested in the uncertainty around the estimate for the depolarizing parameter $f$, it is straightforward to extend our analysis to also include the uncertainty around estimate for the pre-factor $A$. The methods we use are all standard and can be found in \cite{fedorov2013theory,Seber1989}. See also \cite{Epstein2014} for an earlier calculation of this form in the context of randomized benchmarking (not taking into account the heteroskedasticity of randomized benchmarking data).
\begin{theorem}
Let $\md{M}$ be a set of integers denoting sequence lengths and let $\{k_{m,N}\}_{m\in\md{M}}$ be a set of RB data points obtained by sampling $N$ random sequences for each sequence length $m\in \md{M}$. Denote by $f^*,A^*$ the true values for the RB fitting parameters and denote by $f_{\rm est},A_{\rm est}$ their estimates as obtained by the iteratively reweighted least squares procedure outlined in \cref{alg:IRLS}. We then have that
\begin{equation}
\text{Pr}\left[|f^* - f_{\rm est}|\leq \epsilon\right]\geq 1-\delta
\end{equation}
where $\delta$ is upper bounded by
\begin{equation}
\delta\leq 2H[\bbV_{\mathrm{fit}},\epsilon_{\rm fit}]^{N|\md{M}|}
\end{equation}
with $H$ defined in \cref{eq:concentration} and 
\begin{align}
\bbV_{\mathrm{fit}} &= \frac{1}{|\md{M}|}\sum_{m\in \md{M}} \bbV_m(f^*)w(f_{\rm est},m)\\
\epsilon_{\mathrm{fit}} &= \frac{\epsilon [J^T J]}{J_1}
\end{align}
and 
\begin{equation}
J = \left[-\frac{1}{|\md{M}|}\sum_{m\in \md{M}} mA^* {f^*}^{m-1}w(f^*)\;\;,\;\;\frac{1}{|\md{M}|}\sum_{m\in \md{M}}{f^*}^m w(f^*,m) \right]
\end{equation}
and $J_{1}$ is the first entry of this vector.

\end{theorem}
\begin{proof}
The starting off point for this proof is given by Eq. 1.6.19 in \cite[Page 45]{fedorov2013theory} which states that the outcome of the IRLS procedure satisfies the following equality
\begin{equation}\label{eq:zero_point}
\frac{1}{|\md{M}|}\sum_{m\in \md{M}}(k_{m,N} - A_{\rm est}f_{\rm est}^m)w(f_{\rm est},m) = 0
\end{equation}
where $w(f,m)$ is the weight function given by the inverse of \cref{eq:variance} (we suppress the dependency on the unitarity here for notational simplicity). 
We can rewrite \cref{eq:zero_point} as
\begin{gather}
\frac{1}{|\md{M}|}\sum_{m\in \md{M}}(k_{m,N} +A^*{f^*}^m -A^*{f^*}^m - A_{\rm est}f_{\rm est}^m)w(f_{\rm est},m)=0\\
\iff\frac{1}{|\md{M}|}\sum_{m\in \md{M}}(A^*{f^*}^m - A_{\rm est}f_{\rm est}^m)w(f_{\rm est},m) = -\frac{1}{|\md{M}|}\sum_{m\in \md{M}}(k_{m,N} - A^* {f^*}^m)w(f_{\rm est},m).\label{eq:zero_point_other}
\end{gather}

We can now think of the LHS of \cref{eq:zero_point_other} as a function of the vector $[f_{\rm est},A_{\rm est}]$. Assuming $[f_{\rm est},A_{\rm est}]$ is close to $[f^*,A^*]$ we can expand the LHS of \cref{eq:zero_point_other} to first order to get
\begin{equation}\label{eq:taylor exp}
\frac{1}{|\md{M}|}\sum_{m\in \md{M}}(A^*{f^*}^m - A_{\rm est}{f_{\rm est}^m})w(f_{\rm est},m) \approx J\left[f^*-f_{\rm est}, A^* - A_{\rm est}\right]^T
\end{equation}
where $J$ is the Jacobian associated to the LHS of \cref{eq:zero_point_other}, that is:
\begin{equation}
J = \left[-\frac{1}{|\md{M}|}\sum_{m\in \md{M}} mA^* {f^*}^{m-1}w(f^*)\;\;,\;\;\frac{1}{|\md{M}|}\sum_{m\in \md{M}}{f^*}^m w(f^*,m) \right].
\end{equation}
Taking the Moore-Penrose inverse $J^{\rm MP} = (J^T J)^{-1} J^T$ of $J$  and inserting this in the first entry of \cref{eq:taylor exp} we can say that
\begin{equation}
f^*-f_{\rm est} \approx (J J^T)^{-1} J_1\frac{1}{|\md{M}|}\sum_{m\in \md{M}}(A^*{f^*}^m - A_{\rm est}{f_{\rm est}^m})w(f_{\rm est},m)
\end{equation}
where $J_1$ is the first entry of $J$. Now we can say that
\begin{align}
\text{Prob}\left[|f^* - f_{\rm est}| \geq \epsilon\right] &\approx \text{Prob}\left[\left|[J J^T]^{-1} J_1\frac{1}{|\md{M}|}\sum_{m\in \md{M}}(A^*{f^*}^m - A_{\rm est}f_{\rm est}^m)w(f_{\rm est},m)\right| \geq \epsilon\right]\\
&=\text{Prob}\left[\left|[J J^T]^{-1} J_1\frac{1}{|\md{M}|}\sum_{m\in \md{M}}(k_{m,N} - A^* {f^*}^m)w(f_{\rm est},m)\right| \geq \epsilon\right]\label{eq:indepen}
\end{align}
Now note that $k_{m,N}$ can be seen as a number drawn from a random variable $K_m$ with mean $A^*(f^*)^m$ and variance $\bbV_m(f^*)/N^2$ where $N$ is the number of random sequences drawn for each data-point $k_{m,N}$. Moreover $k_{m,N}$ and $k_{N,m'}$ for $m\neq m'$ are drawn from independent random variables $K_m$ and $K_{m'}$. 
Hence we can apply the concentration inequality given in \cref{eq:concentration} to \cref{eq:indepen} to get
\begin{equation}
\text{Prob}\left[|f^* - f_{\rm est}| \geq \epsilon\right] \leq 2H[\bbV_{\mathrm{fit}},\epsilon_{\rm fit}]^{N|\md{M}|}
\end{equation}
with $\bbV_{\mathrm{fit}},\epsilon_{\mathrm{fit}}$ given by
\begin{align}
\bbV_{\mathrm{fit}} &= \frac{1}{|\md{M}|}\sum_{m\in \md{M}} \bbV_m(f^*)w(f_{\rm est},m)\\
\epsilon_{\mathrm{fit}} &= \frac{\epsilon [J J^T]}{J_1}
\end{align}
which completes the proof.
\end{proof}
Using \cref{variance_final} or \cref{variance_final_SPAM} then gives an upper bound on total amount of data that needs to be gathered for rigorous RB.

\section{Technical lemmas}
In this section we give proofs of all technical lemmas used in the main result \cref{thm:variance bound}.

\subsection{Projectors in the traceless symmetric subspace}
In \cref{lem:projector_lemma} we prove a series of useful upper bounds on the trace overlap between the superoperator-projectors associated to the traceless-symmetric representation of the Clifford group and the normalized Pauli matrices. The saturated versions of these inequalities are critical to establishing the quadratic scaling with infidelity of the variance bound in the case of SPAM-free RB.
\begin{lemma}\label{lem:projector_lemma}
Let $\mc{E}:\M \to \M$ be a quantum channel and consider the twirled operator $\mc{T}_{\phi_{\mathrm{TS}}}(\mc{E}\tn{2})$ with respect to the traceless-symmetric representation. This operator can then be written as (\cref{lem:Clifford_irreps,lem:Schur})
\begin{equation}
\mc{T}_{\phi_{\mathrm{TS}}}(\mc{E}\tn{2}) = \sum_{i\in \mc{Z}}\frac{\tr(\mc{E}\mc{P}_i)}{\tr(\mc{P}_i)}\mc{P}_i
\end{equation}
with $\mc{Z} = \{\mathrm{tr},1,2,[1],[2],[3],\{1\},\{2\}\}$ and $\mc{P}_i$ the projector onto the spaces $V_i\subset \mc{M}_d\tn{2}$. Let $I(x\in A)$ be the indicator function for the set $A$ (that is $I(x\in A) =1$ if $x\in A$ and $I(x\in A) =0$ otherwise). We have the following statements
\begin{itemize}
	\item
 For $i \in \mc{Z}$ and $\sigma,\sigma' \in \bsq$ we have that
\begin{equation}
|\laa \sigma\tn{2} |\mc{P}_i| {\sigma'}\tn{2} \raa| = |\laa \sigma\tn{2} |\mc{P}_i| {\sigma'}\tn{2} \raa| I(i\in\mc{Z}_{\mathrm{d}}) \leq \frac{\tr(\mc{P}_i)I(i\in\mc{Z}_{\mathrm{d}})}{d^2-1}
\end{equation}
with equality when $\sigma =\sigma'$ .
\item
For $i \in \mc{Z}$, $\tau,\tau'\in \bsq$ and $\sigma\in \bf{C}_\tau,\sigma'\in \bf{C}_{\tau'}$ we have that
\begin{equation}
|\laa S_{\sigma,\sigma\cdot\tau} |\mc{P}_i| S_{\sigma',\sigma'\cdot\tau'} \raa| = |\laa S_{\sigma,\sigma\cdot\tau} |\mc{P}_i| S_{\sigma',\sigma'\cdot\tau} \raa| I(i\in\mc{Z}_{[S]})\delta_{\tau,\tau'} \leq \frac{2\tr(\mc{P}_i)I(i\in\mc{Z}_{[S]})\delta_{\tau,\tau'}}{(d^2-1)(d^2/2-2)}
\end{equation}
with equality when $\sigma = \sigma'$.
\item
For $i \in \mc{Z}$, $\tau,\tau'\in \bsq$ and $\sigma\in \bf{N}_\tau,\sigma'\in \bf{N}_{\tau'}$ we have that
\begin{equation}
|\laa S_{\sigma,i\sigma\cdot\tau} |\mc{P}_i| S_{\sigma',i\sigma'\cdot\tau'} \raa| = |\laa S_{\sigma,i\sigma\cdot\tau} |\mc{P}_i| S_{\sigma',i\sigma'\cdot\tau} \raa| I(i\in\mc{Z}_{\{S\}})\delta_{\tau,\tau'} \leq \frac{2\tr(\mc{P}_i)I(i\in\mc{Z}_{\{S\}})\delta_{\tau,\tau'}}{(d^2-1)(d^2/2)}
\end{equation}
with equality when $\sigma = \sigma'$.
\end{itemize}
where the sets $Z_{\mathrm{d}},\mc{Z}_{[S]},\mc{Z}_{\{S\}}$ are defined in \cref{lem:Clifford_irreps}.
\end{lemma}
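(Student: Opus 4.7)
The plan is to combine three structural facts from \cref{lem:Clifford_irreps}: the three subspaces $V_{\mathrm{d}}$, $V_{[S]}$, $V_{\{S\}}$ are mutually orthogonal summands of $V_{\mathrm{TS}}$; the refinements $V_i = \bigoplus_\tau V_i^\tau$ are orthogonal direct sums sitting inside the orthogonal decompositions $V_{[S]} = \bigoplus_\tau V^{[\tau]}$ and $V_{\{S\}} = \bigoplus_\tau V^{\{\tau\}}$; and the Clifford group acts transitively on each relevant family of basis vectors by conjugation, forcing the diagonal entries of $\mc{P}_i$ to be constant along an orbit.

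The indicator-function prefactors are immediate from the first two structural facts. For bullet 1, $|\sigma\tn{2}\raa \in V_{\mathrm{d}}$, which is orthogonal to $V_{[S]} \oplus V_{\{S\}}$, so $\mc{P}_i|\sigma\tn{2}\raa = 0$ whenever $i\notin \mc{Z}_{\mathrm{d}}$; bullets 2 and 3 follow analogously. To obtain the $\delta_{\tau,\tau'}$ in bullets 2 and 3, I would first verify that $V^{[\tau]}\perp V^{[\tau']}$ for $\tau\neq \tau'$ (and similarly for $V^{\{\tau\}}$): the vector $S_{\nu,\nu\cdot\tau}$ is labeled by the unordered Pauli pair $\{\nu,\nu\cdot\tau\}$, and a short check using $(\sigma\cdot\tau)\cdot\tau = \sigma$ shows that two such pairs can coincide across different $\tau$'s only when $\tau = \tau'$. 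Combined with $V_i^\tau \subset V^{[\tau]}$, this block-diagonalizes $\mc{P}_i$ over $\tau$.

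For the diagonal entries I would invoke transitivity together with Schur's intertwining. For bullet 1, Clifford conjugation acts transitively on $\bsq$ (up to phases absorbed when taking $\sigma\otimes\sigma$), so all $\laa\sigma\sigma|\mc{P}_i|\sigma\sigma\raa$ are equal; since $\{|\sigma\sigma\raa\}_{\sigma\in\bsq}$ is an orthonormal basis of $V_{\mathrm{d}}$, summing yields $(d^2-1)\laa\sigma\sigma|\mc{P}_i|\sigma\sigma\raa = \tr(\mc{P}_i|_{V_{\mathrm{d}}}) = \tr(\mc{P}_i)$ for $i\in\mc{Z}_{\mathrm{d}}$. For bullets 2 and 3, the Clifford group acts transitively on the parameterizing pairs $(\sigma,\tau)$ with $\sigma\in\mathbf{C}_\tau$ (resp. $\sigma\in\mathbf{N}_\tau$), so all relevant diagonal entries equal a single constant $c_i$. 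The symmetrization $S_{\sigma,\sigma\cdot\tau} = S_{\sigma\cdot\tau,\sigma}$ means every basis vector of $V_{[S]}$ is double-counted in the sum over pairs, giving $(d^2-1)(d^2/2-2)c_i = 2\tr(\mc{P}_i|_{V_{[S]}}) = 2\tr(\mc{P}_i)$ in the commuting case, and similarly $(d^2-1)(d^2/2)c_i = 2\tr(\mc{P}_i)$ for the anticommuting case, yielding exactly the claimed expressions.

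Finally, for off-diagonal entries Cauchy-Schwarz applied to the positive semidefinite projector gives $|\laa v|\mc{P}_i|w\raa| \leq \sqrt{\laa v|\mc{P}_i|v\raa\,\laa w|\mc{P}_i|w\raa}$ with equality at $v=w$, producing the stated inequalities. The main obstacle is the Clifford-transitivity claim on commuting and anticommuting pairs of Paulis needed for bullets 2 and 3, together with the careful bookkeeping of the factor of two from the symmetric double-count; both are clean once one explicitly invokes the orbit structure developed in the companion paper \cite{Clifford2016}.
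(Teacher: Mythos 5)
Your proposal is correct and follows essentially the same route as the paper's proof: the indicator functions and $\delta_{\tau,\tau'}$ come from the orthogonality of the invariant blocks $V_{\mathrm{d}}$, $V^{[\tau]}$, $V^{\{\tau\}}$; the diagonal entries are shown constant via Clifford transitivity on $\bsq$ (resp.\ on the commuting/anticommuting pairs) together with the intertwining property of $\mc{P}_i$, and then summed to the trace with the factor of two from $S_{\sigma,\sigma\cdot\tau}=S_{\sigma\cdot\tau,\sigma}$; and the off-diagonal bound is Cauchy--Schwarz for the positive semidefinite projector, which the paper phrases as the Sylvester minor condition. Your bookkeeping of the orbit sizes $(d^2-1)$, $(d^2-1)(d^2/2-2)$, and $(d^2-1)(d^2/2)$ matches the paper exactly.
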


\begin{proof}
We begin by proving the first claim. Let $\mc{P}_i$ be a projector as defined in the lemma statement with $i\in \mc{Z}$ and take $\sigma, {\sigma'}\in \bsq$. From \cref{lem:Clifford_irreps} we have immediately that
\begin{equation}
\laa \sigma\tn{2} |\mc{P}_i| {\sigma'}\tn{2} \raa = \laa \sigma\tn{2} |\mc{P}_i| {\sigma'}\tn{2} \raa I(i\in\mc{Z}_{\mathrm{d}}).
\end{equation}
Now consider $i \in \mc{Z}_{\mathrm{d}}$. Note that since $\mc{P}_i$ is a projector it is a real matrix and we have that $\mc{P}_i\geq 0$, that is $\mc{P}_i$ is a positive semidefinite matrix. This means that we have, by the Sylvester principal minor conditions, that
\begin{equation}
|\laa \sigma\tn{2} |\mc{P}_i| {\sigma'}\tn{2} \raa| \leq \sqrt{\laa {\sigma'}\tn{2} |\mc{P}_i| {\sigma'}\tn{2} \raa\laa \sigma\tn{2} |\mc{P}_i| {\sigma}\tn{2} \raa}
\end{equation}
for all $\sigma, \sigma' \in \bsq$. Now consider the case $\sigma = \sigma'$. Note that for all $\tau,\sigma\in \bsq$ there is a $G_\tau^\sigma\in \md{C}$ such that $\mc{G}_\tau^\sigma(\tau) = \pm \sigma$. That is, the Clifford group acts transitively on $\bsq$~\cite{Zhu2015}. This means we can write
\begin{align}
\begin{split}
\laa \sigma\tn{2} |\mc{P}_i| {\sigma}\tn{2} \raa  &= \frac{1}{d^2-1}\sum_{\tau \in \bsq}\laa \mc{G}_\tau^\sigma(
\tau)\tn{2} |\mc{P}_i| \mc{G}_\tau^\sigma(
\tau) \raa\\
&= \frac{1}{d^2-1}\sum_{\tau \in \bsq}\laa \tau\tn{2}|{\big(\mc{G}_\tau^\sigma\big)\ct}\tn{2}\mc{P}_i {\big(\mc{G}_\tau^\sigma\big)}\tn{2}|\tau\tn{2}\raa\\
&= \frac{1}{d^2-1}\sum_{\tau \in \bsq}\laa \tau\tn{2}|\mc{P}_i|\tau\tn{2}\raa\\
&= \frac{\tr(\mc{P}_i)}{d^2-1}
\end{split}
\end{align}
where we used the fact that $\mc{P}_i$ commutes with $\mc{G}\tn{2}$ for all $G\in \md{C}$ and the fact that $V_i\subset V_{\mathrm{d}}$ (where $V_{\mathrm{d}}$ is defined in \cref{lem:Clifford_irreps}). This proves the first claim of the lemma. 

Next we consider the second claim of the lemma. Let $\tau, \tau'\in\bsq$ and take $\sigma\in \bf{C}_\tau$ and $\sigma'\in \bf{C}_{\tau'}$. Again from \cref{lem:Clifford_irreps} we have immediately that
\begin{equation}
\laa S_{\sigma,\sigma\cdot\tau} |\mc{P}_i| S_{\sigma',\sigma'\cdot\tau'} \raa = \laa S_{\sigma,\sigma\cdot\tau} |\mc{P}_i| S_{\sigma',\sigma'\cdot\tau'} \raa I(i\in\mc{Z}_{[S]}).
\end{equation}
Now consider $i \in \mc{Z}_{[S]}$. From \cref{lem:Clifford_irreps} we have that we can write
\begin{equation}
\mc{P}_i = \sum_{\tau\in \bsq} \mc{P}_i^\tau
\end{equation}
where $\mc{P}^\tau_i$ has support in the space
\begin{equation}
V^{[\tau]} = \{S_{\sigma,\sigma\cdot\tau}\;\;\|\;\;\sigma \in \bf{C}_\tau\}.
\end{equation}
From this we immediately get
\begin{equation}
\laa S_{\sigma,\sigma\cdot\tau} |\mc{P}_i| S_{\sigma',\sigma'\cdot\tau'} \raa = \laa S_{\sigma,\sigma\cdot\tau} |\mc{P}_i| S_{\sigma',\sigma'\cdot\tau'} \raa\delta_{\tau,\tau'}.
\end{equation}
Now consider $\tau =\tau'$. Again from the Sylvester minor conditions we get for all $\sigma,\sigma'\in \bf{C}_\tau$ that
\begin{equation}
|\laa S_{\sigma,\sigma\cdot\tau} |\mc{P}_i| S_{\sigma',\sigma'\cdot\tau}\raa| \leq \sqrt{\laa S_{\sigma',\sigma'\cdot\tau} |\mc{P}_i| S_{\sigma',\sigma'\cdot\tau}\raa \laa S_{\sigma,\sigma\cdot\tau} |\mc{P}_i| S_{\sigma,\sigma\cdot\tau}\raa}.
\end{equation}
Now consider the case $\sigma =\sigma'$. From \cite{Zhu2015} we can see that the action of the Clifford group on the set $A = \{(\sigma,\sigma\cdot\tau)\;\|\;\;\tau\in \bsq,\; \sigma \in \bf{C}_\tau \}$ is $2$-transitive. That is, for all pairs $(\nu, \mu) \in A$ there is a $G_{\nu,\mu}^{\sigma,\tau}\in \md{C}$ such that
\begin{equation}
{\mc{G}_{\nu,\mu}^{\sigma,\tau}}\tn{2}\big(S_{\sigma,\sigma\cdot\tau}\big) = S_{\nu, \nu\cdot\mu}.
\end{equation}
This implies we can make essentially the same argument as before, that is
\begin{align}
\begin{split}
\laa S_{\sigma,\sigma\cdot\tau} |\mc{P}_i| S_{\sigma,\sigma\cdot\tau} \raa & = \frac{1}{|A|}\sum_{(\mu,\nu)\in A} \laa S_{\nu,\nu\cdot\mu} |{\left(\mc{G}_{\nu,\mu}^{\sigma,\tau}\right)\ct}\tn{2} \mc{P}_i {\mc{G}_{\nu,\mu}^{\sigma,\tau}}\tn{2}| S_{\nu,\nu\cdot\mu} \raa\\
& = \frac{1}{|A|}\sum_{(\mu,\nu)\in A} \laa S_{\nu,\nu\cdot\mu} |\mc{P}_i| S_{\nu,\nu\cdot\mu} \raa\\
&= \frac{2\tr(\mc{P}_i)}{(d^2-1)(d^2/2-2)}
\end{split}
\end{align}
where we have used the fact that $\mc{G}\tn{2}$ commutes with $\mc{P}_i$ for all $G \in \md{C}$ and also the definition of the space $V_{[S]}$ (given in \cref{lem:Clifford_irreps}). The factor of two appears from the fact that the set $A$ counts the basis of $V_{[S]}$ twice since $S_{\nu, \nu\cdot\mu} =S_{\nu\cdot\mu, \nu} $ for all $(\mu, \nu\cdot\mu)\in A$. We have also used that $|A| =| {\bsq}| | {\bf{C}_\tau}| = (d^2-1)(d^2/2-2)$. This proves the second claim of the lemma.

The proof of the third claim of the lemma proceeds in the same way as the proof of the second claim with the difference that anti-commuting, rather than commuting elements of the Pauli group must considered. We will not write it down explicitly.
\end{proof}

\subsection{Bound on sum of squares of the diagonal elements of a quantum channel}

This lemma (\cref{lem:diagonal_squared_channel}) proves that the diagonal elements of a CPTP map are generically quite close to their mean. The key technique used here is the fact that the diagonal elements of a CPTP map are invariant under Pauli twirling. This is a structural result about quantum channels on arbitrarily many qubits and might be of independent interest. We use it to establish the quadratic scaling of the variance in the infidelity in the case of SPAM-free RB.
\begin{lemma}\label{lem:diagonal_squared_channel}
Let $\mc{E}:\M \to \M$ be a quantum channel with infidelity $r$ and depolarizing parameter $f = 1- \frac{dr}{d-1}$. The quantity
\begin{equation}\label{quantity}
\frac{1}{d^2-1} \sum_{\tau \in \boldsymbol{\sigma}_q} \mc{E}^2_{\tau,\tau},
\end{equation}
where $\mc{E}_{\tau,\tau} = \inp{\tau}{\mc{E}(\tau)}$, has the following upper and lower bounds in terms of the infidelity $r$
\begin{align}\label{diag_squared_eq}
f^2 =1-\frac{2d}{d-1}r + \frac{d^2}{(d-1)^2} r^2 \leq \frac{1}{d^2-1}\sum_{\tau \in \boldsymbol{\sigma}_q}\mc{E}^2_{\tau,\tau} \leq 1 - \frac{2d}{d-1}r + \frac{2(d+1)}{(d-1)}r^2.
\end{align}
\end{lemma}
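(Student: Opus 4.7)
The plan is to rewrite the quantity of interest in the form
\[
\frac{1}{d^2-1}\sum_{\tau\in\bsq}\mc{E}_{\tau,\tau}^2 \;=\; 1 - \frac{2d}{d-1}r + \frac{1}{d^2-1}\sum_{\tau\in\bsq}(1-\mc{E}_{\tau,\tau})^2,
\]
which follows by expanding $\mc{E}_{\tau,\tau}^2=(1-(1-\mc{E}_{\tau,\tau}))^2$ and using $\sum_{\tau}\mc{E}_{\tau,\tau}=(d^2-1)f=(d^2-1)-d(d+1)r$. Equivalently this equals $f^2 + \frac{1}{d^2-1}\sum_\tau(\mc{E}_{\tau,\tau}-f)^2$, so the lower bound $f^2$ is immediate from the non-negativity of the last term (or equivalently from Cauchy--Schwarz applied to the mean of the diagonal).

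For the upper bound I would reduce to Pauli channels via the Pauli twirl $\mc{E}':=|\md{P}_q|^{-1}\sum_{P\in\md{P}_q}\mc{P}\mc{E}\mc{P}$. Writing $P\tau P = \epsilon_P(\tau)\tau$ with $\epsilon_P(\tau)\in\{\pm1\}$ equal to $+1$ when $P$ commutes with $\tau$ and $-1$ otherwise, cyclicity of the trace gives $\tr(\tau P \mc{E}(\tau) P) = \epsilon_P(\tau)\mc{E}_{\tau,\tau}$ and hence $\mc{E}'_{\tau,\tau}=\frac{1}{|\md{P}_q|}\sum_P\epsilon_P(\tau)^2\mc{E}_{\tau,\tau}=\mc{E}_{\tau,\tau}$. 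So $r(\mc{E}')=r(\mc{E})$ and the quantity to be bounded is unchanged. Since $\mc{E}'$ is manifestly a Pauli channel (its Liouville matrix is Pauli-diagonal, by the orthogonality $|\md{P}_q|^{-1}\sum_P\epsilon_P(\sigma)\epsilon_P(\tau)=\delta_{\sigma,\tau}$), it suffices to prove the upper bound for $\mc{E}(\rho)=\sum_P p_P P\rho P$ with $p_P\geq 0$ and $\sum_P p_P=1$.

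For such a Pauli channel a short computation yields $\mc{E}_{\tau,\tau}=\sum_P p_P\epsilon_P(\tau)$, hence $1-\mc{E}_{\tau,\tau}=2\sum_{P:\{P,\tau\}=0}p_P\geq 0$. Cauchy--Schwarz applied to this non-negative sum gives $\bigl(\sum_{P:\{P,\tau\}=0}p_P\bigr)^2 \leq (1-p_I)\sum_{P:\{P,\tau\}=0}p_P$. Summing over $\tau$ and exchanging the order of summation, and using that every non-identity Pauli anti-commutes with exactly $d^2/2$ elements of $\bsq$, yields $\sum_\tau(1-\mc{E}_{\tau,\tau})^2\leq 2d^2(1-p_I)^2$. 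The final step substitutes the identity $1-p_I=(d+1)r/d$, which comes from evaluating $\sum_\tau\epsilon_P(\tau)$ separately for $P=I$ and $P\neq I$ to obtain $\sum_\tau\mc{E}_{\tau,\tau}=d^2 p_I-1=(d^2-1)f$; this gives $\sum_\tau(1-\mc{E}_{\tau,\tau})^2\leq 2(d+1)^2 r^2$, and dividing by $d^2-1$ produces exactly the stated coefficient $\frac{2(d+1)}{d-1}r^2$ above the linear term.

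The main obstacle is the Pauli-twirl reduction: one must check that the twirled channel remains CPTP (it does, as a convex combination of Pauli conjugations) and that both $\sum_\tau\mc{E}_{\tau,\tau}^2$ and $r$ are preserved. Everything else is elementary Cauchy--Schwarz plus a symplectic counting identity. It is worth noting that the resulting upper bound is tight when $\mc{E}$ is conjugation by a single non-identity Pauli, in which case the Cauchy--Schwarz step above is saturated, giving a useful sanity check on the overall sharpness of the argument.
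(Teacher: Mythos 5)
Your proof is correct and arrives at exactly the stated bounds, but it takes a noticeably more elementary route than the paper for both inequalities. For the lower bound, the paper sets up a quadratic program over Pauli channels and identifies the uniform distribution over non-identity Paulis as the minimizer; your variance decomposition $\frac{1}{d^2-1}\sum_{\tau}\mc{E}_{\tau,\tau}^2 = f^2 + \frac{1}{d^2-1}\sum_{\tau}(\mc{E}_{\tau,\tau}-f)^2 \geq f^2$ gets the same conclusion in one line with no optimization and no Pauli-twirl reduction at all. For the upper bound, both you and the paper first reduce to Pauli channels via the Pauli twirl (using that the twirl preserves each diagonal entry $\mc{E}_{\tau,\tau}$ individually, hence preserves both the objective and $f$, and preserves CPTP); but where the paper then solves the resulting QP (citing Boyd) to find that the maximum puts all non-identity weight on a single Pauli, you instead bound $(1-\mc{E}_{\tau,\tau})^2 = 4\bigl(\sum_{P:\{P,\tau\}=0}p_P\bigr)^2 \leq 4(1-p_I)\sum_{P:\{P,\tau\}=0}p_P$ pointwise and use the counting identity $|{\bf{N}}_\tau|=d^2/2$ after swapping the sums; your observation that this inequality saturates precisely when the non-identity weight sits on a single Pauli recovers the paper's extremal channel as a check. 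What your approach buys is brevity and transparency about where tightness comes from; what the paper's QP formulation buys is that it simultaneously produces both extremes from one computation. The only step you should make fully explicit is the claim that a CPTP map with Pauli-diagonal Liouville matrix is a convex mixture of Pauli conjugations with \emph{nonnegative} weights $p_P$ — nonnegativity does not follow from diagonality alone but from complete positivity via the Choi matrix; this is the result the paper attributes to Holevo, and your argument needs it to justify $\sum_{P:\{P,\tau\}=0}p_P\geq 0$ and the factor bound $\sum_{P:\{P,\tau\}=0}p_P\leq 1-p_I$.
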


\begin{proof}
We begin by noting that upper and lower bounds of the quantity \cref{quantity} can be found by maximizing and minimizing respectively the following optimization
\begin{align}\label{optimization}
\begin{aligned}
& \underset{\{\mc{E}_{\tau\tau}\}_\tau}{\text{max (min)}}
& & \sum_{\tau \in \boldsymbol{\sigma}_q}\mc{E}^2_{\tau,\tau} \\
& \text{subject to}
& & \sum_{\tau \in \boldsymbol{\sigma}_q} \mc{E}_{\tau,\tau} =(d^2-1)f\\
& & &  \mc{E} \text{  a CPTP map}.
\end{aligned}
\end{align}
Here we maximize (minimize) the quantity \cref{quantity} over all possible CPTP maps which have depolarizing parameter $f$ .
Solving this optimization problem is not easy since it not clear how to express the CP condition in terms of the optimization parameters $\mc{E}_{\tau\tau}$. We will therefore relax this problem to an easier one which we can solve. We begin by noting that the optimization variables $\mc{E}_{\tau\tau}$ are invariant under the action of a Pauli channel, i.e. for all $G\in\md{P}$ with $\md{P}$ the Pauli group, we have that
\begin{align}\label{equiv}
\begin{split}
(\mc{G}\ct \mc{E}\mc{{G}})_{\tau,\tau}  = \inp{\tau}{G\mc{E}(G\ct \tau G)G\ct} &= \inp{G\ct\tau G}{\mc{E}(G\ct\tau G)}\\  
&= \left[\text{sgn}(\tau,G)\right]^2 \inp{G\ct G\tau }{\mc{E}(G\ct G\tau )}   = \inp{\tau}{\mc{E}(\tau)} = \mc{E}_{\tau,\tau},
\end{split}
\end{align}
for all $\tau \in \boldsymbol{\sigma}_q\cup\sigma_0$ where $\text{sgn}(\tau, G)$ is defined as
\begin{equation}\label{eq:sign_sign_function_pauli}
\text{sgn}(\tau, G) = \begin{cases} -1 \hspace{3mm} \text{if}\hspace{3mm} \{\tau, G\} = 0,\\
+1 \hspace{3mm} \text{if}\hspace{3mm} [G, \tau] = 0,
\end{cases}
\end{equation}
which, since $\tau\in \boldsymbol{\sigma}_q\cup\sigma_0 $ is a normalized element of the Pauli group, is well defined because elements of the multi-qubit Pauli group can either commute $([.,.])$ or anti-commute $(\{.,.\})$ with each other~\cite{Gottesman1998}. By \cref{equiv} and linearity we can now note that the optimization variables in the optimization \cref{optimization} are invariant under twirling over the Pauli group $\md{P}$, i.e.
\begin{equation}
\mc{T}_P(\mc{E})_{\tau,\tau}  =\frac{1}{|\md{P}|} \sum_{G\in \md{P}} \inp{G\ct\tau G}{\mc{E}(G\ct\tau G)} =\frac{1}{|\md{P}|} \sum_{G\in \md{P}} \mc{E}_{\tau,\tau} = \mc{E}_{\tau,\tau}.
\end{equation}
Note also that the ``twirl'' operation, for any group, preserves complete positivity~\cite{Wolf2012}. This means we can relax the optimization \cref{optimization} to
 \begin{align}\label{optimization_pauli}
\begin{aligned}
& \underset{\{\mc{T}_{\md{P}}(\mc{E})_{\tau,\tau}\}_\tau}{\text{max (min)}}
& & \sum_{\tau \in \boldsymbol{\sigma}_q}\mc{T}_{\md{P}}(\mc{E})^2_{\tau,\tau} \\
& \text{subject to}
& & \sum_{\tau \in \boldsymbol{\sigma}_q} \mc{T}_{\md{P}}(\mc{E})_{\tau,\tau} =(d^2-1)f\\
& & & \mc{T}_{\md{P}}(\mc{E}) \text{  a CPTP map.}
\end{aligned}
\end{align}
Note that this is a relaxation of the previous optimization because while the twirl of a CP map will always be CP the opposite need no be true.
Now we use the following result due to Holevo~\cite{Holevo2005} which states that any CPTP map $\mc{E}$, twirled over the Pauli group, is of the form
\begin{align}
\mc{T}_{\md{P}}(\mc{E})(X) = \sum_{G\in \md{P}} p_G G X G\ct \hspace{10mm}\forall X \in \bbC^{d\times d},
\end{align}
where $\{p_G\}_G$ is a probability distribution, i.e. $p_G \geq 0, \forall G\in \md{P}$ and $\sum_{G\in\md{P}}p_G = 1$. Let us now rewrite the optimization \cref{optimization_pauli} in terms of this probability distribution. We begin by noting that since $\mc{E}$ is TP we have that $\mc{E}_{\sigma_0\sigma_0} =1$ and hence we can write the depolarizing constraint in \cref{optimization_pauli} as
\begin{equation}
\sum_{\tau \in \boldsymbol{\sigma}_q} \mc{T}_{\md{P}}(\mc{E})_{\tau,\tau} =(d^2-1)f \iff\sum_{\tau \in \boldsymbol{\sigma}_q\cup\sigma_0} \mc{T}_{\md{P}}(\mc{E})_{\tau,\tau} =(d^2-1)f + 1.
\end{equation}
Now, using the form of the Pauli-twirled channel, we can write the RHS of this equivalence as
\begin{align}
\begin{split}
\sum_{\tau \in \boldsymbol{\sigma}_q\cup\sigma_0} \mc{T}_{\md{P}}(\mc{E})_{\tau,\tau} &= \sum_{\tau \in \boldsymbol{\sigma}_q\cup\sigma_0} \sum_{G \in \md{P}} p_G\inp{\tau}{G\tau G\ct}\\
&= \sum_{G\in \md{P}} p_G  \sum_{\tau \in \boldsymbol{\sigma}_q\cup\sigma_0} \text{sgn}(\tau, G)\\
&= p_{I}d^2,
\end{split}
\end{align}
where in the last line we used that the identity Pauli element $I$ commutes with all Pauli matrices $\tau \in \boldsymbol{\sigma}_q\cup\sigma_0$, whereas every non-identity Pauli $G$ commutes with exactly of the elements of $\boldsymbol{\sigma}_q\cup\sigma_0$ and anti-commutes with the other half. We also used that $|\boldsymbol{\sigma}_q\cup\sigma_0| = d^2$. We can make a similar calculation for the objective of \cref{optimization_pauli} which gives
\begin{align}
\begin{split}
\sum_{\tau \in \boldsymbol{\sigma}_q}\mc{T}_{\md{P}}(\mc{E})^2_{\tau,\tau} &= \sum_{\tau \in \boldsymbol{\sigma}_q\cup \sigma_0} \mc{T}_{\md{P}}(\mc{E})^2_{\tau,\tau} - 1\\
&= (-1)+\sum_{\tau \in \boldsymbol{\sigma}_q\cup \sigma_0 }\left(\sum_{G\in \md{P}} p_G \inp{\tau}{G\tau G\ct}\right)^2\\
&= (-1)+ \sum_{G, \hat{G} \in \md{P}} p_G p_{\hat{G}} \sum_{\tau \in \boldsymbol{\sigma}_q\cup \sigma_0 } \text{sgn}(\tau, G)\text{sgn}(\tau, \hat{G}\ct)\\
&= (-1) +  \sum_{G \in \md{P}} p_G^2 \sum_{\tau \in \boldsymbol{\sigma}_q\cup \sigma_0 }\text{sgn}(\tau, GG\ct)+ \sum_{\substack{G,\hat{G}\in\md{P} \\ G\neq\hat{G}}}p_G p_{\hat{G}}\;\sum_{\tau \in \boldsymbol{\sigma}_q\cup \sigma_0 }\text{sgn}(\tau, G\hat{G}\ct)\\
&= (-1) + d^2 \sum_{G\in \md{P}}p_G^2,
\end{split}
\end{align}
 where we have used that $\text{sgn}(\tau,G)\text{sgn}(\tau,\hat{G}) = \text{sgn}(\tau, G\hat{G})$, that $G G\ct = I , \forall G\in \md{P}$ and again that the Pauli identity $I$ commutes with all elements of $\boldsymbol{\sigma}_q\cup\sigma_0$ while every non-identity Pauli $G\hat{G}\ct, G\neq G\hat{G}$ commutes with exactly half of the elements of $\boldsymbol{\sigma}_q\cup\sigma_0$ and anti-commutes with the other half. We have now rewritten the optimization \cref{optimization_pauli} completely in terms of the probability distribution $\{p_G\}_G$. This becomes
  \begin{align}\label{optimization_distr}
\begin{aligned}
& \underset{\{p_G\}_G}{\text{max (min)}}
& & (-1) + d^2 \sum_{G\in \md{P}}p_G^2 \\
& \text{subject to}
& & d^2 p_I =(d^2-1)f + 1\\
& & &\sum_{G\in \md{P}}p_G = 1\\
& & & p_G \geq 0 \hspace{10mm}G\in \md{P}.
\end{aligned}
\end{align}
Noting that the element $p_I$ is essentially fixed we can eliminate this element from the optimization and obtain an even simpler optimization
  \begin{align}\label{optimization_distr_2}
\begin{aligned}
& \underset{\{p_G\}_G}{\text{max (min)}}
& & (-1) + d^2 \sum_{G\in \md{P}/\{I\}}p_G^2  + d^2\left(\frac{d^2-1}{d^2}f +\frac{1}{d^2}\right)^2\\
& \text{subject to}
& & \sum_{G\in \md{P}/\{I\}}p_G = 1 - \frac{d^2-1}{d^2}f -\frac{1}{d^2}\\
& & & p_G \geq 0 \hspace{10mm}G\in \md{P}/\{I\}.
\end{aligned}
\end{align}
The above optimization is a well studied instance of a class of optimization problems called quadratic programs~\cite{Boyd2004}. This problem has the minimum~\cite[Chapter 4, Section 4]{Boyd2004}:
\begin{equation}
p_{G, min} = \frac{1}{d^2-1} \left( 1 - \frac{d^2-1}{d^2}f -\frac{1}{d^2}\right) \hspace{10mm}\forall G \in \md{P}/\{I\},
\end{equation}
and has $d^2-1$ degenerate maxima indexed by the non-identity Pauli elements $\tilde{G}$ of the form
\begin{equation}
p_{G, max} = \begin{cases} 1 - \frac{d^2-1}{d^2}f -\frac{1}{d^2} \hspace{3mm} \text{if} \hspace{3mm} G = \tilde{G}\\
0 \hspace{6mm} \text{otherwise}.
\end{cases}
\end{equation}
This means we can lower bound the quantity \cref{quantity}, for any CPTP map $\mc{E}$, by:
\begin{align}
\frac{1}{d^2-1} \sum_{\tau \in \boldsymbol{\sigma}_q}\mc{E}^2_{\tau,\tau} \geq \frac{d^2}{d^2-1}\left(\frac{d^2-1}{d^2}f +\frac{1}{d^2}\right)^2 + \frac{d^2}{(d^2-1)^2} \left( 1 - \frac{d^2-1}{d^2}f -\frac{1}{d^2}\right)^2 - \frac{1}{d^2-1}.
\end{align}
By now using the relation $f = 1- \frac{dr}{d-1}$ we can rewrite this lower bound in terms of the infidelity $r$. This process is straightforward but rather tedious so we will not write it down. At the end of the calculation we obtain
\begin{equation}
\frac{1}{d^2-1} \sum_{\tau \in \boldsymbol{\sigma}_q}\mc{E}^2_{\tau,\tau} \geq 1 - \frac{2dr}{d-1} + \frac{d^2r^2}{(d-1)^2}.
\end{equation}
Similarly we can write for the upper bound
\begin{equation}
\frac{1}{d^2-1} \sum_{\tau \in \boldsymbol{\sigma}_q}\mc{E}^2_{\tau,\tau} \leq \frac{d^2}{d^2-1}\left( \frac{d^2-1}{d^2}f + \frac{1}{d^2}\right)^2 + \frac{d^2}{d^2-1} \left( 1 - \frac{d^2-1}{d^2}f -\frac{1}{d^2}\right)^2 - \frac{1}{d^2-1},
\end{equation}
which, by essentially the same tedious but straightforward calculation yields
\begin{equation}
\frac{1}{d^2-1} \sum_{\tau \in \boldsymbol{\sigma}_q}\mc{E}_{\tau\tau} \leq 1 - 2\frac{dr}{d-1} +\frac{2(d+1)}{(d-1)}r^2,
\end{equation}
which completes the lemma.

\end{proof}

\subsection{Eigenvalues of twirled quantum channels}
\Cref{lem:unitarity_upper_bound} proves that the unitarity upper bounds the eigenvalues of the twirled superoperator $\mc{T}_{\phi_{\mathrm{TS}}}(\mc{E}\tn{2})$. This resolves an open question posed in \cite{Wallman2014} and allows us to establish the long sequence length behavior of the variance of RB.
\begin{lemma}\label{lem:unitarity_upper_bound}
Let $\mc{E}:\M \to \M$ be a quantum channel with unitarity $u$ and consider the twirled operator $\mc{T}_{\phi_{\mathrm{TS}}}(\mc{E}\tn{2})$ with respect to the traceless-symmetric representation. This operator can then be written as (\cref{lem:Clifford_irreps,lem:Schur})
\begin{equation}
\mc{T}_{\phi_{\mathrm{TS}}}(\mc{E}\tn{2}) = \sum_{i\in \mc{Z}} \chi_i\mc{P}_i
\end{equation}
with $\mc{Z} = \{\mathrm{tr},1,2,[1],[2],[3],\{1\},\{2\}\}$, $\mc{P}_i$ the projector onto the spaces $V_i\subset \mc{M}_d\tn{2}$ and
\begin{equation}
\chi_i := \frac{\tr(\mc{E}\mc{P}_i)}{\tr(\mc{P}_i)},
\end{equation}
where the trace is taken over superoperators.
We now have for all $i\in \mc{Z}$ that
\begin{equation}
\chi_i \leq u.
\end{equation}
\end{lemma}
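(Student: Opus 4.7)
The plan is to apply Schur's lemma to express $\chi_i$ as a Clifford-orbit average of a Liouville-matrix trace, then use the matrix Cauchy--Schwarz inequality together with the Clifford twirl on the traceless Pauli subspace to identify the resulting bound with the unitarity $u$.

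First, by Schur's lemma (\cref{lem:Schur}) and the Clifford invariance of $\mc{P}_i$, for any unit vector $|v\raa\in V_i$,
\begin{equation*}
\chi_i \;=\; \laa v|\,\mc{T}_{\phi_{\mathrm{TS}}}(\mc{E}\tn{2})\,|v\raa \;=\; \mathbb{E}_{G\in\md{C}}\,\laa \mc{G}\tn{2}v\,|\,\mc{E}\tn{2}\,|\,\mc{G}\tn{2}v\raa.
\end{equation*}
Representing $v_G := \mc{G}\tn{2} v$ in the Pauli basis as a symmetric matrix $V_G$, each summand equals the Liouville-matrix trace $\tr(V_G\,\mc{E}^T V_G\,\mc{E})$, where $\mc{E}^T$ is the transpose of the Liouville matrix (equal to $\mc{E}\ct$ in a Hermitian Pauli basis, since a Hermiticity-preserving channel has real Liouville entries). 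The matrix Cauchy--Schwarz inequality $|\tr(AB)|\le \|A\|_F\|B\|_F$ applied with $A = V_G\mc{E}^T$ and $B = V_G\mc{E}$, followed by Jensen's inequality to push the Clifford expectation inside the square root, then yields
\begin{equation*}
|\chi_i| \;\le\; \sqrt{\mathbb{E}_G\,\tr\bigl(V_G^2\,\mc{E}^T\mc{E}\bigr)\;\cdot\;\mathbb{E}_G\,\tr\bigl(V_G^2\,\mc{E}\mc{E}^T\bigr)}.
\end{equation*}

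The crucial computation is that $\mathbb{E}_G[V_G^2]$ equals $\tfrac{1}{d^2-1}$ times the projector onto the traceless Pauli subspace: transitivity of the Clifford action on $\bsq$ equalises the diagonal entries of $V_G^2$ (each forced to $\tr(V^2)/(d^2-1) = 1/(d^2-1)$), while averaging over the Pauli subgroup of $\md{C}$ kills the off-diagonal entries through sign randomisation. Substituting this and invoking the Liouville formula for the unitarity (\cref{eq:liouville_pars}) identifies the first factor under the square root as $u$ exactly; whenever $\mc{E}$ is unital the second factor also equals $u$, delivering $\chi_i\le u$. The main obstacle is therefore the non-unital case, in which $\mathbb{E}_G\tr(V_G^2\mc{E}\mc{E}^T)$ exceeds $u$ by a term controlled by $\|\mc{E}(\id/d)-\id/d\|_2^2$; closing this gap cleanly will likely require treating the three irrep families $V_\mathrm{d}$, $V_{[S]}$, $V_{\{S\}}$ from \cref{lem:Clifford_irreps} separately, using the explicit Pauli-basis formulas for $\chi_i$ on each together with the constraints imposed by complete positivity of $\mc{E}$ to absorb the extra non-unital contribution.
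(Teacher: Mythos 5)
Your route is genuinely different from the paper's and is essentially workable, but as submitted it has a real gap: you only establish $\chi_i\le u$ for unital channels and explicitly leave the non-unital case open, deferring to an unexecuted irrep-by-irrep analysis. The good news is that the obstacle you identify is an artifact of where you apply Cauchy--Schwarz, not a genuine feature of the problem. Since $|v\raa\in\Vts$ has no component along $\sigma_0$ in either tensor factor, the quadratic form $\laa \mc{G}\tn{2}v|\mc{E}\tn{2}|\mc{G}\tn{2}v\raa$ only ever sees the traceless block $M:=P\mc{E}P$ of the Liouville matrix, where $P$ is the projector onto $\mathrm{span}(\bsq)$; writing $V_G$ for the (real, symmetric, traceless-supported) coefficient matrix of $\mc{G}\tn{2}v$, the summand is exactly $\tr(V_GM^TV_GM)$. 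Applying Cauchy--Schwarz to the \emph{projected} matrices $A=V_GM^T$ and $B=V_GM$ gives $|\tr(V_GM^TV_GM)|\le\sqrt{\tr(V_G^2M^TM)\,\tr(V_G^2MM^T)}$, and after averaging (your computation $\mathbb{E}_G[V_G^2]=\tfrac{1}{d^2-1}P$ is correct, and follows even more directly from Schur's lemma applied to the irreducible single-copy representation on $\mathrm{span}(\bsq)$) \emph{both} factors equal $\tfrac{1}{d^2-1}\tr(MM^T)=\tfrac{1}{d^2-1}\sum_{\sigma,\sigma'\in\bsq}\mc{E}_{\sigma,\sigma'}^2=u$ by cyclicity of the trace. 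The non-unital column $\mc{E}|\sigma_0\raa$ only enters if you carry the full Liouville matrix into the Frobenius norms, which is unnecessary; the row $\laa\sigma_0|\mc{E}$ you would otherwise worry about vanishes by trace preservation. With that repair your argument closes cleanly and uniformly over all $i\in\mc{Z}$.

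For comparison, the paper's proof is considerably more laborious: it treats the families $\mc{Z}_{\mathrm{d}}$, $\mc{Z}_{[S]}$ and $\mc{Z}_{\{S\}}$ separately, expands $\chi_i$ in the explicit bases of \cref{lem:Clifford_irreps}, bounds the matrix elements of each $\mc{P}_i$ entrywise via \cref{lem:projector_lemma} (Sylvester minor conditions plus transitivity of the Clifford action), and then reassembles $\sum_{\sigma,\sigma'}\mc{E}_{\sigma,\sigma'}^2$ using $2|ab|\le a^2+b^2$ and the combinatorics of the sets $\bf{C}_\tau$ and $\bf{N}_\tau$. Your approach, once repaired, buys a single uniform argument that never needs the fine structure of the individual irreps --- only irreducibility of the single-copy traceless representation --- and it makes transparent why the answer is the unitarity. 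The fallback you propose for the non-unital case is essentially the paper's case analysis, so you should either carry out the projection fix above or be prepared to reproduce that analysis in full.
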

\begin{proof}
We begin by considering $i\in \mc{Z}_{\mathrm{d}}$. Note first that for $i =  \mathrm{tr}$ we have that
\begin{equation}
\chi_{\mathrm{i}} = \frac{\tr(\mc{P}_{\mathrm{tr}}\mc{E}\tn{2})}{\tr(P_{\mathrm{tr}})} = \frac{1}{d^2-1}\sum_{\tau,\tau'\in \bsq} \laa \tau\tn{2}|\mc{E}\tn{2}|{\tau'}\tn{2}\raa,
\end{equation}
where we have used the definition of $\mc{P}_{\mathrm{tr}}$ (\cref{lem:Clifford_irreps}). We can calculate
\begin{align}
\begin{split}
\frac{1}{d^2-1}\sum_{\tau,\tau'\in \bsq} \laa \tau\tn{2}|\mc{E}\tn{2}|{\tau'}\tn{2}\raa &=\frac{1}{d^2-1}\sum_{\tau,\tau'\in \bsq} \laa \tau|\mc{E}|{\tau'}\raa^2\\
&= \frac{1}{d^2-1}\sum_{\tau,\tau'\in \bsq} \laa \tau|\mc{E}|{\tau'}\raa\laa \tau'|\mc{E}\ct |\tau \raa\\
&= \frac{1}{d^2-1}\sum_{\tau,\tau'\in \bsq} \laa \tau|\mc{E}_u\mc{E}_u\ct|\tau\raa\\
&= u(\mc{E})
\end{split}
\end{align}
where we have used the definition of the unitarity. Now consider $i \in \mc{Z}_{\mathrm{d}}$. We have
\begin{align}
\begin{split}
\chi_{i}  &= \frac{\tr(\mc{P}_i\mc{E}\tn{2})}{\tr(\mc{P}_i)} \\
&= \frac{1}{\tr(\mc{P}_i)}\sum_{\tau \in \bsq}\laa \tau\tn{2} |\mc{P}_i \mc{E}\tn{2}|\tau\tn{2}\raa\\
& = \frac{1}{\tr(\mc{P}_i)}\sum_{\tau,\tau' \in \bsq}\laa \tau\tn{2} |\mc{P}_i|{\tau'}\tn{2}\raa\laa{\tau'}\tn{2}| \mc{E}\tn{2}|\tau\tn{2}\raa
\end{split}
\end{align}
Where we have used that the support of $\mc{P}_i$ lies in $V_{\mathrm{d}}$ (defined in \cref{lem:Clifford_irreps}).
Now we can use \cref{lem:projector_lemma} to upper bound this quantity. We have
\begin{align}
\begin{split}
\chi_i &\leq \frac{1}{\tr(\mc{P}_i)}\sum_{\tau,\tau' \in \bsq}\frac{\tr(\mc{P}_i)}{d^2-1)}\laa{\tau'}\tn{2}| \mc{E}\tn{2}|\tau\tn{2}\raa\\
&= \frac{1}{d^2-1}\sum_{\tau,\tau'\in \bsq} \laa{\tau'}| \mc{E}|\tau\raa \laa{\tau}| \mc{E}\ct|{\tau'}\raa\\
&= u
\end{split}
\end{align}
where we have again used the definition of the unitarity.\\ Next we consider the case of $i\in \mc{Z}_{[S]}$. We have
\begin{align}
\chi_i = \frac{\tr(\mc{P}_i\mc{E}\tn{2})}{\tr(\mc{P}_i)} = \frac{1}{4} \frac{1}{\tr(\mc{P}_i)} \sum_{\tau, \tau'\in \bsq}\sum_{\substack{\sigma\in \bf{C}_\tau\\\sigma'\in \bf{C}_{\tau'}}} \laa S_{\sigma, \sigma\cdot\tau}|\mc{P}_i|S_{\sigma',\sigma'\cdot\tau'}\raa \laa S_{\sigma',\sigma'\cdot\tau'}|\mc{E}\tn{2}| S_{\sigma,\sigma\cdot\tau}\raa
\end{align}
where we have used that the support of $\mc{P}_i$ lies in $V_{[S]}$ (defined in \cref{lem:Clifford_irreps}) and the factor of $1/4$ accounts for the fact that we are double counting the basis of $V_{[S]}$ since $S_{\sigma, \sigma\cdot \tau} = S_{\sigma\cdot \tau, \sigma}$ (we double count twice: once in the definition of the trace and once in the resolution of the identity on $V_{[S]}$). From \cref{lem:projector_lemma} we can lose one of the sums and get
\begin{align}
\label{eq:chi_sym_unitarity}
\begin{split}
\chi_i  &= \frac{1}{4} \frac{1}{\tr(\mc{P}_i)} \sum_{\tau, \tau'\in \bsq}\sum_{\substack{\sigma\in \bf{C}_\tau\\\sigma'\in \bf{C}_{\tau'}}} \laa S_{\sigma, \sigma\cdot\tau}|\mc{P}_i|S_{\sigma',\sigma'\cdot\tau'}\raa \delta_{\tau,\tau'}\laa S_{\sigma',\sigma'\cdot\tau'}|\mc{E}\tn{2}| S_{\sigma,\sigma\cdot\tau}\raa\\
&= \frac{1}{4} \frac{1}{\tr(\mc{P}_i)} \sum_{\tau\in \bsq}\sum_{\sigma,\sigma'\in \bf{C}_\tau} \laa S_{\sigma, \sigma\cdot\tau}|\mc{P}_i|S_{\sigma',\sigma'\cdot\tau}\raa \laa S_{\sigma',\sigma'\cdot\tau}|\mc{E}\tn{2}| S_{\sigma,\sigma\cdot\tau}\raa.
\end{split}
\end{align}
We can further use \cref{lem:projector_lemma} to upper bound this quantity as
\begin{align}
\begin{split}
\chi_i &\leq \frac{1}{4} \frac{1}{\tr(\mc{P}_i)} \sum_{\tau\in \bsq}\sum_{\sigma,\sigma'\in \bf{C}_\tau} |\laa S_{\sigma, \sigma\cdot\tau}|\mc{P}_i|S_{\sigma',\sigma'\cdot\tau}\raa| |\laa S_{\sigma',\sigma'\cdot\tau}|\mc{E}\tn{2}| S_{\sigma,\sigma\cdot\tau}\raa|\\
&\leq \frac{1}{4} \frac{1}{\tr(\mc{P}_i)} \sum_{\tau\in \bsq}\sum_{\sigma,\sigma'\in \bf{C}_\tau} \frac{2\tr(\mc{P}_i)}{(d^2-1)(d/2-2)}|\laa S_{\sigma',\sigma'\cdot\tau}|\mc{E}\tn{2}| S_{\sigma,\sigma\cdot\tau}\raa|\\
&= \frac{1}{2}\frac{1}{(d^2-1)(d^2/2-2)}\sum_{\tau\in \bsq}\sum_{\sigma,\sigma'\in \bf{C}_\tau}|\laa\sigma|\mc{E}|\sigma'\raa \laa \sigma\cdot\tau|\mc{E}|\sigma'\cdot\tau\raa +  \laa\sigma\cdot\tau|\mc{E}|\sigma'\raa \laa \sigma|\mc{E}|\sigma'\cdot\tau\raa|
\end{split}
\end{align}
where we have also used the triangle inequality for the absolute value. Using the triangle inequality again together with the fact that $2|ab|\leq a^2+b^2$ for all $a, b\in \mathbb{R}$ we can write
\begin{align}
\begin{split}
\chi_i & \leq \frac{1}{2}\frac{1}{(d^2-1)(d^2/2-2)}\sum_{\tau\in \bsq}\sum_{\sigma,\sigma'\in \bf{C}_\tau}|\mc{E}_{\sigma,\sigma'} \mc{E}_{\sigma\cdot\tau,\sigma'\cdot \tau}| + |\mc{E}_{\sigma\cdot\tau,\sigma'} \mc{E}_{\sigma,\sigma'\cdot \tau}|\\
&\leq \frac{1}{4}\frac{1}{(d^2-1)(d^2/2-2)}\sum_{\tau\in \bsq}\sum_{\sigma,\sigma'\in \bf{C}_\tau} \mc{E}_{\sigma,\sigma'}^2 + \mc{E}_{\sigma\cdot\tau,\sigma'\cdot \tau}^2 + \mc{E}_{\sigma\cdot \tau,\sigma'}^2 + \mc{E}_{\sigma,\sigma'\cdot \tau}^2
\end{split}
\end{align}
Now since $\sigma\in \bf{C}_\tau \iff \sigma\cdot\tau \in \bf{C}_\tau$ we can roll the four sums in the above expression into one, that is
\begin{align}
\begin{split}
\chi_i &\leq \frac{1}{(d^2-1)(d^2/2-2)}\sum_{\tau\in \bsq}\sum_{\sigma,\sigma'\in \bf{C}_\tau}\mc{E}_{\sigma,\sigma'}^2\\
		& =\sum_{\sigma,\sigma'\in \bsq} \sum_{\tau \in \bf{C}_\sigma\cap \bf{C}_{\sigma'}}  \mc{E}_{\sigma,\sigma'}^2\\
		&\leq \frac{1}{(d^2-1)} \sum_{\sigma,\sigma'\in \bsq}\mc{E}_{\sigma,\sigma'}^2\\
		&=  u
\end{split}
\end{align}
where we used the fact that $\sigma \in \bf{C}_\tau \iff \tau \in {\bf C}_{\sigma}$, the fact that $|{\bf{C}_{\sigma}}\cap {\bf{C}_{\sigma'}}|\leq |{\bf{C}_{\sigma}}|=d^2/2-2$ and the definition of the unitarity. This means we have $\chi_i\leq u$ for all $i \in \mc{Z}_{[S]}$. The argument for $i\in \mc{Z}_{\{S\}}$ is conceptually the same as that for $i \in \mc{Z}_{[S]}$ so we will not write it down.
\end{proof}

\Cref{lem:chi_f_bound} proves that the eigenvalues of the twirled superoperator $\mc{T}_{\phi_{\mathrm{TS}}}(\mc{E}\tn{2})$ are close to the depolarizing parameter $f$. This fact is key in our analysis of the variance of RB in the presence of SPAM.
\begin{lemma}\label{lem:chi_f_bound}
Let $\mc{E}:\M \to \M$ be a quantum channel with infidelity $r$ and depolarizing parameter $f = 1- \frac{dr}{d-1}$ and consider the twirled operator $\mc{T}_{\phi_{\mathrm{TS}}}(\mc{E}\tn{2})$ with respect to the traceless-symmetric representation. This operator can then be written as (\cref{lem:Clifford_irreps,lem:Schur})
\begin{equation}
\mc{T}_{\phi_{\mathrm{TS}}}(\mc{E}\tn{2}) = \sum_{i\in \mc{Z}} \chi_i\mc{P}_i
\end{equation}
with $\mc{Z} = \{\mathrm{tr},1,2,[1],[2],[3],\{1\},\{2\}\}$, $\mc{P}_i$ the projector onto the spaces $V_i\subset \mc{M}_d\tn{2}$ and
\begin{equation}
\chi_i := \frac{\tr(\mc{E}\mc{P}_i)}{\tr(\mc{P}_i)},
\end{equation}
where the trace is taken over superoperators.
We now have for all $i\in \mc{Z}_{\mathrm{d}}$
\begin{equation}
|\chi_i-f^2| \leq \frac{2dr}{d-1},
\end{equation}
and for all $i \in \mc{Z}_{[S]}\cup\mc{Z}_{\{S\}}$
\begin{equation}
|\chi_i-f^2| \leq \frac{2dr}{d-1} .
\end{equation}
subject to the constraint $r\leq \frac{1}{3}$
\end{lemma}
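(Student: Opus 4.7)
The plan is to prove the two directions of the absolute-value inequality separately, since they require rather different arguments.

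For the upper direction $\chi_i - f^2 \leq \frac{2dr}{d-1}$, which is uniform across all $i \in \mc{Z}$, I would simply combine the already proven bound $\chi_i \leq u$ from \cref{lem:unitarity_upper_bound} with the CPTP consequence $u \leq 1$, and then chain the elementary inequality
\[
\chi_i - f^2\;\leq\; 1-f^2\;=\;(1-f)(1+f)\;\leq\; 2(1-f)\;=\;\frac{2dr}{d-1}.
\]
This half of the statement holds for all $r\in[0,1]$ and does not rely on $r\leq 1/3$.

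For the reverse direction $f^2 - \chi_i \leq \frac{2dr}{d-1}$, I would split according to which of $\mc{Z}_{\mathrm{d}}, \mc{Z}_{[S]}, \mc{Z}_{\{S\}}$ the index $i$ belongs to. The case $i=\mathrm{tr}$ is handled by the proof of \cref{lem:unitarity_upper_bound}, which identified $\chi_{\mathrm{tr}} = u$; keeping only the diagonal Pauli entries in the sum defining $u$ and applying Cauchy--Schwarz gives
\[
u\;\geq\;\tfrac{1}{d^2-1}\sum_{\tau\in\bsq}\mc{E}_{\tau\tau}^2\;\geq\;\tfrac{1}{(d^2-1)^2}\Bigl(\sum_{\tau}\mc{E}_{\tau\tau}\Bigr)^{\!2} \;=\; f^2,
\]
so the bound is immediate. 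For the remaining $i \in \mc{Z}_{\mathrm{d}}$, I would use that $\mc{P}_i$ has support in $V_{\mathrm{d}}$ to write $\chi_i \tr(\mc{P}_i) = \sum_{\tau,\tau'\in\bsq}\laa\tau\tn{2}|\mc{P}_i|{\tau'}\tn{2}\raa\,\mc{E}_{\tau\tau'}^2$, and combine the projector estimate from \cref{lem:projector_lemma} with the diagonal-square lower bound $\sum_\tau \mc{E}_{\tau\tau}^2 \geq (d^2-1)f^2$ from \cref{lem:diagonal_squared_channel} to deduce $\chi_i \geq f^2 - \frac{2dr}{d-1}$.

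For $i \in \mc{Z}_{[S]}\cup\mc{Z}_{\{S\}}$, I would start from \cref{eq:chi_sym_unitarity} in the proof of \cref{lem:unitarity_upper_bound} (and its analogue in the anticommuting case), which writes $\chi_i$ as an average of products of entries of $\mc{E}$ in positions dictated by the Clifford irrep structure, namely $\mc{E}_{\sigma,\sigma'}\mc{E}_{\sigma\cdot\tau,\sigma'\cdot\tau}$ together with cross-terms. A pairwise Cauchy--Schwarz step would reduce these products to diagonal entries, after which \cref{lem:diagonal_squared_channel} supplies a lower bound of the form $\chi_i \geq f^2 - \frac{2dr}{d-1}$.

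The hard part is this last step, because the matrix elements of $\mc{E}\tn{2}$ restricted to $V_{[S]}$ and $V_{\{S\}}$ do not factor cleanly into squares of diagonal entries; the Cauchy--Schwarz reduction must absorb cross-terms that are a priori only $O(\sqrt{r})$ in size, not $O(r)$. The hypothesis $r \leq \tfrac{1}{3}$ enters precisely here: it guarantees $f \geq (2d-1)/(3(d-1))$, keeping the average diagonal entries of $\mc{E}$ sufficiently positive that these cross-terms can be controlled linearly, rather than sublinearly, in $r$, so the final bound is indeed $\tfrac{2dr}{d-1}$ and not a spurious $\sqrt{r}$-scaling.
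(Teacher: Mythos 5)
Your treatment of the easy direction ($\chi_i-f^2\leq\tfrac{2dr}{d-1}$ via $\chi_i\leq u\leq 1$ from \cref{lem:unitarity_upper_bound}) and of the indices $i\in\mc{Z}_{\mathrm{d}}$ follows the paper: there one writes $\chi_i\tr(\mc{P}_i)=\sum_{\tau,\tau'}\laa\tau\tn{2}|\mc{P}_i|{\tau'}\tn{2}\raa\,\mc{E}_{\tau'\tau}^2$, lower-bounds the $\tau=\tau'$ part by $f^2$ using \cref{lem:projector_lemma,lem:diagonal_squared_channel}, and lower-bounds the $\tau\neq\tau'$ part by $-(u-f^2)$ using the definition of unitarity, giving $f^2-\chi_i\leq u-f^2\leq 1-f^2$. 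Your sketch should make that off-diagonal step explicit: those terms carry no sign and must be absorbed into $u-f^2$; the diagonal-square bound alone does not deliver $\chi_i\geq f^2-\tfrac{2dr}{d-1}$.

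The genuine gap is in the case $i\in\mc{Z}_{[S]}\cup\mc{Z}_{\{S\}}$. After \cref{lem:projector_lemma} restricts to $\tau=\tau'$, the dominant contribution to $\chi_i$ is proportional to $\sum_{\tau}\sum_{\sigma\in\mathbf{C}_\tau}\mc{E}_{\sigma\sigma}\mc{E}_{\sigma\cdot\tau,\sigma\cdot\tau}$, a sum of products of two \emph{distinct} diagonal entries, and what is needed is a \emph{lower} bound on it of the form $f^2-O(r)$. A pairwise Cauchy--Schwarz (or $2|ab|\leq a^2+b^2$) step only produces upper bounds on the absolute values of such products, or the uselessly lossy lower bound $\geq-\sqrt{\cdots}$; it cannot certify that each product stays close to $f^2$ from below, so this reduction "to diagonal entries" does not go through. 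The paper instead uses the polarization identity $2ab=a^2+b^2-(a-b)^2$, which converts the problem into an \emph{upper} bound on the spread $(\mc{E}_{\sigma\sigma}-\mc{E}_{\sigma\cdot\tau,\sigma\cdot\tau})^2$; that bound, $(\mc{E}_{\sigma\sigma}-\mc{E}_{\mu\mu})^2\leq\bigl[\tfrac{d^2-1}{d^2}(1-f^2)\bigr]^2=O(r^2)$, is obtained by a separate optimization: reduce to Pauli channels exactly as in \cref{lem:diagonal_squared_channel} and solve a small quadratic program over the induced probability distribution subject to the depolarizing constraint. This is the ingredient your proposal is missing, and it is not supplied by $r\leq\tfrac13$: that hypothesis does not act to keep the diagonal entries positive or to tame $O(\sqrt{r})$ cross-terms (the genuinely off-diagonal terms are already controlled by $u-f^2=O(r)$ as in the $\mc{Z}_{\mathrm{d}}$ case); it enters only at the very end, as a sufficient condition for the resulting combination of the quadratic term $\tfrac12\bigl[\tfrac{d^2-1}{d^2}(1-f^2)\bigr]^2$ with the linear-in-$(1-f^2)$ terms to be dominated by $\tfrac{2dr}{d-1}$.
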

\begin{proof}
From \cref{lem:unitarity_upper_bound} we have that $\chi_i\leq u$ for all $i \in \mc{Z}$. And since $u\leq 1$ for all quantum channels~\cite{Wallman2015} we certainly have that
\begin{equation}
\chi_i-f^2\leq 1- \left(1-\frac{dr}{d-1}\right)^2\leq \frac{2dr}{d-1}.
\end{equation}
Hence we are only interested in upper bounding $f^2-\chi_i$, and thus lower bounding $\chi_i$ for all $i \in \mc{Z}$. First consider $i\in \mc{Z}_{\mathrm{d}}$. We proceed in much the same way as \cref{lem:unitarity_upper_bound}. We have
\begin{align}
\label{eq:diag_sec_chi_lower_bound}
\begin{split}
\chi_i 	&= \frac{\tr(\mc{P}_i\mc{E}\tn{2})}{\tr(\mc{P}_i)}\\
	 	&= \frac{1}{\tr(\mc{P}_i)} \sum_{\tau,\tau'\in\bsq} \laa \tau\tn{2}|\mc{P}_i|{\tau'}\tn{2}\raa \laa {\tau'}\tn{2}|\mc{E}|\tau\tn{2}\raa\\
	 	&= \frac{1}{\tr(\mc{P}_i)} \sum_{\tau\in\bsq}\laa \tau\tn{2}|\mc{P}_i|{\tau}\tn{2}\raa\mc{E}_{\tau,\tau}^2 + \frac{1}{\tr(\mc{P}_i)} \sum_{\substack{\tau,\tau'\in\bsq\\\tau\neq\tau'}}\laa \tau\tn{2}|\mc{P}_i|{\tau'}\tn{2}\raa\mc{E}_{\tau',\tau}^2
\end{split}
\end{align}
We begin by considering the first term in \cref{eq:diag_sec_chi_lower_bound}. Using \cref{lem:projector_lemma} we can say
\begin{align}
\frac{1}{\tr(\mc{P}_i)} \sum_{\tau\in\bsq}\laa \tau\tn{2}|\mc{P}_i|{\tau}\tn{2}\raa\mc{E}_{\tau,\tau}^2 = \frac{\tr(\mc{P}_i)}{(d^2-1)\tr(\mc{P}_i)} \sum_{\tau\in \bsq} \mc{E}_{\tau,\tau}^2 \geq f^2
\end{align}
where we have also used the lower bound from \cref{lem:diagonal_squared_channel}. Now let us consider the second term in \cref{eq:diag_sec_chi_lower_bound}. We have
\begin{align}
\begin{split}
\frac{1}{\tr(\mc{P}_i)} \sum_{\substack{\tau,\tau'\in\bsq\\\tau\neq\tau'}}\laa \tau\tn{2}|\mc{P}_i|{\tau'}\tn{2}\raa\mc{E}_{\tau',\tau}^2 &\geq - \frac{1}{\tr(\mc{P}_i)} \sum_{\substack{\tau,\tau'\in\bsq\\\tau\neq\tau'}}|\laa \tau\tn{2}|\mc{P}_i|{\tau'}\tn{2}\raa|\mc{E}_{\tau',\tau}^2\\
&\geq- \frac{1\tr(\mc{P}_i)}{(d^2-1)\tr(\mc{P}_i)} \sum_{\substack{\tau,\tau'\in\bsq\\\tau\neq\tau'}}\mc{E}_{\tau',\tau}^2\\
&= -\frac{1}{d^2-1} \sum_{\tau,\tau'\in\bsq}\mc{E}_{\tau',\tau}^2 + \frac{1}{d^2-1}\sum_{\tau\in \bsq} \mc{E}_{\tau,\tau}^2\\
&\geq-u + f^2
\end{split}
\end{align}
where we have again used \cref{lem:projector_lemma}, the lower bound from \cref{lem:diagonal_squared_channel} and the definition of unitarity. We can now see that for $i \in \mc{Z}_{\mathrm{d}}$ we have
\begin{align}
f^2- \chi_i \leq f^2 - 2f^2 + u = u- f^2 \leq 1 - \left(1-\frac{dr}{d-1}\right)^2\leq \frac{2dr}{d-1}.
\end{align}

Now consider $i \in \mc{Z}_{[S]}$ (note that we are implicitly taking $d\geq 4$ for this part of the proof, this is justified as the set $\mc{Z}_{[S]}$ is empty for $q=1$). From \cref{lem:unitarity_upper_bound} and in particular \cref{eq:chi_sym_unitarity} we get
\begin{equation}
\chi_i = \frac{1}{4} \frac{1}{\tr(\mc{P}_i)} \sum_{\tau\in \bsq}\sum_{\sigma,\sigma'\in \bf{C}_\tau} \laa S_{\sigma, \sigma\cdot\tau}|\mc{P}_i|S_{\sigma',\sigma'\cdot\tau}\raa \laa S_{\sigma',\sigma'\cdot\tau}|\mc{E}\tn{2}| S_{\sigma,\sigma\cdot\tau}\raa.
\end{equation}
We can rewrite this a little bit as follows
\begin{align}
\chi_i &= \frac{1}{4} \frac{1}{\tr(\mc{P}_i)} \sum_{\tau\in \bsq}\sum_{\sigma,\sigma'\in \bf{C}_\tau} \laa S_{\sigma, \sigma\cdot\tau}|\mc{P}_i|S_{\sigma',\sigma'\cdot\tau}\raa(\mc{E}_{\sigma',\sigma}\mc{E}_{\sigma'\cdot\tau,\sigma\cdot \tau} +\mc{E}_{\sigma',\sigma\cdot\tau}\mc{E}_{\sigma'\cdot\tau,\sigma} )\\
\begin{split}
&= \frac{1}{4} \frac{1}{\tr(\mc{P}_i)} \sum_{\tau\in \bsq}\sum_{\sigma,\sigma'\in \bf{C}_\tau} \laa S_{\sigma, \sigma\cdot\tau}|\mc{P}_i|S_{\sigma',\sigma'\cdot\tau}\raa\mc{E}_{\sigma',\sigma}\mc{E}_{\sigma'\cdot\tau,\sigma\cdot \tau}\\&\hspace{10mm} +\frac{1}{4} \frac{1}{\tr(\mc{P}_i)} \sum_{\tau\in \bsq}\sum_{\sigma,\sigma'\in \bf{C}_\tau} \laa S_{\sigma, \sigma\cdot\tau}|\mc{P}_i|S_{\sigma',\sigma'\cdot\tau}\raa \mc{E}_{\sigma',\sigma\cdot\tau}\mc{E}_{\sigma'\cdot\tau,\sigma}
\end{split}\\
\begin{split}
&=\frac{1}{4} \frac{1}{\tr(\mc{P}_i)} \sum_{\tau\in \bsq}\sum_{\sigma,\sigma'\in \bf{C}_\tau} \laa S_{\sigma, \sigma\cdot\tau}|\mc{P}_i|S_{\sigma',\sigma'\cdot\tau}\raa\mc{E}_{\sigma',\sigma}\mc{E}_{\sigma'\cdot\tau,\sigma\cdot \tau}\\&\hspace{10mm} +\frac{1}{4} \frac{1}{\tr(\mc{P}_i)} \sum_{\tau\in \bsq}\sum_{\sigma,\sigma'\in \bf{C}_\tau} \laa S_{\sigma, \sigma\cdot\tau}|\mc{P}_i|S_{\sigma'\cdot \tau,(\sigma'\cdot\tau)\cdot \tau}\raa \mc{E}_{\sigma'\cdot \tau,\sigma\cdot\tau}\mc{E}_{(\sigma'\cdot\tau)\cdot \tau,\sigma}
\end{split}\\
&=\frac{1}{2} \frac{1}{\tr(\mc{P}_i)} \sum_{\tau\in \bsq}\sum_{\sigma,\sigma'\in \bf{C}_\tau} \laa S_{\sigma, \sigma\cdot\tau}|\mc{P}_i|S_{\sigma',\sigma'\cdot\tau}\raa\mc{E}_{\sigma',\sigma}\mc{E}_{\sigma'\cdot\tau,\sigma\cdot \tau}
\end{align}
where we used that $\sigma' \in {\bf{C}_\tau} \iff\sigma'\cdot \tau \in {\bf{C}_\tau} $, that $(\sigma'\cdot \tau)\cdot \tau = \sigma'$ and that $S_{\sigma',\sigma'\cdot\tau} = S_{\sigma'\cdot \tau,\sigma'}$.
We can again separate off the `diagonal' terms to get
\begin{subequations}
\begin{align}
\chi_i &= \frac{1}{2} \frac{1}{\tr(\mc{P}_i)} \sum_{\tau\in \bsq}\sum_{\sigma\in \bf{C}_\tau} \laa S_{\sigma, \sigma\cdot\tau}|\mc{P}_i|S_{\sigma,\sigma\cdot\tau}\raa\mc{E}_{\sigma,\sigma}\mc{E}_{\sigma\cdot\tau,\sigma\cdot \tau}\label{eq:chi_sym_diag_term}\\
&\hspace{10mm}+\frac{1}{2} \frac{1}{\tr(\mc{P}_i)} \sum_{\tau\in \bsq}\sum_{\substack{\sigma,\sigma'\in \bf{C}_\tau\\\sigma\neq\sigma'}} \laa S_{\sigma, \sigma\cdot\tau}|\mc{P}_i|S_{\sigma',\sigma'\cdot\tau}\raa\mc{E}_{\sigma',\sigma}\mc{E}_{\sigma'\cdot\tau,\sigma\cdot \tau}\label{eq:chi_sym_off_diag_term}.
\end{align}
\end{subequations}
We will analyze the terms \cref{eq:chi_sym_diag_term} and \cref{eq:chi_sym_off_diag_term} separately. We begin with \cref{eq:chi_sym_diag_term}. We can use \cref{lem:projector_lemma} to get
\begin{equation}
\text{\cref{eq:chi_sym_diag_term}} = \frac{1}{(d^2-1)\left(\frac{d^2}{2}-2\right)} \sum_{\tau\in \bsq}\sum_{\sigma\in \bf{C}_\tau} \mc{E}_{\sigma,\sigma}\mc{E}_{\sigma\cdot\tau,\sigma\cdot \tau}.
\end{equation}
Now we use the generic statement $2ab = a^2 + b^2 - (a-b)^2$ for all $a,b\in \mathbb{R}$ to write
\begin{align}
\begin{split}
\text{\cref{eq:chi_sym_diag_term}}  &= \frac{1}{2}\frac{1}{(d^2-1)\left(\frac{d^2}{2}-1\right)} \sum_{\tau\in \bsq}\sum_{\sigma\in \bf{C}_\tau} \mc{E}_{\sigma,\sigma}^2 + \mc{E}_{\sigma\cdot\tau,\sigma\cdot \tau}^2\\&\hspace{15mm} - \frac{1}{2}\frac{1}{(d^2-1)\left(\frac{d^2}{2}-2\right)} \sum_{\tau\in \bsq}\sum_{\sigma\in \bf{C}_\tau}(\mc{E}_{\sigma,\sigma} -\mc{E}_{\sigma\cdot\tau,\sigma\cdot \tau})^2
\end{split}\\
\begin{split}
&= \frac{1}{(d^2\!-\!1)\left(\frac{d^2}{2}-2\right)} \sum_{\tau\in \bsq}\sum_{\sigma\in \bf{C}_\tau} \mc{E}_{\sigma,\sigma}^2\\&\hspace{15mm} - \frac{1}{2}\frac{1}{(d^2\!-\!1)\left(\frac{d^2}{2}\!-\!2\right)} \sum_{\tau\in \bsq}\sum_{\sigma\in \bf{C}_\tau}(\mc{E}_{\sigma,\sigma} -\mc{E}_{\sigma\cdot\tau,\sigma\cdot \tau})^2
\end{split}\\
\begin{split}
&= \frac{1}{(d^2-1)\left(\frac{d^2}{2}-2\right)} \sum_{\sigma\in \bsq}\sum_{\tau\in \bf{C}_\sigma} \mc{E}_{\sigma,\sigma}^2 \\&\hspace{15mm}- \frac{1}{2}\frac{1}{(d^2-1)\left(\frac{d^2}{2}-2\right)} \sum_{\tau\in \bsq}\sum_{\sigma\in \bf{C}_\tau}(\mc{E}_{\sigma,\sigma} -\mc{E}_{\sigma\cdot\tau,\sigma\cdot \tau})^2
\end{split}\\
&= \frac{1}{d^2-1} \sum_{\sigma\in \bsq} \mc{E}_{\sigma,\sigma}^2 - \frac{1}{2}\frac{1}{(d^2-1)\left(\frac{d^2}{2}-2\right)} \sum_{\tau\in \bsq}\sum_{\sigma\in \bf{C}_\tau}(\mc{E}_{\sigma,\sigma} -\mc{E}_{\sigma\cdot\tau,\sigma\cdot \tau})^2\\
&\geq f^2 - \frac{1}{2}\frac{1}{(d^2-2)\left(\frac{d^2}{2}-1\right)} \sum_{\tau\in \bsq}\sum_{\sigma\in \bf{C}_\tau}(\mc{E}_{\sigma,\sigma} -\mc{E}_{\sigma\cdot\tau,\sigma\cdot \tau})^2\label{eq:chi_sym_diag_side_term}
\end{align}
where we again used that $\sigma\in {\bf{C}_\tau}\iff \sigma\cdot \tau \in {\bf{C}_\tau}$ and that $\sigma \in{\bf{C}_\tau}\iff  \tau\in {\bf{C}_\sigma}$ and also the lower bound from \cref{lem:diagonal_squared_channel}. It remains to bound the second term in \cref{eq:chi_sym_diag_side_term}. To do this we will maximize the quantity $(\mc{W}_{\nu,\nu} -\mc{W}_{\nu\cdot\mu,\nu \cdot\mu})^2$ for $\mu\in \bsq$ and $\nu\in \bf{C}_\mu$ subject to the constraint that $\mc{W}$ is a CPTP map with depolarizing parameter $f$. That is, we will try to solve the maximization problem
 \begin{align}
\begin{aligned}
& \text{max}
& & (\mc{W}_{\nu,\nu} -\mc{W}_{\mu,\mu})^2\\
& \text{subject to}
& & \sum_{\tau \in \boldsymbol{\sigma}_q} \mc{W}_{\tau\tau} =(d^2-1)f\\
& & & \mc{W} \text{  a CPTP map.}
\end{aligned}
\end{align}
As in \cref{lem:diagonal_squared_channel} we can restrict ourselves to $\mc{W}$ being a Pauli channel (since the optimization function is a function of only the diagonal elements of $\mc{W}$). That is we can consider $\mc{W}(X) = \sum_{G\in \md{P}} p_G GXG\ct$ where $\{p_G\}_G$ is a probability distribution over the Pauli group. We can write the optimization objective as
\begin{align}
\begin{split}
(\mc{W}_{\nu,\nu} -\mc{W}_{\nu\cdot\mu,\nu\cdot\mu})^2 &= \left[ \sum_{G\in \md{P}} p_G\inp{\nu}{G\nu G\ct} - \inp{\nu\cdot\mu}{G\nu \cdot\mu G\ct}\right]^2 \\
&= \left[ \sum_{G\in \md{P}} p_G\inp{\nu}{G\nu G\ct} - \inp{\nu\cdot\mu}{(G\nu G\ct) \cdot(G \mu G\ct)}\right]^2 \\
& = \left[ \sum_{G\in \md{P}} p_G \text{sgn}(\nu,G)\big(1 -\text{sgn}(\mu,G)\big)\right]^2
\end{split}
\end{align}
where the $\text{sgn}(\nu,G)$ (as defined in \cref{eq:sign_sign_function_pauli}) encodes the commutation relations of the elements of the Pauli group. Note that the above quantity does not depended on $p_\id$ (the weight associated with the Pauli identity) since $\text{sgn}(\mu,\id)=1$ for all $\mu \in \bsq$. Hence we can solve the optimization problem
 \begin{align}\label{eq:chi_sym_opt_problem}
\begin{aligned}
& \text{max}
& & \left[ \sum_{G\in \md{P}/\{\id\}} p_G \text{sgn}(\nu,G)\big(1 -\text{sgn}(\mu,G)\big)\right]^2\\
& \text{subject to}
& & \sum_{G\in \md{P}/\{\id\}} p_G = 1 - \frac{d^2-1}{d^2}f^2 - \frac{1}{d^2}\\
& & & p_G\geq 0 \;\;\;\;\forall G\in \md{P}.
\end{aligned}
\end{align}
This problem has an easily spotted maximum in that we want to put all probability weight on a single $G\in C_\nu\cap N_\mu $ and set all other $p_G$ to zero (subject to the constraint that the overall channel must have depolarizing parameter $f$, which is encoded in the first constraint of \cref{eq:chi_sym_opt_problem} ). Hence we have
\begin{equation}
\left[ \sum_{G\in \md{P}} p_G \text{sgn}(\nu,G)\big(1 -\text{sgn}(\mu,G)\big)\right]^2 \leq \left[\frac{d^2-1}{d^2}(1-f^2)\right]^2.
\end{equation}
We can feed this back into \cref{eq:chi_sym_diag_side_term} to obtain
\begin{align}
\begin{split}
\text{\cref{eq:chi_sym_diag_term}}&\geq  f^2 - \frac{1}{2}\frac{1}{(d^2-1)\left(\frac{d^2}{2}-2\right)} \sum_{\tau\in \bsq}\sum_{\sigma\in \bf{C}_\tau}\left[\frac{d^2-1}{d^2}(1-f^2)\right]^2\\
&= f^2 - \frac{1}{2}\left[\frac{d^2-1}{d^2}(1-f^2)\right]^2.
\end{split}
\end{align}
This is a suitable lower bound on \cref{eq:chi_sym_diag_term}. Next we consider \cref{eq:chi_sym_off_diag_term}. We have
\begin{align}
\begin{split}
\text{\cref{eq:chi_sym_off_diag_term}}&\geq-\frac{1}{(d^2-1)\left(\frac{d^2}{2}-2\right)} \sum_{\tau\in \bsq}\sum_{\substack{\sigma,\sigma'\in \bf{C}_\tau\\\sigma\neq\sigma'}} |\mc{E}_{\sigma,\sigma'}\mc{E}_{\sigma\cdot\tau,\sigma'\cdot\tau}|\\
&\geq -\frac{1}{(d^2-1)\left(\frac{d^2}{2}-2\right)} \sum_{\tau\in \bsq}\sum_{\substack{\sigma,\sigma'\in \bf{C}_\tau\\\sigma\neq\sigma'}} \frac{1}{2}(\mc{E}_{\sigma,\sigma'}^2 + \mc{E}_{\sigma\cdot\tau,\sigma'\cdot\tau}^2)\\
&=-\frac{1}{(d^2-1)\left(\frac{d^2}{2}-2\right)}\sum_{\substack{\sigma,\sigma'\in \bsq\\\sigma\neq\sigma'}}\sum_{\tau\in \bf{C}_\sigma\cap\bf{C}_{\sigma'}} \mc{E}_{\sigma,\sigma'}^2\\
&=-\frac{\frac{d^4}{4} -3}{(d^2-1)\left(\frac{d^2}{2}-2\right)}\left[\sum_{\sigma,\sigma'\in \bsq}\mc{E}_{\sigma,\sigma'}^2 - \sum_{\sigma\in \bsq}\mc{E}_{\sigma,\sigma'}^2\right]\\
&\geq- \frac{\frac{d^4}{4} -3}{\frac{d^2}{2}-2}(u-f^2)
\end{split}
\end{align}
where we used an array of steps that have been used before: the triangle inequality and \cref{lem:projector_lemma} for the first inequality, the fact that $2|ab|\leq a^2 + b^2$ for all $a,b \in \mathbb{R}$ for the second inequality, the fact that $\sigma \in \bf{C}_\tau \iff \tau \in \bf{C}_\sigma$ for the third equality, the fact that $|{\bf{C}_\sigma\cap \bf{C}_{\sigma}}| = d^2/4-3$ for  $\sigma\neq \sigma'$~\cite{Clifford2016}  for the fourth equality and \cref{lem:diagonal_squared_channel} and the definition of unitarity for the last equality. This is a good lower bound on \cref{eq:chi_sym_off_diag_term}. We can now combine the lower bounds on \cref{eq:chi_sym_diag_term} and \cref{eq:chi_sym_off_diag_term} to get
\begin{equation}
\chi_i \geq f^2 - \frac{1}{2}\left[\frac{d^2-1}{d^2}(1-f^2)\right]^2  - \frac{\frac{d^4}{4} -3}{\left(\frac{d^2}{2}-2\right)}(u-f^2)
\end{equation}
for $i\in \mc{Z}_{[S]}$.
This gives a final bound (using $u\leq 1$)
\begin{equation}
f^2-\chi_i \leq f^2 - f^2 +\frac{1}{2}\left[\frac{d^2-1}{d^2}(1-f^2)\right]^2  + \frac{\frac{d^4}{4} -3}{\frac{d^2}{2}-2}(1-f^2)
\end{equation}
which we can rewrite to yield
\begin{equation}
f^2-\chi_i \leq \frac{2dr}{d-1}\left(\frac{\frac{d^4}{4} -3}{\frac{d^2}{2}-2} \left(1- \frac{1}{2}\frac{dr}{d-1}\right) + \frac{1}{2}\frac{(d^2-1)^2}{d^4} \frac{2dr}{d-1}\left(1- \frac{1}{2}\frac{dr}{d-1}\right)^2 \right)
\end{equation}
Setting $\left(1- \frac{1}{2}\frac{dr}{d-1}\right)\leq 1$ and working out we get
\begin{equation}
	f^2-\chi \leq  \frac{2d}{d-1}r
\end{equation}
for
\begin{equation}
r\leq \left(1-\frac{\frac{d^4}{4} -3}{\frac{d^2}{2}-2}\right) \frac{d^3(d-1)}{(d^2-1)^2}.
\end{equation}
This completes the proof for $i \in \mc{Z}_{[S]}$. The proof for $i \in \mc{Z}_{\{S\}}$ is conceptually the same as that of  $i \in \mc{Z}_{[S]}$ and yields the same bound so we will not write it down here. The only notable difference is the difference in size for the sets ${\bf{N}}_\tau$ and ${\bf{N}}_\tau \cap {\bf{N}}_{\tau'}  $ for $\tau,\tau'\in \bsq$ which gives a different area of validity for the bound, namely
\begin{equation}
r\leq\frac{1}{3}\leq \left(1-\frac{\frac{d^4}{4}}{\frac{d^2}{2}}\right) \frac{d^3(d-1)}{(d^2-1)^2}.
\end{equation}
 Choosing $r\leq 1/3$ satisfies both constraints for all $d$ and thus completes the proof.
\end{proof}

\subsection{Telescoping series}

\Cref{lem:telescoping series,cor:second_order_series} provide us with a powerful tool to break up the analysis of the variance of RB into manageable pieces. 
\begin{lemma}\label{lem:telescoping series}
	For two arbitrary ordered lists of $m$ elements $\{a_1,\ldots,a_m\}$ and $\{b_1,\ldots,b_m\}$ of an algebra with associative and distributed addition and multiplication we have,
	\begin{align}
	a_{m:1} - b_{m:1} = \sum_{j=1}^m a_{m:j+1}(a_j-b_j) b_{j-1:1}.
	\end{align}
	where $a_{j:k}$ with $j\geq k$ is defined with respect to the list $\{a_1,\ldots,a_m\}$ as
	\begin{equation}
	a_{j:k} = a_j a_{j+1}\cdots a_{k-1}a_k.
	\end{equation}
\end{lemma}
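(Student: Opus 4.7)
The plan is to prove this by a standard telescoping construction that interpolates between the two products $a_{m:1}$ and $b_{m:1}$ by swapping one factor at a time, from $a$ to $b$, starting at the right.

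First I would introduce the interpolating family $P_j := a_{m:j+1}\, b_{j:1}$ for $j = 0,1,\ldots,m$, with the convention that empty products ($a_{m:m+1}$ when $j=m$, and $b_{0:1}$ when $j=0$) equal the multiplicative identity of the algebra. Under this convention the boundary values collapse to what we want: $P_0 = a_{m:1}$ and $P_m = b_{m:1}$, so
\begin{equation}
a_{m:1} - b_{m:1} \;=\; P_0 - P_m \;=\; \sum_{j=1}^{m} \bigl(P_{j-1} - P_j\bigr).
\end{equation}

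Next I would compute the individual difference $P_{j-1} - P_j$. Writing out the definitions and using associativity of multiplication to split off the single factor at position $j$,
\begin{equation}
P_{j-1} - P_j \;=\; a_{m:j}\, b_{j-1:1} \;-\; a_{m:j+1}\, b_{j:1} \;=\; a_{m:j+1}\, a_j\, b_{j-1:1} \;-\; a_{m:j+1}\, b_j\, b_{j-1:1}.
\end{equation}
Applying distributivity on both sides of the middle factor, this equals $a_{m:j+1}(a_j - b_j)\, b_{j-1:1}$. Substituting into the telescoped sum above immediately yields the claimed identity.

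There is essentially no obstacle here: the identity is a purely formal consequence of associativity and distributivity (both explicitly assumed in the hypothesis), and the only bit of care required is the bookkeeping of the empty-product conventions at the two endpoints $j=1$ and $j=m$, which is handled by defining $a_{m:m+1}$ and $b_{0:1}$ to be the identity. An alternative route would be to do induction on $m$, peeling off the factor $a_m$ (or $b_1$) and applying the inductive hypothesis to a product of length $m-1$, but the telescoping presentation is cleaner and makes the structure of the identity transparent.
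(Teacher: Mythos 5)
Your proof is correct, and it takes a (mildly) different route from the paper's. The paper proves the identity by induction on $m$: the inductive step peels off the leftmost factor, writing $a_{m+1:1}-b_{m+1:1}=a_{m+1}(a_{m:1}-b_{m:1})+(a_{m+1}-b_{m+1})b_{m:1}$ and invoking the hypothesis on the length-$m$ tail. You instead exhibit the interpolating family $P_j=a_{m:j+1}\,b_{j:1}$ and telescope the sum $\sum_j(P_{j-1}-P_j)$ directly, which is really the induction unrolled: each inductive step of the paper's argument produces exactly one of your differences $P_{j-1}-P_j$. Your presentation has the advantage of making the name ``telescoping series'' transparent and of isolating the only two facts actually used (associativity to regroup the factor at position $j$, distributivity to pull out $a_j-b_j$); the cost is that you must be explicit about the empty-product conventions $a_{m:m+1}=b_{0:1}=\mathbb{1}$ at the endpoints, which you handle correctly. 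One small caveat: the algebra in the lemma is not assumed unital, so if you want to avoid invoking a multiplicative identity you should treat the boundary terms $j=1$ and $j=m$ of the sum separately (they read $a_{m:2}(a_1-b_1)$ and $(a_m-b_m)b_{m-1:1}$ with no identity needed); this is purely cosmetic and does not affect the validity of the argument.
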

\begin{proof}
	We will prove this by induction. For $m=1$ the statement is trivial. For $m+1$,
	we have
	\begin{align*}
	a_{m+1:1} - b_{m+1:1} &= a_{m+1} a_{m:1} - a_{m+1} b_{m:1} + a_{m+1} b_{m:1} - b_{m+1} b_{m:1} \notag\\
	&= a_{m+1}(a_{m:1}-b_{m:1}) + (a_{m+1} - b_{m+1})b_{m:1} \notag\\
	&= \sum_{j=1}^{m+1} a_{m:j+1}(a_j-b_j) b_{j-1:1}
	\end{align*}
	by induction hypothesis. This proves the lemma.
\end{proof}

\begin{cor}\label{cor:second_order_series}
For $a,b,c \in \mathbb{C}$ with $c\geq a$, we have
\begin{align*}
a^m - b^m  &= mb^{m-1}(a-b) + (a-b)^2 a^{m-2}\frac{(m-1)(b/a)^{m} - m(b/a)^{m-1} + 1}{(1-(b/a))^2} \\
&\leq mb^{m-1}(a-b) + (a-b)^2
\frac{(m-1)b^{m} - m c b^{m-1} + c^m}{(c-b)^2}
\end{align*}
\end{cor}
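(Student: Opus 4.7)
The plan is to obtain the equality by applying the telescoping identity of Lemma~\ref{lem:telescoping series} twice, and then prove the inequality by rewriting both sides as manifestly monotone polynomials in the parameter that is being enlarged from $a$ to $c$. (I will assume, in line with how the corollary is used in the main text, that $a,b,c$ are nonnegative reals with $a\geq b$ and $c\geq a$; the corollary as stated in $\mathbb{C}$ only makes sense under such a reading, since the inequality has no meaning otherwise.)

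First, setting $a_j=a$ and $b_j=b$ for all $j$ in Lemma~\ref{lem:telescoping series} yields
\begin{equation*}
a^m-b^m=(a-b)\sum_{j=1}^{m}a^{m-j}b^{j-1}.
\end{equation*}
Next I would split each $a^{m-j}$ as $b^{m-j}+(a^{m-j}-b^{m-j})$, pulling out the ``first order'' contribution $mb^{m-1}(a-b)$ and leaving a remainder $\sum_{j=1}^{m-1}(a-b)(a^{m-j}-b^{m-j})b^{j-1}$. Applying the telescoping identity once more, this time to the inner difference $a^{m-j}-b^{m-j}$, produces a double sum $(a-b)^2\sum_{j,k}a^{m-j-k}b^{k+j-2}$. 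Re-indexing by the single variable $t=k+j-1$ collects the inner count as $t-1$, giving the compact form
\begin{equation*}
a^m-b^m=mb^{m-1}(a-b)+(a-b)^2\sum_{k=1}^{m-1}k\, a^{m-k-1}b^{k-1}.
\end{equation*}

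To reach the stated closed form, I would factor $a^{m-2}$ out of the double-telescoped sum and recognize the standard series identity $\sum_{k=1}^{m-1}k x^{k-1}=\frac{(m-1)x^m-mx^{m-1}+1}{(1-x)^2}$ (which is just the derivative of the truncated geometric series) with $x=b/a$. This yields exactly the equality displayed in the corollary.

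For the upper bound, I would avoid manipulating the ratio form altogether and instead undo the closed-form step on the right-hand side: multiplying through by $(c-b)^2$ and simplifying shows that
\begin{equation*}
\frac{(m-1)b^m-mcb^{m-1}+c^m}{(c-b)^2}=\sum_{k=1}^{m-1}k\, c^{m-k-1}b^{k-1},
\end{equation*}
by the same double-telescoping computation applied with $c$ in place of $a$. The inequality to prove then reduces to the term-by-term comparison $\sum_{k=1}^{m-1}k\,a^{m-k-1}b^{k-1}\leq \sum_{k=1}^{m-1}k\,c^{m-k-1}b^{k-1}$, which is immediate from $c\geq a\geq 0$ and $b\geq 0$. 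The only mild subtlety is recognizing that the quantity on the right of the claimed inequality is cleaner to analyze in its expanded polynomial form than in the ratio form, so the main conceptual step is to reinterpret both sides via the same telescoping expression before attempting any bounding.
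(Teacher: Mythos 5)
Your proposal is correct and follows essentially the same route as the paper: two applications of the telescoping identity of \cref{lem:telescoping series}, re-indexing to the single sum $\sum_{s}(s-1)a^{m-s}b^{s-2}$, the derivative-of-geometric-series identity for the closed form, and a term-by-term replacement of $a$ by $c$ for the inequality (the paper compresses this last step into one sentence, and you rightly note it requires reading $a,b,c$ as nonnegative reals despite the statement's nominal $\mathbb{C}$).
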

\begin{proof}
Note first that the statement is trivial if $a=b$. Therefore assume $a\neq b$.
We begin by applying \cref{lem:telescoping series} to $a^m-b^m$. This gives

\begin{align}
a^m - b^m &= \sum_{j=1}^m a^{m-j}(a-b) b^{j-1}.
\end{align}
We now perform the following manipulation
\begin{align}
\begin{split}
a^m - b^m &=\sum_{j=1}^m a^{m-j}(a-b) b^{j-1}\\
& =\sum_{j=1}^m (a^{m-j}-b^{m-j} + b^{m-j})(a-b) b^{j-1} \\
& = (a-b)\sum_{j=1}^m b^{m-j+j-1} + \sum_{j=1}^m (a^{m-j}-b^{m-j})(a-b) b^{j-1} \\
&= mb^{m-1}(a-b) + \sum_{j=1}^m (a^{m-j}-b^{m-j})(a-b) b^{j-1}.
\end{split}
\end{align}
Note that be have used the fact that $a,b\in \mathbb{C}$ are commutative. Now we can apply \cref{lem:telescoping series} again to the factors $(a^{m-j}-b^{m-j})$ in the second term in the above to obtain
\begin{align}
\begin{split}
a^m - b^m &= mb^{m-1}(a-b) +\sum_{j=1}^m \sum_{t=1}^{m-j}a^{m-j-t}(a-b) b^{j-t-1}(a-b)b^{j-1}\\
&= mb^{m-1}(a-b) + (a-b)^2 \sum_{j=1}^m \sum_{t=1}^{m-j}a^{m-(j+t)}b^{j+t-2}.
\end{split}
\end{align}
Performing the substitution $s= j+t$ and working out we obtain
\begin{align}
\begin{split}
a^m + b^m  &= mb^{m-1}(a-b) + (a-b)^2 \sum_{j=1}^m \sum_{t=1}^{m-j}a^{m-(j+t)}b^{j+t-2}\\
&= mb^{m-1}(a-b) + (a-b)^2 \sum_{j=1}^m \sum_{s=j+1}^m a^{m-s}b^{s-2}\\
&= mb^{m-1}(a-b) + (a-b)^2 \sum_{s=2}^m \sum_{j=1}^{s-1} a^{m-s}b^{s-2}\\
&= mb^{m-1}(a-b) + (a-b)^2 \sum_{s=2}^m (s-1) a^{m-s}b^{s-2}
\end{split}
\end{align}
Now we can factor out $a^{m-1}$ from the second term to obtain
\begin{align}
a^m + b^m  &= mb^{m-1}(a-b) + (a-b)^2 a^{m-2} \sum_{s=2}^m (s-1) (b/a)^{s-2}.
\end{align}
We can further rewrite this using the standard series identity
\begin{equation}
\sum_{k=1}^{m}(k-2)x^{k-2} = \frac{(m-1)x^{m} - mx^{m-1} + 1}{(1-x)^2}.
\end{equation}
The upper bound follows by upper bounding each term in the sum.
\end{proof}

\end{document}